\newcommand{\List}{\mathcal{L}_\mathcal{H}}
\newcommand{\Floor}[1]{{\left\lfloor #1 \right\rfloor}}
\newcommand{\bandw}[1]{\textsf{bw}(#1)}
\newcommand{\Econf}[1]{I^{\rm E}_{#1}}
\newcommand{\Vconf}[1]{I^{\rm V}_{#1}}
\newcommand{\tw}{{\rm tw}}
\newcommand{\pw}{{\rm pw}}
\newcommand{\bw}{{\rm bw}}
\newcommand{\opt}{{\rm opt}}
\newcommand{\intsec}[1]{[#1]}
\newcommand{\intsecz}[1]{[#1] \cup \{0\}}
\newcommand{\landnum}[2]{\lambda(#1,#2)}
\newcommand{\sericom}{\bullet}
\newcommand{\paracom}{\parallel}
\newcommand{\seribool}[2]{b_{#1}^{\sericom}(#2)}
\newcommand{\parabool}[2]{b_{#1}^{\paracom}(#2)}
\newcommand{\seridp}[1]{f^{\sericom}(#1)}
\newcommand{\paradp}[1]{f^{\paracom}(#1)}
\newcommand{\bicon}{\beta}
\newcommand{\incedge}{E}
\newtheorem{theorem}{Theorem}
\newtheorem{lemma}{Lemma}
\newtheorem{proposition}{Proposition}
\newtheorem{corollary}{Corollary}
\title{On the complexity of list $\mathcal H$-packing for sparse graph classes}
\author{
Tatsuya Gima\thanks{Graduate School of Informatics, Nagoya University, Nagoya, Japan and JSPS Research Fellow.
\texttt{gima@nagoya-u.jp}}\and 
Tesshu Hanaka\thanks{Department of Informatics, Kyushu University, Fukuoka, Japan.
\texttt{hanaka@inf.kyushu-u.ac.jp}} \and 
Yasuaki Kobayashi\thanks{Graduate School of Information Science and Technology, Hokkaido University, Hokkaido, Japan.
\texttt{koba@ist.hokudai.ac.jp}} \and 
Yota Otachi\thanks{Graduate School of Informatics, Nagoya University, Nagoya, Japan.
\texttt{otachi@nagoya-u.jp}} \and 
Tomohito Shirai\thanks{Graduate School of Information Sciences, Tohoku University, Sendai, Japan.
\texttt{\{akira,tamura,zhou\}@tohoku.ac.jp}} \and 
Akira Suzuki\footnotemark[5] \and 
Yuma Tamura\footnotemark[5] \and 
Xiao Zhou\footnotemark[5]}
\begin{document}

\maketitle

\begin{abstract}
The problem of packing as many subgraphs isomorphic to $H \in \mathcal H$ as possible in a graph for a class $\mathcal H$ of graphs is well studied in the literature.
Both vertex-disjoint and edge-disjoint versions are known to be NP-complete for $H$ that contains at least three vertices and at least three edges, respectively.
In this paper, we consider ``list variants'' of these problems: Given a graph $G$, an integer $k$, and a collection $\mathcal L_{\mathcal H}$ of subgraphs of $G$ isomorphic to some $H \in \mathcal H$, the goal is to compute $k$ subgraphs in $\mathcal L_{\mathcal H}$ that are pairwise vertex- or edge-disjoint.
We show several positive and negative results, focusing on classes of sparse graphs, such as bounded-degree graphs, planar graphs, and bounded-treewidth graphs.
\end{abstract}

\section{Introduction}

Packing as many graphs as possible into another graph is a fundamental problem in the field of graph algorithms.
To be precise, for a fixed graph $H$, given an undirected graph $G$ and a non-negative integer $k$, the \textsc{Vertex Disjoint $H$-Packing} problem (resp.\ the \textsc{Edge Disjoint $H$-Packing} problem) asks for finding a collection $\mathcal{S}$ of $k$ vertex-disjoint (resp. edge-disjoint) subgraphs of $G$ that are isomorphic to $H$.
For a connected graph $H$, both problems are polynomially solvable if $H$ has at most two vertices (resp.\ at most two edges) because they can be reduced to the \textsc{Maximum Matching} problem, whereas the problems are shown to be NP-complete if $H$ has at least three vertices (resp.\ at least three edges)~\cite{CorneilMH94:DAM:Edge-disjoint,KirkpatrickH83:SICOMP:complexity}.
Furthermore, both problems are naturally extended to \textsc{Vertex Disjoint $\mathcal{H}$-Packing} and \textsc{Edge Disjoint $\mathcal{H}$-Packing}~\cite{ChungG81,KirkpatrickH83:SICOMP:complexity}, which respectively ask for finding a collection $\mathcal{S}$ of $k$ vertex-disjoint and edge-disjoint subgraphs of $G$ that are isomorphic to some graph in a (possibly infinite) fixed collection $\mathcal{H}$ of graphs.
These problems are also well studied in specific cases of $\mathcal{H}$.
In particular, when $\mathcal{H}$ is paths or cycles, it has received much attention in the literature because of the variety of possible applications~\cite{BafnaP93,BermanJLSS90,BontridderHHHLRS03,KosowskiMZ05,MalafiejskiZ05}.
In both cases, \textsc{Vertex Disjoint $\mathcal{H}$-Packing} and \textsc{Edge Disjoint $\mathcal{H}$-Packing} remain NP-complete for planar graphs~\cite{BermanJLSS90,DyerF85}.

Recently, Xu and Zhang proposed a new variant of \textsc{Edge Disjoint $\mathcal{H}$-Packing}, which they call \textsc{Path Set Packing}, from the perspective of network design~\cite{XuZ18}.
In the \textsc{Path Set Packing} problem, given an undirected graph $G$, a non-negative integer $k$, and a collection $\mathcal{L}$ of simple paths in $G$, we are required to find a subcollection $\mathcal{S} \subseteq \mathcal{L}$ of (at least) $k$ paths that are mutually edge-disjoint. 
Notice that $\mathcal{L}$ may not be exhaustive: Some paths in $G$ may not appear in $\mathcal L$.
If $\mathcal H$ consists of a finite number of paths, \textsc{Edge Disjoint $\mathcal{H}$-Packing} can be (polynomially) reduced to \textsc{Path Set Packing} because $\mathcal H$ is fixed and hence all paths in $G$ isomorphic to some graph in $\mathcal{H}$ can be enumerated in polynomial time.
Xu and Zhang showed that for a graph $G$ with $n$ vertices and $m$ edges, the optimization variant of \textsc{Path Set Packing} is hard to approximate within a factor $O(m^{1/2-\epsilon})$ for any constant $\epsilon > 0$ unless $\text{NP $=$ ZPP}$, while the problem is solvable in $O(|\mathcal{L}|n^2)$ time if $G$ is a tree and in $O(|\mathcal{L}|^{\tw \Delta} n)$ time if $G$ has treewidth $\tw$ and maximum degree $\Delta$~\cite{XuZ18}.
Very recently, Aravind and Saxena investigated the parameterized complexity of \textsc{Path Set Packing} for various parameters.
For instance, \textsc{Path Set Packing} is W[1]-hard even when parameterized by pathwidth plus maximum degree plus solution size~\cite{AravindS23}.
To the best of our knowledge, except for \textsc{Path Set Packing}, such a variant has not been studied for \textsc{Edge Disjoint $\mathcal{H}$-Packing}, nor \textsc{Vertex Disjoint $\mathcal{H}$-Packing}.

\paragraph{Our contributions.}
In this paper, motivated by \textsc{Path Set Packing}, we introduce \emph{list variants} of \textsc{Vertex Disjoint $\mathcal{H}$-Packing} and \textsc{Edge Disjoint $\mathcal{H}$-Packing}.
In the \textsc{Vertex Disjoint List $\mathcal{H}$-Packing} (resp.\ \textsc{Edge Disjoint List $\mathcal{H}$-Packing}) problem, we are given a graph $G$, a non-negative integer $k$, and a collection (list) $\List$ of subgraphs of $G$ such that each subgraph in $\List$ is isomorphic to some graph in $\mathcal{H}$. 
The problem asks whether there exists a subcollection $\mathcal{S} \subseteq \List$ such that $|\mathcal{S}| \ge k$ and subgraphs of $G$ in $\mathcal{S}$ are vertex-disjoint (resp.\ edge-disjoint). 
If $\List$ contains all subgraphs of $G$ isomorphic to some graph in $\mathcal{H}$, the problem is equivalent to \textsc{Vertex Disjoint $\mathcal{H}$-Packing} (resp.\ \textsc{Edge Disjoint $\mathcal{H}$-Packing}).
Thus, the tractability of the list variants implies that of the original problems.
If $\mathcal{H} = \{ H \}$ for a fixed graph $H$, then we call these problems simply \textsc{Vertex (Edge) Disjoint List $H$-Packing} and \textsc{Vertex (Edge) Disjoint $H$-Packing}, respectively.
For a positive integer $\ell$, we denote by $P_\ell$ and $C_\ell$ the path and the cycle of $\ell$ vertices, respectively.
(We assume $\ell \ge 3$ for $C_\ell$.)
When $\mathcal{P} = \{P_\ell : \ell \ge 1 \}$, \textsc{Edge Disjoint List $\mathcal{P}$-Packing} is equivalent to \textsc{Path Set Packing}.
Therefore, \textsc{Edge Disjoint List $\mathcal{H}$-Packing} generalizes both \textsc{Edge Disjoint $H$-Packing} and \textsc{Path Set Packing}.

We first give sufficient conditions to solve \textsc{Vertex (Edge) Disjoint List $H$-Packing} on bounded degree graphs in polynomial time.
These conditions directly indicate the polynomial-time solvability of \textsc{Vertex (Edge) Disjoint List $C_\ell$-Packing} on graphs of maximum degree $3$ for $\ell \in \{ 3,4,5 \}$.
It is worth noting that \textsc{Vertex Disjoint $P_3$-Packing} remains NP-complete even for 2-connected bipartite planar cubic graphs~\cite{MalafiejskiZ05}.
In contrast, we show that \textsc{Vertex (Edge) Disjoint $C_\ell$-Packing} on planar graphs of maximum degree $3$ is NP-complete for any $\ell \ge 6$.
As \textsc{Vertex (Edge) Disjoint $C_\ell$-Packing} can be represented by its list variant, this result also indicates the hardness of \textsc{Vertex (Edge) Disjoint List $C_\ell$-Packing}.
We also give the NP-completeness of \textsc{Vertex (Edge) Disjoint $C_\ell$-Packing} on planar graphs of maximum degree $4$ for any $\ell \ge 4$.
Therefore, we provide the complexity dichotomy of \textsc{Vertex (Edge) Disjoint $C_\ell$-Packing} with respect to the maximum degree of a given graph and $\ell$, as summarized in \tableautorefname~\ref{tbl:C_l}.

\begin{table}[t]
    \caption{The complexity of \textsc{Vertex Disjoint $C_{\ell}$-Packing} (left) and \textsc{Edge Disjoint $C_{\ell}$-Packing} (right).}\label{tbl:C_l}
    
    \begin{minipage}[t]{.5\textwidth}
    \centering
    \begin{tabular}{|c|c|c|}
    \hline
          & $\Delta(G) = 3$    & $\Delta(G) \ge 4$    \\ \hline
    $\ell = 3$ & P \cite{CapraraR02:IPL:Packing} & NPC \cite{CapraraR02:IPL:Packing}  \\ \hline
    $\ell = 4$ & \multirow{2}{*}{P [Thm \ref{thm:vertex-edge:C_4:deg3}]} & \multirow{2}{*}{NPC [Thm \ref{thm:vertex:C_4:deg4}]} \\ \cline{1-1}
    $\ell = 5$ & & \\ \cline{1-3}
    $\ell \ge 6$ & \multicolumn{2}{c|}{NPC [Thm \ref{thm:vertex-edge:C_6:deg3}]} \\ \hline
    \end{tabular}
    \end{minipage}
    \hfill
    \begin{minipage}[t]{.5\textwidth}
    \begin{tabular}{|c|c|c|c|}
    \hline
     & $\Delta(G) = 3$ & \multicolumn{1}{l|}{$\Delta(G) = 4$} & $\Delta(G) \ge 5$ \\ \hline
     $\ell = 3$ & \multicolumn{2}{c|}{P \cite{CapraraR02:IPL:Packing}} & NPC \cite{CapraraR02:IPL:Packing} \\ \hline
     $\ell = 4$ &\multirow{2}{*}{P [Thm \ref{thm:vertex-edge:C_4:deg3}]}&
     \multicolumn{2}{c|}{\multirow{2}{*}{NPC [Thm \ref{thm:edge:C_4:deg4}]}}\\ \cline{1-1}
     $\ell = 5$ & & \multicolumn{2}{l|}{} \\ \cline{1-4}
     $\ell \ge 6$ & \multicolumn{3}{c|}{NPC [Thm \ref{thm:vertex-edge:C_6:deg3}]} \\ \hline
    \end{tabular}
    \end{minipage}
\end{table}

Second, we design a polynomial-time algorithm for \textsc{Vertex Disjoint List $\mathcal{H}$-Packing} on bounded-treewidth graphs, provided that all graphs in $\mathcal{H}$ are connected.
This implies that \textsc{Vertex Disjoint List $\mathcal{H}$-Packing} belongs to XP parameterized by treewidth.
Note that the connectivity condition on $\mathcal H$ is essential, because otherwise one can see that the problem is NP-complete even on forests (see~\Cref{thm:H-packing-exception}).
On the other hand, we show that \textsc{Vertex Disjoint List $\mathcal{P}$-Packing} and \textsc{Vertex Disjoint List $\mathcal{C}$-Packing} parameterized by pathwidth plus $k$ are W[1]-hard, where $\mathcal{P} = \{P_\ell : \ell \ge 1 \}$ and $\mathcal{C} = \{C_\ell : \ell \ge 3\}$.
This result implies that there is probably no FPT algorithm for the problems parameterized by treewidth.
One might think that XP algorithms parameterized by treewidth could also be designed for the edge-disjoint versions.
We give a negative answer.
We show that \textsc{Edge Disjoint List $\mathcal{P}$-Packing} and \textsc{Edge Disjoint List $\mathcal{C}$-Packing} parameterized by bandwidth plus $k$ are W[1]-hard even for outerplanar and two-terminal series-parallel graphs, which have treewidth at most~$2$.
In particular, the W[1]-hardness for \textsc{Edge Disjoint List $\mathcal{P}$-Packing}, which is equivalent to \textsc{Path Set Packing}, strengthens the result of~\cite{AravindS23}.

The above hardness results prompt us to further investigate the complexity of \textsc{Edge Disjoint List $P_\ell$-Packing} and \textsc{Edge Disjoint List $C_\ell$-Packing} on bounded-treewidth graphs.
In this paper, we focus on series-parallel graphs, also known as graphs of treewidth at most~$2$.
We show that \textsc{Edge Disjoint List $P_4$-Packing} and \textsc{Edge Disjoint List $C_5$-Packing} remain NP-complete even for series-parallel graphs.
Since \textsc{Edge Disjoint List $P_3$-Packing} is solvable in polynomial time for general graphs by reducing to \textsc{Maximum Matching}, the former implies the complexity dichotomy of \textsc{Edge Disjoint List $P_\ell$-Packing} on series-parallel graphs with respect to $\ell$.
The remaining task is to settle the complexity of \textsc{Edge Disjoint List $C_\ell$-Packing} on series-parallel graphs for $\ell \le 4$.
We finally provide an algorithm that, given an $n$-vertex series-parallel graph and a collection $\List$ of cycles with length $\ell \le 4$, solves \textsc{Edge Disjoint List $C_\ell$-Packing} in $O(|\List| + n^{2.5})$ time.


	




\paragraph{Related work.}
The packing problems have a long history.
\textsc{Vertex Disjoint $\mathcal{C}$-Packing} (also known simply as \textsc{Cycle Packing}) has been studied in numerous papers after the well-known Erd\H{o}s-P\'{o}sa theorem was given~\cite{ErdosP65}.
In particular, \textsc{Cycle Packing} has received a lot of attention in the field of parameterized complexity.
\textsc{Cycle Packing} is fixed-parameter tractable when parameterized by a solution size $k$~\cite{Bodlaender94,LokshtanovMSZ19} or by treewidth~\cite{CyganFKLMPPS15}.
The complexity of restricted variants of \textsc{Cycle Packing} has also been explored. 
For any $\ell \ge 3$, \textsc{Vertex Disjoint $C_\ell$-Packing} on planar graphs is NP-complete~\cite{BermanJLSS90}.
In addition, \textsc{Vertex Disjoint $C_3$-Packing} is NP-complete for planar graphs of maximum degree $4$~\cite{CapraraR02:IPL:Packing}, chordal graphs, line graphs, and total graphs~\cite{GuruswamiRCCW98}, while the problem is solvable in polynomial time for graphs of maximum degree $3$~\cite{CapraraR02:IPL:Packing}, split graphs, and cographs~\cite{GuruswamiRCCW98}.

In the edge-disjoint variant of \textsc{Cycle Packing}, for any $\ell \ge 3$, \textsc{Edge Disjoint $C_\ell$-Packing} is NP-complete for planar graphs~\cite{HeathV98} and balanced $2$-interval graphs~\cite{JiangXZ16}.
Furthermore, for any even $\ell \ge 4$, \textsc{Edge Disjoint $C_\ell$-Packing} on bipartite graphs and balanced $2$-track interval graphs is NP-complete~\cite{JiangXZ16}.
\textsc{Edge Disjoint $C_3$-Packing} is solvable for graphs of maximum degree $4$~\cite{CapraraR02:IPL:Packing} and outerplanar graphs~\cite{HeathV98}, while the problem is NP-complete for planar graphs of maximum degree $5$~\cite{CapraraR02:IPL:Packing}.

As for problems of packing disjoint paths, \textsc{Vertex Disjoint $P_\ell$-Packing} for any $\ell \ge 3$ is NP-complete for graphs of maximum degree $3$ and \textsc{Edge Disjoint $P_\ell$-Packing} for any $\ell \ge 4$ is NP-complete for graphs of maximum degree $4$~\cite{MasuyamaI91}.
Moreover, \textsc{Vertex Disjoint $P_3$-Packing} remains NP-complete even for 2-connected bipartite planar cubic graphs~\cite{MalafiejskiZ05}.
On the other hand, for any positive integer $\ell$, \textsc{Vertex Disjoint $P_\ell$-Packing} is solvable in polynomial time for trees~\cite{MasuyamaI91}.
\textsc{Vertex Disjoint $P_2$-Packing} is equivalent to \textsc{Maximum Matching}, which is solvable in polynomial time~\cite{HopcroftT73}.
\textsc{Vertex Disjoint $P_3$-Packing} is polynomially solvable for certain cases~\cite{AkiyamaC90,KanekoKN01}, and the problem is in FPT and has a linear kernel when parameterized by a solution size~\cite{FernauR09,PrietoS06,WangNFC08}.

Compared to the other packing problems, as far as we know, not much is known about tractable cases of \textsc{Edge Disjoint $P_\ell$-Packing}.
\textsc{Edge Disjoint $P_\ell$-Packing} for any positive integer $\ell$ is solvable in polynomial time for trees~\cite{MasuyamaI91} and \textsc{Edge Disjoint $P_3$-Packing} is solvable in linear time for general graphs~\cite{MasuyamaI91}.

\textsc{Vertex (Edge) Disjoint $\mathcal{H}$-Packing} is closely related to the \textsc{Set Packing} problem.
Given a universe $U$, a non-negative integer $k$, and a collection $\mathcal{L}$ of subsets of $U$, \textsc{Set Packing} asks whether there exists a subcollection $\mathcal{S} \subseteq \mathcal{L}$ such that $|\mathcal{S}| \ge k$ and $\mathcal{S}$ are mutually disjoint.
If $\mathcal{H}$ contains all (possibly disconnected) graphs, then these problems are equivalent because it is possible to construct a graph whose vertices (edges) correspond to 
 elements of $U$.
If $\mathcal{H}$ contains restricted graphs, then \textsc{Vertex (Edge) Disjoint $\mathcal{H}$-Packing} are special cases of \textsc{Set Packing}.
Therefore, all tractable results for \textsc{Set Packing} are applicable to our problems.
This paper further seeks tractable cases of \textsc{Set Packing} from the viewpoint of graph structure.



\section{Preliminaries}
For a positive integer $i$, we denote $\intsec{i} = \{1,2,\ldots, i\}$.

Let $G$ be a graph.
Throughout this paper, we assume that $G$ is simple, that is, it has neither self-loops nor parallel edges.
The sets of vertices and edges of $G$ are denoted by $V(G)$ and $E(G)$, respectively.
For $v \in V(G)$, we denote by $N_G(v)$ the set of neighbor of $v$, by $\incedge_G(v)$ the set of incident edges of $v$, and by $d_G(v)$ the degree of $v$ in $G$.
The maximum degree of a vertex in $G$ is denoted by $\Delta(G)$ and the minimum degree of a vertex in $G$ is denoted by $\delta(G)$.
We may simply write $uv$ to denote an edge $\{u, v\}$.
For a positive integer $t$, we denote by $tG$ the disjoint union of $t$ copies of $G$.
For a graph $H$, the \emph{$H$-vertex-conflict graph} of $G$, denoted $\Vconf{H}(G)$, is defined as follows.
Each vertex of $\Vconf{H}(G)$ corresponds to a subgraph isomorphic to $H$ in $G$.
Two vertices of $\Vconf{H}(G)$ are adjacent if and only if the corresponding subgraphs in $G$ share a vertex.
The \emph{$H$-edge-conflict graph} of $G$, denoted $\Econf{H}(G)$, is defined by replacing the adjacency condition in the definition of $\Vconf{H}(G)$ as: two vertices of $\Econf{H}(G)$ are adjacent if and only if they share an edge in $G$.

A \emph{claw} is a star graph with three leaves.
A graph is said to be \emph{claw-free} if it has no claw as an induced subgraph.
Minty~\cite{Minty80:JCTB:maximal} and Sbihi~\cite{Sbihi80:DM:Algorithme} showed that the maximum independent set problem can be solved in polynomial time on claw-free graphs.
This immediately implies the following proposition, which is a key to our polynomial-time algorithms.

\begin{proposition}\label{prop:claw-free}
    If $\Vconf{H}(G)$ is claw-free, then \textsc{Vertex Disjoint List $H$-Packing} can be solved in $n^{O(|V(H)|)}$ time.
    Moreover, if $\Econf{H}(G)$ is claw-free, then \textsc{Edge Disjoint List $H$-Packing} can be solved in $n^{O(|V(H)|)}$ time as well.
\end{proposition}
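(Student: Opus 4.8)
The plan is to reduce both problems to \textsc{Maximum Independent Set} on the relevant conflict graph and then invoke the polynomial-time algorithm of Minty and Sbihi. First I would observe that, for \textsc{Vertex Disjoint List $H$-Packing}, a subcollection $\mathcal S \subseteq \List$ consists of pairwise vertex-disjoint subgraphs if and only if the vertices of $\Vconf{H}(G)$ corresponding to the members of $\mathcal S$ form an independent set: by definition two copies of $H$ are adjacent in $\Vconf{H}(G)$ exactly when they share a vertex. Hence a solution of size $k$ exists precisely when the induced subgraph $\Vconf{H}(G)[\List]$ (the subgraph of the conflict graph induced by the vertices associated with the listed copies of $H$) has an independent set of size at least $k$.

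The key structural point is that claw-freeness is hereditary: every induced subgraph of a claw-free graph is again claw-free. Therefore, assuming $\Vconf{H}(G)$ is claw-free, the induced subgraph $\Vconf{H}(G)[\List]$ is also claw-free, so I may run the polynomial-time maximum independent set algorithm of Minty and Sbihi on it and compare the size of the returned set with $k$. Note that one never needs to build the full conflict graph on all copies of $H$; it suffices to construct $\Vconf{H}(G)[\List]$ from the given list, and its claw-freeness is inherited from the hypothesis on $\Vconf{H}(G)$.

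It remains to account for the running time. Since $H$ is fixed, every subgraph of $G$ isomorphic to $H$ is determined by the image of an injection from $V(H)$ into $V(G)$, so there are at most $n^{|V(H)|}$ such subgraphs; in particular $|\List| \le n^{|V(H)|}$, and $\Vconf{H}(G)[\List]$ has at most that many vertices. Constructing this graph and testing adjacency between two listed copies (i.e.\ whether they share a vertex) is polynomial in its size, and the Minty--Sbihi algorithm runs in time polynomial in the number of vertices. Consequently the whole procedure runs in $n^{O(|V(H)|)}$ time.

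The edge-disjoint case is entirely analogous: two copies of $H$ are edge-disjoint exactly when they are non-adjacent in $\Econf{H}(G)$, so \textsc{Edge Disjoint List $H$-Packing} reduces to \textsc{Maximum Independent Set} on $\Econf{H}(G)[\List]$, which is claw-free by assumption, and the same bound on $|\List|$ yields the $n^{O(|V(H)|)}$ running time. Since the whole argument is a direct reduction, there is no genuine obstacle; the only two points that require care are the hereditary property of claw-freeness, which legitimizes passing from the full conflict graph to the one induced by $\List$, and the polynomial bound on the number of copies of the fixed graph $H$, which controls the running time.
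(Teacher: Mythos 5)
Your proposal is correct and is exactly the argument the paper intends: the paper states the proposition as an immediate consequence of the Minty--Sbihi polynomial-time algorithm for maximum independent set on claw-free graphs, and you have simply spelled out the reduction, the hereditary nature of claw-freeness (needed to pass to the subgraph induced by $\List$), and the $n^{|V(H)|}$ bound on the number of copies of $H$ that controls the running time.
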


A \emph{tree decomposition} of $G$ is a pair $\mathcal T = (T, \{X_i : i \in V(T)\})$ of a tree $T$ and vertex sets $\{X_i : i \in V(T)\}$, called \emph{bags}, that satisfies the following conditions: $\bigcup_{i \in V(T)} X_i = V(G)$; for $\{u, v\} \in E(G)$, there is $i \in V(T)$ such that $\{u, v\} \subseteq X_i$; and for $v \in V(T)$, the bags containing $v$ induces a subtree of $T$.
The \emph{width} of $\mathcal T$ is defined as $\max_{i \in V(T)}|X_i| - 1$, and the \emph{treewidth} of $G$ is the minimum integer $k$ such that $G$ has a tree decomposition of width $k$.
A \emph{path decomposition} of $G$ is a tree decomposition of $G$, where $T$ forms a path.
The \emph{pathwidth} of $G$ is defined analogously.
Let $\pi \colon V(G) \to [|V(G)|]$ be a bijection, which we call a \emph{linear layout} of $G$.
The width of a linear layout $\pi$ of $G$ is defined as $\max_{\{u, v\} \in E(G)}|\pi(u) - \pi(v)|$.
The \emph{bandwidth} of $G$ is the minimum integer $k$ such that $G$ has a linear layout of width $k$.

For the treewidth, pathwidth, and bandwidth of $G$, we denote them by $\tw(G)$, $\pw(G)$, and $\bw(G)$.
We may simply write them as $\tw$, $\pw$, and $\bw$ without specific reference to $G$.
It is well known that $\tw(G) \le \pw(G) \le \bw(G)$ for every graph $G$~\cite{Bodlaender98}.

\section{List \texorpdfstring{$C_\ell$}{}-packing on bounded degree graphs}\label{sec:bnddeg}
In this section, we focus on \textsc{Vertex Disjoint List $C_\ell$-Packing} and \textsc{Edge Disjoint List $C_\ell$-Packing} on bounded degree graphs.
We give sufficient conditions to solve \textsc{Vertex (Edge) Disjoint $H$-Packing} in polynomial time, which extend polynomial-time algorithms for $\ell = 3$~\cite{CapraraR02:IPL:Packing} to more general cases.
We also show that these conditions are ``tight'' in some sense.

\begin{theorem}\label{thm:vertex:H-packing}
    \textsc{Vertex Disjoint List $H$-Packing} can be solved in polynomial time if the following inequality holds:
    \begin{align*}
        \Delta(G) \le 2\delta(H) - \Floor{\frac{|V(H)|}{3}}.
    \end{align*}
\end{theorem}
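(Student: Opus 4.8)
The plan is to reduce everything to \Cref{prop:claw-free} by showing that the stated inequality forces the vertex-conflict graph $\Vconf{H}(G)$ to be claw-free; the proposition then immediately yields a polynomial-time algorithm. Thus the entire task is to prove that, whenever $\Delta(G) \le 2\delta(H) - \Floor{|V(H)|/3}$, the graph $\Vconf{H}(G)$ contains no induced claw.

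Suppose for contradiction that $\Vconf{H}(G)$ contains a claw, i.e., there is a copy $A$ of $H$ playing the role of the center, together with three copies $B_1, B_2, B_3$ of $H$ that each share a vertex with $A$ but are pairwise vertex-disjoint (pairwise non-adjacency in the conflict graph is exactly pairwise vertex-disjointness in $G$). For each $i$, set $S_i = V(A) \cap V(B_i)$. By construction $S_i \neq \emptyset$, and since $B_1, B_2, B_3$ are pairwise vertex-disjoint, the sets $S_1, S_2, S_3$ are pairwise disjoint subsets of $V(A)$, so $\sum_{i=1}^3 |S_i| \le |V(A)| = |V(H)|$.

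The heart of the argument is a local degree count at a shared vertex. Fix any $i$ and any $w \in S_i$. Because $A$ and $B_i$ are both copies of $H$, the vertex $w$ has at least $\delta(H)$ neighbors inside $A$ and at least $\delta(H)$ neighbors inside $B_i$. The only neighbors these two sets can have in common lie in $V(A) \cap V(B_i) = S_i$, of which at most $|S_i| - 1$ are available (we must exclude $w$ itself). By inclusion--exclusion the number of distinct neighbors of $w$ in $G$ is therefore at least $2\delta(H) - (|S_i| - 1)$, giving
\begin{align*}
    \Delta(G) \ge d_G(w) \ge 2\delta(H) - |S_i| + 1,
\end{align*}
and hence $|S_i| \ge 2\delta(H) + 1 - \Delta(G)$. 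Substituting the hypothesis $\Delta(G) \le 2\delta(H) - \Floor{|V(H)|/3}$ yields $|S_i| \ge \Floor{|V(H)|/3} + 1$ for every $i \in \{1,2,3\}$.

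Summing over the three leaves produces the contradiction: $\sum_{i=1}^3 |S_i| \ge 3\left(\Floor{|V(H)|/3} + 1\right) > |V(H)|$, where the final strict inequality is the elementary fact that $3\Floor{n/3} + 3 > n$ for every positive integer $n$. This contradicts $\sum_{i=1}^3 |S_i| \le |V(H)|$, so $\Vconf{H}(G)$ admits no claw and is therefore claw-free, and \Cref{prop:claw-free} completes the proof. I expect the main obstacle to be the bookkeeping in the degree count, specifically bounding the overlap of the two neighbor sets by exactly $|S_i| - 1$: a looser bound there would weaken the resulting estimate on $|S_i|$ just enough to break the final summation, whereas this tight bound is precisely what makes the $\Floor{|V(H)|/3}$ term work out against the three disjoint sets.
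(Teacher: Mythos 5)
Your proposal is correct and is essentially the paper's own argument: both hinge on the degree count $d_G(w) \ge 2\delta(H) - (|V(A)\cap V(B_i)| - 1)$ at a vertex shared by the center and a leaf, combined with the fact that the three pairwise-disjoint intersections cannot all exceed $\Floor{|V(H)|/3}$. The only cosmetic difference is the direction of the pigeonhole: the paper picks the smallest intersection and exhibits a vertex of too-high degree, while you lower-bound all three intersections and contradict $\sum_i |S_i| \le |V(H)|$ — logically the same contrapositive.
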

\begin{proof}
    By~\Cref{prop:claw-free}, it suffices to show that if $\Vconf{H}(G)$ has a claw as an induced subgraph, then $G$ has a vertex of degree more than $2\delta(H) - \Floor{|V(H)|/3}$.
    Let $H^*$, $H_1$, $H_2$, and $H_3$ be induced copies of $H$ in $G$ such that they correspond to an induced claw in $\Vconf{H}(G)$ whose center is $H^*$.
    This implies that $V(H^*) \cap V(H_i) \neq \emptyset$ for $1\le i \le 3$ and $V(H_i) \cap V(H_{j}) = \emptyset$ for $1 \le i < j \le 3$.
    This implies that some copy of $H$, say $H_1$, satisfies $|V(H^*) \cap V(H_1)| \le \Floor{|V(H)|/3}$.
    Let $v \in V(H^*) \cap V(H_1)$.
    The vertex $v$ has at least $\delta(H)$ neighbors in each of $H^*$ and $H_1$ and at most $\Floor{|V(H)|/3} - 1$ of them belong to $V(H^*) \cap V(H_1)$.
    Thus,
    \begin{align*}
        d_G(v) &\ge 2\delta(H) - (|V(H^*)\cap V(H_1)| - 1) \\
        &\ge 2\delta(H) - \left(\Floor{\frac{|V(H)|}{3}} - 1\right) \\
        &> 2\delta(H) - \Floor{\frac{|V(H)|}{3}},
    \end{align*}
    which proves the claim.
\end{proof}

\begin{theorem}\label{thm:edge:H-packing}
    \textsc{Edge Disjoint List $H$-Packing} can be solved in polynomial time if the following inequality holds:
    \begin{align*}
        \Delta(G) \le 2\delta(H) - \Floor{\frac{|E(H)|}{3}},
    \end{align*}
    except that $H = tK_2$ for $3 \le t \le 5$.
\end{theorem}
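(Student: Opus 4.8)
The plan is to mirror the proof of \Cref{thm:vertex:H-packing}: by \Cref{prop:claw-free} it suffices to show that, under the stated degree bound and assuming $H \ne tK_2$ with $t \in \{3,4,5\}$, the edge-conflict graph $\Econf{H}(G)$ is claw-free. So I would assume for contradiction that $\Econf{H}(G)$ contains an induced claw with center $H^*$ and leaves $H_1, H_2, H_3$, and set $F_i = E(H^*) \cap E(H_i)$ for $i \in \{1,2,3\}$. These three sets are non-empty (the claw edges) and pairwise disjoint (the leaves are pairwise edge-disjoint), hence disjoint non-empty subsets of $E(H^*)$; in particular $|E(H)| \ge 3$, since otherwise no claw can exist and $\Econf{H}(G)$ is already claw-free. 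The goal is to exhibit a vertex of $G$ whose degree exceeds $2\delta(H) - \Floor{|E(H)|/3}$, contradicting the hypothesis. Unlike the vertex case, where intersecting in a single vertex sufficed, here I expect to need a second copy to force the strict gain.

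The first and most useful mechanism is a vertex incident to edges of two different groups. If some vertex $v$ is incident to an edge of $F_a$ and an edge of $F_b$ with $a \ne b$, then $v$ lies in both $H_a$ and $H_b$; since $H_a$ and $H_b$ are edge-disjoint, the neighborhoods $N_{H_a}(v)$ and $N_{H_b}(v)$ are disjoint, so $d_G(v) \ge d_{H_a}(v) + d_{H_b}(v) \ge 2\delta(H) > 2\delta(H) - \Floor{|E(H)|/3}$, where the last step uses $\Floor{|E(H)|/3} \ge 1$. This already settles every instance in which such a vertex exists.

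It remains to treat the case where no vertex touches two groups, so that the vertex sets spanned by $F_1, F_2, F_3$ are pairwise disjoint and each contains at least two vertices, forcing $|V(H)| \ge 6$. Here I would split on $d := \Floor{|E(H)|/3}$. Since $F_1,F_2,F_3$ are disjoint subsets of $E(H^*)$, a smallest one, say $F_1$, satisfies $|F_1| \le d$. When $d \ge 2$ (equivalently $|E(H)| \ge 6$), a simple sparsity count shows that the subgraph of $G$ with edge set $F_1$ cannot have all of its non-isolated vertices of degree $\ge d$, as that would already force more than $d$ edges; hence some endpoint $x$ of $F_1$ has at most $d-1$ incident edges of $F_1$. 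Since the common neighbors of $x$ in $H^*$ and $H_1$ are exactly its $F_1$-neighbors, this yields $d_G(x) \ge 2\delta(H) - (d-1) > 2\delta(H) - d$, the desired contradiction.

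The genuinely delicate regime, and the main obstacle, is $d = 1$, i.e.\ $3 \le |E(H)| \le 5$: here the one-group count only gives $d_G(x) \ge 2\delta(H) - 1$, which merely meets the bound, so the strict inequality must come from elsewhere. Combining $|V(H)| \ge 6$ with $|E(H)| \le 5$ rules out $\delta(H) \ge 2$ (which would force $|V(H)| \le |E(H)| < 6$), so $\delta(H) = 1$ and the bound collapses to $\Delta(G) \le 1$, meaning $G$ is a disjoint union of edges and isolated vertices. If $H$ is not a matching then, having minimum degree $1$, it contains a vertex of degree at least $2$ and hence a copy of $P_3$, which cannot occur in such a $G$; thus $\List$ is empty and the instance is trivial, so $\Econf{H}(G)$ has no vertices, contradicting the assumed claw. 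The only surviving possibility is $H = tK_2$, and since $3 \le |E(H)| = t \le 5$ this is exactly the excluded case $t \in \{3,4,5\}$, consistent with the exception because on a matching the problem becomes \textsc{Set Packing} with sets of size $t \ge 3$ and is NP-hard. I expect this final case analysis, together with checking that the sparsity count produces a strict gain precisely when $d \ge 2$, to be the crux of the argument.
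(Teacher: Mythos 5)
Your proof is correct and follows essentially the same route as the paper's: reduce to claw-freeness of $\Econf{H}(G)$ via \Cref{prop:claw-free}, exhibit a vertex violating the degree bound whenever $\Floor{|E(H)|/3}\ge 2$, and isolate $H=tK_2$ with $3\le t\le 5$ as the only obstruction when $3\le |E(H)|\le 5$. The only differences are local: the paper extracts the high-degree vertex from the two endpoints of a single shared edge in $E(H^*)\cap E(H_1)$ (either one endpoint meets few shared edges, or the other meets at most one) rather than your two-groups/sparsity-count split, and it handles the $\delta(H)=2$, $|E(H)|\le 5$ case by reducing to \Cref{thm:vertex:H-packing} instead of your direct argument that a claw would force $|V(H)|\ge 6$.
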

\begin{proof}
    We first assume that $|E(H)| \ge 6$.
    The proof strategy is analogous to that in \Cref{thm:vertex:H-packing}.
    Let $H^*$, $H_1$, $H_2$, and $H_3$ be induced copies of $H$ in $G$ such that they correspond to an induced claw in $\Vconf{H}(G)$ whose center is $H^*$.
    We assume that $|E(H^*) \cap E(H_1)| \le \Floor{|E(H)| / 3}$.
    As $E(H^*) \cap E(H_1) \neq \emptyset$, there are at least two vertices, say $v$ and $w$, in $V(H^*) \cap V(H_1)$.
    Both $v$ and $w$ have at least $\delta(H)$ incident edges in each of $H^*$ and $H_1$.
    If $v$ has at most $\Floor{|E(H)|/3} - 1$ incident edges that belong to $E(H^*) \cap E(H_1)$, then we have
    \begin{align*}
        d_G(v) \ge 2\delta(H) - \left(\Floor{\frac{|E(H)|}{3}} - 1 \right) > 2\delta(H) - \Floor{\frac{|E(H)|}{3}}.
    \end{align*}
    Suppose otherwise. Then, all the edges in $E(H^*) \cap E(H_1)$ are incident to $v$.
    This implies that, at most one edge (i.e., $\{v, w\}$ if it exists) in $E(H^*) \cap E(H_1)$ can be incident to $w$.
    Thus, we have $d_G(w) \ge 2\delta(H) - 1$.
    By the assumption $|E(H)| \ge 6$, we have $d_G(w) > 2\delta(H) - \Floor{|E(H)|/3}$.

    We next consider the other case, that is, $|E(H)| \le 5$, and show that \textsc{Edge Disjoint List $H$-Packing} is solvable in polynomial time under the assumption that $\Delta(G) \le 2\delta(H) - \Floor{|E(H)|/3}$.
    We can assume that $|E(H)| \ge 3$ as otherwise $\Econf{H}(G)$ has no induced claws.
    If $\delta(H) \ge 3$, then it has at least four vertices, and thus we have $|E(H)|\ge |V(H)| \cdot\delta(H) / 2 \ge 4\delta(H)/2 \ge 6$.
    Thus, we assume $1 \le \delta(H) \le 2$.
    If $\delta(H) = 2$, then $\Delta(G) \le 4 - \Floor{|E(H)|/3} = 3$.
    In this case, two copies of $H$ appearing in $G$ are edge-disjoint if and only if they are vertex-disjoint.
    Moreover, as $|E(H)| \ge |V(H)|$, we have $\Delta(G) \le 2\delta(H) - \Floor{|V(H)|/3}$, yields a polynomial-time algorithm by~\Cref{thm:vertex:H-packing}.
    Thus, the remaining case is $\delta(H) = 1$, that is, $H = tK_2$ for $3 \le t \le 5$.
\end{proof}

Let us note that the exception in \Cref{thm:edge:H-packing} is critical for the tractability of \textsc{Edge Disjoint List $H$-Packing}.
In fact, \textsc{Edge Disjoint List $H$-Packing} is NP-complete even if $G = nK_2$ and $H = 3K_2$, which satisfy that $\Delta(G) = 1 \le 2\delta(H) - \Floor{|E(H)|/3}$.
This intractability is shown by reducing \textsc{Exact Cover by $3$-Sets}, which is known to be NP-complete~\cite{Karp72}, to \textsc{Edge Disjoint List $H$-Packing}.
In \textsc{Exact Cover by $3$-Sets}, we are given a universe $U$ and a collection $\mathcal C$ of subsets of $U$, each of which has exactly three elements and asked whether there is a pairwise disjoint subcollection $\mathcal C'$ of $\mathcal C$ that covers $U$ (i.e., $U = \bigcup_{S \in \mathcal C'} S$).
This problem is a special case of \textsc{Edge Disjoint List $H$-Packing}, where $G$ has a copy of $K_2$ corresponding to each element in $U$ and $\mathcal S$ consists of all copies of $3K_2$ corresponding to $\mathcal C$.
This reduction also proves that \textsc{Vertex Disjoint List $H$-Packing} is NP-complete even if $G = nK_2$ and $H = 3K_2$.

\begin{theorem}\label{thm:H-packing-exception}
\textsc{Vertex Disjoint List $H$-Packing} and \textsc{Edge Disjoint List $H$-Packing} are NP-complete even if $G = nK_2$ and $H = 3K_2$.
\end{theorem}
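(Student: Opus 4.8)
The plan is to establish NP-hardness by a polynomial-time reduction from \textsc{Exact Cover by 3-Sets} (X3C), which is NP-complete~\cite{Karp72}; membership in NP is immediate, since a claimed subcollection $\mathcal S \subseteq \List$ with $|\mathcal S| \ge k$ can be verified in polynomial time by checking pairwise disjointness. As is standard for X3C, I would assume $3 \mid |U|$, since otherwise the instance is trivially a no-instance and can be rejected.

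Given an X3C instance $(U, \mathcal C)$, I would build $G$ and $\List$ as follows. For each element $u \in U$ create a single copy of $K_2$, i.e.\ an edge $e_u$ together with its two endpoints, so that $G$ is the disjoint union of these $|U|$ edges and thus equals $|U|K_2$; by construction these components are pairwise vertex-disjoint. For each triple $S = \{a,b,c\} \in \mathcal C$, add to $\List$ the subgraph consisting of the edges $e_a, e_b, e_c$, which is a copy of $3K_2$. Finally set $k = |U|/3$. This transformation is clearly computable in polynomial time.

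For correctness I would first record the structural observation that lets one reduction serve both variants: because the $e_u$ are exactly the connected components of $G$ and are pairwise vertex-disjoint, two members of $\List$ share an edge if and only if they share a vertex, and either happens precisely when the two underlying triples of $\mathcal C$ intersect. Hence edge-disjointness and vertex-disjointness of members of $\List$ coincide, and both are equivalent to disjointness of the corresponding triples. I would then argue the equivalence of the instances: a pairwise (vertex- or edge-) disjoint subcollection $\mathcal S \subseteq \List$ corresponds to a pairwise disjoint subfamily $\mathcal C' \subseteq \mathcal C$ with $|\mathcal C'| = |\mathcal S|$, and $|\mathcal S| \ge k = |U|/3$ forces $\mathcal C'$ to cover exactly $3 \cdot |U|/3 = |U|$ distinct elements, i.e.\ to be an exact cover; conversely any exact cover yields $|U|/3$ pairwise disjoint triples, hence a packing of size $k$.

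There is no genuinely hard step here: the construction is direct and the correctness is a short counting argument. The only point that warrants care is the observation above that, because $G$ is a disjoint union of isolated edges, the vertex- and edge-disjoint versions collapse to the same combinatorial condition, which is exactly what allows both halves of the theorem to follow from the single reduction. It is also worth emphasizing that this instance satisfies $\Delta(G) = 1 \le 2\delta(H) - \Floor{|E(H)|/3} = 2 - 1 = 1$, so it falls precisely in the excluded case $H = tK_2$ of \Cref{thm:edge:H-packing}, confirming that the exception there is indeed necessary.
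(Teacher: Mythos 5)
Your reduction is exactly the one the paper uses: from \textsc{Exact Cover by $3$-Sets}, with one copy of $K_2$ per element of $U$, one copy of $3K_2$ in $\List$ per triple, and $k = |U|/3$, together with the observation that vertex- and edge-disjointness coincide on a disjoint union of edges so that one reduction covers both variants. The proposal is correct and matches the paper's argument, merely spelling out the counting step that the paper leaves implicit.
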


As consequences of \Cref{thm:vertex:H-packing,thm:edge:H-packing}, we have the following positive results.
\begin{theorem}\label{thm:vertex-edge:C_4:deg3}
    For $\ell \in \{4, 5\}$, there are polynomial-time algorithms for \textsc{Vertex Disjoint List $C_\ell$-Packing} and \textsc{Edge Disjoint List $C_\ell$-Packing} on graphs of maximum degree $3$.
\end{theorem}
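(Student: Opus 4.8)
The plan is to derive this statement directly as a corollary of \Cref{thm:vertex:H-packing,thm:edge:H-packing} by instantiating $H = C_\ell$ and verifying the relevant numerical conditions. The essential observations are that the cycle $C_\ell$ is $2$-regular, so $\delta(C_\ell) = 2$, and that it has $|V(C_\ell)| = |E(C_\ell)| = \ell$ vertices and edges. Consequently $2\delta(C_\ell) = 4$ in both inequalities, and the only quantity that varies with $\ell$ is the floor term $\Floor{\ell/3}$.

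For the vertex-disjoint version I would apply \Cref{thm:vertex:H-packing}, whose hypothesis becomes $\Delta(G) \le 4 - \Floor{\ell/3}$. For $\ell \in \{4, 5\}$ we have $\Floor{\ell/3} = 1$, so the condition reads $\Delta(G) \le 3$, which is exactly the assumed maximum-degree bound. Hence \textsc{Vertex Disjoint List $C_\ell$-Packing} is solvable in polynomial time on graphs of maximum degree~$3$ for these two values of $\ell$.

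For the edge-disjoint version I would likewise invoke \Cref{thm:edge:H-packing}. Since $|E(C_\ell)| = \ell$, its hypothesis also becomes $\Delta(G) \le 4 - \Floor{\ell/3}$, giving the bound $\Delta(G) \le 3$ for $\ell \in \{4,5\}$. The one point requiring a short verification is the exceptional case of \Cref{thm:edge:H-packing}: the theorem excludes $H = tK_2$ for $3 \le t \le 5$. Here $C_\ell$ is connected and contains a cycle, so it is never isomorphic to a disjoint union of edges $tK_2$; thus the exception never applies and the theorem may be used unconditionally. This yields the polynomial-time algorithm for \textsc{Edge Disjoint List $C_\ell$-Packing} as well.

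I do not expect any genuine obstacle here, as the argument is purely a substitution-and-arithmetic check; the only care needed is confirming that the excluded family $tK_2$ is disjoint from the cycles under consideration. It is worth remarking (consistent with \tableautorefname~\ref{tbl:C_l}) that the same computation breaks down precisely at $\ell \ge 6$, where $\Floor{\ell/3} \ge 2$ forces the sufficient bound below $3$, so the conditions of \Cref{thm:vertex:H-packing,thm:edge:H-packing} no longer cover maximum degree~$3$.
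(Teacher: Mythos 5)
Your proposal is correct and matches the paper exactly: the paper states this theorem as an immediate consequence of \Cref{thm:vertex:H-packing,thm:edge:H-packing}, and your substitution $\delta(C_\ell)=2$, $|V(C_\ell)|=|E(C_\ell)|=\ell$, $\Floor{\ell/3}=1$ yields the bound $\Delta(G)\le 3$ just as intended. Your explicit check that $C_\ell$ is never of the form $tK_2$ is a worthwhile detail the paper leaves implicit.
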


Contrary to this tractability, for any $\ell \ge 6$, \textsc{Vertex Disjoint $C_\ell$-Packing} and \textsc{Edge Disjoint $C_\ell$-Packing} are NP-complete even on planar graphs of maximum degree $3$.

\begin{theorem}\label{thm:vertex-edge:C_6:deg3}
    For $\ell \ge 6$, \textsc{Vertex Disjoint $C_\ell$-Packing} and \textsc{Edge Disjoint $C_\ell$-Packing} are NP-complete even on planar graphs of maximum degree $3$.
\end{theorem}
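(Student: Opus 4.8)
The plan is to prove membership in NP and then establish hardness by a reduction from \textsc{Maximum Independent Set} on planar graphs of maximum degree $3$, which remains NP-complete even after we repeatedly subdivide edges to make the girth exceed $\ell$ (subdivision preserves planarity and the degree bound, and changes the independence number only by a controlled additive amount). Membership is immediate: a purported solution is a family of $k$ subgraphs, and one checks in polynomial time that each is isomorphic to $C_\ell$ and that they are pairwise vertex- (resp.\ edge-) disjoint. A useful simplification is that it suffices to handle the vertex-disjoint version: on a graph of maximum degree $3$, any two edge-disjoint cycles are automatically vertex-disjoint, since a shared vertex of two edge-disjoint cycles would be incident to four distinct edges and hence have degree at least $4$. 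Thus on max-degree-$3$ graphs the two notions of packing coincide, and the edge-disjoint statement of the theorem follows from the vertex-disjoint one.

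Given a planar graph $G_0$ of maximum degree $3$ whose girth exceeds $\ell$, I would construct $G$ by \emph{inflating} each vertex $u$ of $G_0$ into a cycle (a \emph{bead}) $B_u \cong C_\ell$, and realizing each edge $uv \in E(G_0)$ as a short sub-path (a \emph{shared arc}) that belongs simultaneously to $B_u$ and $B_v$; the shared arcs incident to a given bead are separated by \emph{private arcs} lying on a single bead. Traversing $B_u$ one alternately meets a shared arc and a private arc, so a vertex of degree $d$ in $G_0$ forces $2d$ arcs of length at least one, i.e.\ $\ell \ge 2d$; for $d \le 3$ this is exactly $\ell \ge 6$, matching the hypothesis and the tractability for $\ell \le 5$. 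Taking the cyclic order of the shared arcs around $B_u$ to agree with the rotation of $G_0$ at $u$, a planar embedding of $G_0$ yields a planar embedding of $G$. Every endpoint of a shared arc (a \emph{branch vertex}) has degree $3$ and all other vertices have degree $2$, so $\Delta(G) \le 3$. I would fix the arc lengths so that each bead has total length exactly $\ell$ and so that, for any two beads sharing an arc of length $p$, the third (``outer'') cycle of the resulting theta subgraph, of length $2\ell - 2p$, differs from $\ell$ (e.g.\ shared arcs of length $1$ suffice, since $2\ell-2 \ne \ell$).

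The correctness hinges on the claim that the copies of $C_\ell$ in $G$ are \emph{exactly} the beads and that $B_u$ meets $B_v$ precisely when $uv \in E(G_0)$; the latter holds because each vertex of $G$ lies on one bead (private vertices) or on two adjacent beads (arc vertices). Granting this, the $C_\ell$-vertex-conflict graph of $G$ is isomorphic to $G_0$, so a maximum collection of pairwise vertex-disjoint copies of $C_\ell$ corresponds exactly to a maximum independent set of $G_0$, and $G$ admits a packing of size at least $k$ iff $G_0$ has an independent set of size at least $k$. This gives both directions of the reduction at once.

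The main obstacle is the ``no unintended $C_\ell$'' claim, which is genuinely the crux because the \emph{plain} problem is considered: every $C_\ell$-subgraph of $G$, not only the intended ones, is an admissible candidate. A cycle of $G$ that is not a single bead must change beads at some branch vertices; the sequence of beads it visits is a closed walk in $G_0$, and since the cycle is simple this walk yields a cycle of $G_0$, whose length exceeds $\ell$ by the girth assumption. As the cycle traverses at least one edge inside each visited bead, its length likewise exceeds $\ell$; the only remaining accidental cycles, namely the outer cycles of two-bead theta subgraphs, are ruled out by the arc-length choice. I expect the delicate part to be carrying out this girth-and-arc-length bookkeeping uniformly for all $\ell \ge 6$, together with a careful check that the degree bound and planarity hold at every branch vertex and that subdivision of $G_0$ keeps the source problem NP-complete. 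Finally, since the plain $C_\ell$-packing problem is the special case of its list version in which $\List$ contains all copies of $C_\ell$, the stated hardness transfers immediately to \textsc{Vertex (Edge) Disjoint List $C_\ell$-Packing}.
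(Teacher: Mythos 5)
Your proposal is correct and follows essentially the same route as the paper: reduce from \textsc{Independent Set} on planar max-degree-$3$ graphs of girth greater than $\ell$, inflate each vertex into a primal $\ell$-cycle with three shared edges (a matching) realizing adjacencies, observe that edge- and vertex-disjointness coincide at maximum degree $3$, and argue that every non-primal cycle is longer than $\ell$ by projecting it to a closed walk in the source graph. The only point your sketch glosses over, which the paper treats explicitly, is that the projected closed walk may contain a ``turn'' $v_i = v_{i+2}$ (so it need not directly yield a cycle of $G_0$); that case is handled by noting the cycle must then contain all private paths of the middle bead and hence exceed length $\ell$.
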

\begin{proof}
    Since \textsc{Vertex Disjoint $C_\ell$-Packing} and \textsc{Edge Disjoint $C_\ell$-Packing} are equivalent on graphs of maximum degree $3$, we only consider \textsc{Vertex Disjoint $C_\ell$-Packing}.

    To show NP-hardness, we perform a polynomial-time reduction from \textsc{Independent Set}, which is known to be NP-complete even on planar graphs with maximum degree $3$ and girth at least $p$ for any integer $p$~\cite{Murphy92:DAM:Computing}.
    Here, the \emph{girth} of $G$ is the length of a shortest cycle in $G$.

    Let $G$ be a planar graph with $\Delta(G) \le 3$ and girth at least $\ell + 1$.
    We construct a graph $G'$ as follows.
    For each $v \in V(G)$, $G'$ contains a cycle $C_v$ of length $\ell$.
    These cycles are called \emph{primal cycles} in $G'$.
    Let $M_v$ be an arbitrary matching in $C_v$ with $|M_v| = 3$.
    For two adjacent vertices $u, v$ in $G$, we identify one of the edges in $M_u$ and in $M_v$ as shown in \Cref{fig:red-cycle-bnd-deg}.
    \begin{figure}[t]
        \centering
        \includegraphics{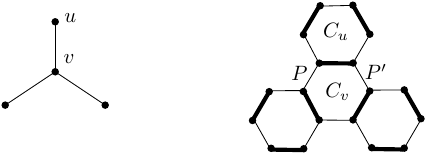}
        \caption{An example of the construction of $G'$ for $\ell = 6$. Bold lines indicate a matching of each cycle.}
        \label{fig:red-cycle-bnd-deg}
    \end{figure}
    The edges in $M_v$ are called \emph{shared edges}.
    By removing the shared edges from $C_v$, we obtain three paths, which we call \emph{private paths} in $C_v$.
    Note that each private path belongs to exactly one primal cycle in $G'$.
    The construction of $G'$ is done.
    It is easy to observe that $G'$ is planar and has maximum degree $3$.
    Also observe that the degree of each internal vertex of a private path is exactly $2$.

    We first claim that the length of cycles in $G'$ except for primal cycles is greater than $\ell$.
    To see this, let $C$ be an arbitrary cycle that is not a primal cycle in $G'$.
    As shared edges form a matching in $G'$, $C$ must contain at least one private path $P$ in $C_v$ for some $v \in V(G)$.
    If $C$ contains all the private paths in $C_v$, the length of $C$ is greater than $\ell$, except for the case $C = C_v$.
    Thus, we assume that $C$ does not contain one of the private paths, say $P'$, in $C_v$. 
    Let $W = (v_0, v_1, \ldots, v_t)$ be a sequence of vertices of $G$ defined as follows.
    We first contract all shared edges in $C$.
    Then, the contracted cycle can be partitioned into maximal subpaths $P_0, \ldots, P_t$ such that each $P_i$ consists of edges of (possibly more than one) private paths in $C_{v_i}$ for some $v_i \in V(G)$.
    Due to the maximality of $P_i$, we have $v_i \neq v_{i+1}$ for $0 \le i \le t$, where the addition in the subscript is taken modulo $t+1$.
    Moreover, the sequence contains at least two vertices as $C$ contains $P$ and does not contain $P'$, meaning that it must have a private path in $C_{v'}$ for some $v' \neq v$.
    For any pair of private paths in $C_u$ and in $C_w$, they are adjacent with a shared edge if and only if $u$ and $w$ are adjacent in $G$.
    Thus, $W$ is a closed walk in $G$.

    Suppose that $W$ contains a ``turn'', that is, $v_i = v_{i + 2}$ for some $i$.
    This implies that $C$ contains all private paths of $C_{v_{i+1}}$, which implies that the length of $C$ is more than $\ell$.
    Otherwise, $W$ contains a cycle in $G$.
    Since the girth of $G$ is at least $\ell + 1$, $W$ has more than $\ell$ edges.
    Hence, $C$ contains more than $\ell + 1$ private paths.

    Now, we are ready to prove that $G$ has an independent set of size at least $k$ if and only if $G'$ has a $C_\ell$-packing of size at least $k$.
    From an independent set of $G$, we can construct a vertex-disjoint $C_\ell$-packing by just taking primal cycles corresponding to vertices in the independent set.
    Since every cycle except for primal cycles has length more than $\ell$, this correspondence is reversible: From a vertex-disjoint $C_\ell$-packing of $G'$ with size $k$, we can construct an independent set of $G$ with size $k$.
\end{proof}

Using a similar strategy of \Cref{thm:vertex-edge:C_6:deg3}, we prove the following theorems.

\begin{theorem}\label{thm:vertex:C_4:deg4}
    For $\ell \in \{4, 5\}$, \textsc{Vertex Disjoint $C_\ell$-Packing} is NP-complete even on planar graphs of maximum degree $4$.
\end{theorem}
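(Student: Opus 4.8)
The plan is to adapt the reduction from \textsc{Independent Set} used in the proof of \Cref{thm:vertex-edge:C_6:deg3}. I again start from a planar graph $G$ of maximum degree $3$ and girth at least $\ell+1$, for which \textsc{Independent Set} is NP-complete for every fixed girth bound~\cite{Murphy92:DAM:Computing}, and build a graph $G'$ by placing a \emph{primal cycle} $C_v\cong C_\ell$ at each vertex $v$. The difference lies in how adjacencies are encoded: since $C_4$ and $C_5$ contain no matching of size $3$, the edge-identification trick of \Cref{thm:vertex-edge:C_6:deg3} cannot attach three neighbours to a single primal cycle while keeping the maximum degree at $3$. Instead, for each edge $uv$ of $G$ I would identify a single vertex of $C_u$ with a single vertex of $C_v$ (a \emph{shared vertex}). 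A shared vertex then has degree $4$ while every other vertex keeps degree $2$, so $\Delta(G')\le 4$; for a vertex of degree $d\le 3$ I use $d$ of the $\ell\in\{4,5\}$ vertices of $C_v$ as its shared vertices, placed in the cyclic order prescribed by a fixed planar embedding of $G$, which keeps $G'$ planar.

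Two structural facts drive the correctness of the reduction, both immediate from the construction. First, every edge of $G'$ lies in a unique primal cycle, since distinct primal cycles are glued only at vertices. Second, two primal cycles $C_u$ and $C_v$ meet in a vertex if and only if $u$ and $v$ are adjacent in $G$, and in that case they meet in exactly one vertex (the single shared vertex of the edge $uv$). In particular a family of pairwise vertex-disjoint primal cycles corresponds exactly to an independent set of $G$, so it remains only to prove that every $C_\ell$ in $G'$ is a primal cycle.

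The key lemma, and the step I expect to be the main obstacle, is that every cycle of $G'$ other than a primal cycle has length greater than $\ell$. I would prove it by a non-backtracking walk argument that replaces the contraction argument of \Cref{thm:vertex-edge:C_6:deg3} (which relied on the shared edges forming a matching and is no longer available). Let $C$ be a cycle of $G'$ and label each of its edges by the unique primal cycle containing it; reading these labels around $C$ and collapsing equal consecutive labels yields a cyclic sequence $v_0,v_1,\dots,v_{t-1}$ with $v_i\neq v_{i+1}$. Because a label changes only at a shared vertex, each $v_iv_{i+1}$ is an edge of $G$, so this sequence is a closed walk $W$ of length $t$ in $G$, and since each maximal run of equal labels is a sub-path of $C$ of length at least $1$ inside one primal cycle, we get $|C|\ge t$. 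The crucial point is that $W$ cannot backtrack: if $v_{i-1}=v_{i+1}$, then the run inside $C_{v_i}$ would have to enter and leave through the unique shared vertex of $C_{v_i}$ with $C_{v_{i-1}}$, which is impossible for a run of positive length in a simple cycle unless that run is all of $C_{v_i}$, forcing $C=C_{v_i}$ to be a primal cycle. Hence, for non-primal $C$, the walk $W$ is non-trivial and non-backtracking and therefore contains a cycle of $G$; since $G$ has girth at least $\ell+1$, this gives $t\ge \ell+1$, and so $|C|\ge t\ge \ell+1>\ell$. With this lemma in hand, every optimal vertex-disjoint $C_\ell$-packing of $G'$ uses only primal cycles, and the correspondence of the previous paragraph finishes the reduction for both $\ell=4$ and $\ell=5$; as planarity and the bound $\Delta(G')\le 4$ have already been arranged, $G'$ is a planar graph of maximum degree $4$, as required.
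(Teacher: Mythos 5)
Your reduction is exactly the one in the paper: the same source problem (\textsc{Independent Set} on planar, maximum-degree-$3$ graphs of girth at least $\ell+1$), the same gadget (primal $\ell$-cycles glued at one shared vertex per edge of $G$, giving $\Delta(G')\le 4$ and planarity), and the same key lemma that every non-primal cycle of $G'$ has length greater than $\ell$. The only difference is cosmetic: you establish the key lemma via a non-backtracking closed-walk argument, whereas the paper observes that a simple cycle can pass through a primal cycle at most once (since each primal cycle has at most three shared vertices) and then traces out a cycle of $G$; both yield the same conclusion, and your write-up is correct.
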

\begin{proof}
    We again give a polynomial-time reduction from \textsc{Independent Set} on planar graph with maximum degree at most $3$ and girth at least $\ell + 1$.
    Let $G$ be a planar graph with maximum degree $3$ and girth at least $\ell + 1$.
    We construct a graph $G'$ such that $G$ has an independent set of size $k$ if and only if $G'$ has $k$ vertex-disjoint $C_\ell$'s.
    The construction is almost analogous to one used in \Cref{thm:vertex-edge:C_6:deg3}.
    For each $v \in V(H)$, $G'$ contains a cycle $C_v$ of length $\ell$.
    These cycles are called primal cycles in $G'$.
    We take arbitrary three vertices of $C_v$, which are called \emph{shared vertices}.
    For two adjacent vertices $u, v$ in $G$, we identify one of the shared vertices in $C_u$ and in $C_v$ as shown in \Cref{fig:red-cycle-four-vertex-deg}.
    It is easy to see that the constructed graph $G'$ is planar and has degree at most $4$.
    \begin{figure}[t]
        \centering
        \includegraphics{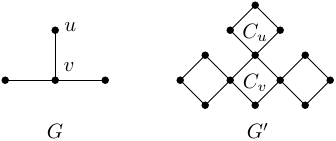}
        \caption{An example of the construction of $G'$ for $\ell = 4$.}
        \label{fig:red-cycle-four-vertex-deg}
    \end{figure}

    Let $C$ be a cycle of $G'$ that is not a primal cycle.
    Since each primal cycle has only three shared vertices, $C$ passes through a primal cycle at most once, where we say that $C$ passes through a primal cycle $C_v$ when $C$ contains at least one edge of $C_v$.
    This fact implies that $C$ induces a cycle of $G$ by tracing $C$ as in the proof of \Cref{thm:vertex-edge:C_6:deg3}.
    Thus, we can conclude that $C$ has more than $\ell$ edges.

    The remaining part of the proof is identical to that in \Cref{thm:vertex-edge:C_6:deg3}.
    For every independent set $S$ of $G$, the primal cycles of them are vertex-disjoint $C_\ell$'s in $G$ and vice versa.
\end{proof}

\begin{theorem}\label{thm:edge:C_4:deg4}
    For $\ell \in \{4, 5\}$, \textsc{Edge Disjoint $C_\ell$-Packing} is NP-complete even on planar graphs of maximum degree $4$.
\end{theorem}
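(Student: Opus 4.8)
The plan is to adapt the independent-set reduction of \Cref{thm:vertex-edge:C_6:deg3}, but to prepend a preprocessing step that keeps the maximum degree down to $4$. Membership in NP is immediate, so I focus on hardness. As before I reduce from \textsc{Independent Set} on planar graphs of maximum degree $3$, which is NP-complete; I do not even need a girth assumption on the input, since the construction will create girth on its own. The overall idea is again to build a graph $G'$ consisting of one \emph{primal} cycle $C_x$ of length $\ell$ per vertex $x$ of the (preprocessed) input, to glue $C_x$ and $C_y$ along a single shared edge whenever $x$ and $y$ are adjacent, and to argue that (i) every cycle of $G'$ other than the primal ones is longer than $\ell$, and (ii) primal cycles $C_x,C_y$ are edge-disjoint exactly when $x,y$ are non-adjacent. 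Given (i) and (ii), edge-disjoint $C_\ell$-packings of $G'$ correspond precisely to independent sets of the preprocessed graph.

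The main obstacle is the degree bound, and this is exactly where $\ell\in\{4,5\}$ differs from $\ell\ge 6$. For a degree-$3$ vertex the cycle $C_x$ must carry three shared edges, but $C_4$ and $C_5$ have no matching of size $3$; hence two of these shared edges are forced to meet at a common ``hub'' vertex, which already has degree $4$ in $G'$. If two such hubs were ever identified with one another, the degree would jump to $5$, and in fact a naive one-cycle-per-vertex construction is doomed on dense inputs: on a cubic girth-$5$ planar graph (e.g.\ the dodecahedron) the resulting $G'$ would have only $n$ vertices but $5n/2$ edges after the edge identifications, forcing average degree $5$. To avoid this I first \emph{subdivide every edge of the input twice}, obtaining a graph $G_2$; this is a standard operation under which the independence number increases by exactly $|E(G)|$, and it makes $G_2$ planar, subcubic, of girth at least $9$. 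Crucially, no two original (degree-$3$) vertices remain adjacent in $G_2$: every edge now runs between two vertices at least one of which has degree $2$.

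I then build $G'$ from $G_2$ as above and place the shared edges so that degree-$2$ cycles are ``hub-free.'' A subdivision vertex has exactly two neighbours, so its cycle carries only two shared edges, which I put on a matching of $C_\ell$ (possible since $\ell\ge 4$). Only the cycles of original, degree-$3$ vertices carry a hub, and by the non-adjacency noted above each such hub is glued only to hub-free subdivision cycles. A short local check then shows every vertex of $G'$ has degree at most $4$: a hub is incident to its two shared edges together with one private edge in each of the two neighbouring matching-placed subdivision cycles, giving degree exactly $4$, while all other vertices have degree at most $3$. Planarity is preserved because the construction blows up each vertex of the planar graph $G_2$ into a small cycle and reroutes incident edges locally.

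It remains to establish the length property (i), which I prove exactly as in \Cref{thm:vertex-edge:C_6:deg3}: contracting the shared edges of any non-primal cycle $C$ yields a closed walk in $G_2$; a ``turn'' forces $C$ to contain all private edges of some cycle and hence to be long, while otherwise the walk contains a cycle of $G_2$ of length at least the girth $9>\ell+1$, so $C$ traverses more than $\ell$ primal arcs. The only point needing slight extra care compared with \Cref{thm:vertex-edge:C_6:deg3} is that shared edges no longer form a global matching because of the hubs; but since $C_x\cup C_y$ for an adjacent pair spans a theta-graph whose non-primal cycle has length $2\ell-2>\ell$, and the hub neighbourhoods are controlled by girth, no non-primal cycle of length $\ell$ can appear. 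Combining (i), (ii), and the independence-number identity for the double subdivision then gives that $G$ has an independent set of size $k$ if and only if $G'$ has an edge-disjoint $C_\ell$-packing of size $k+|E(G)|$, completing the reduction. The hardest part throughout is the degree-$4$ accounting at the hubs, which is precisely what the double-subdivision preprocessing is designed to tame.
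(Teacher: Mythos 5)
Your proposal matches the paper's proof essentially step for step: the same reduction from \textsc{Independent Set} on planar subcubic graphs, the same double-subdivision preprocessing with Poljak's identity to eliminate adjacent degree-$3$ vertices and control the degree at the hub vertices, the same primal-cycle/shared-edge gadget (three shared edges for degree-$3$ vertices, a matching of two for subdivision vertices), and the same key claim that every non-primal cycle is longer than $\ell$. Your accounting of why the naive construction fails and your extra care about shared edges not forming a matching at hubs are correct and, if anything, slightly more explicit than the paper's own write-up.
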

\begin{proof}
    The proof is done by showing a polynomial-time reduction from \textsc{Independent Set} on planar graphs with maximum degree $3$.
    Let $G$ be a planar graph with maximum degree $3$.
    We construct a planar graph $\hat{G}$ with maximum degree at most $4$ such that $G$ has an independent set of size $k$ if and only if $\hat{G}$ has $k'$ edge-disjoint $C_\ell$'s for some $k'$.
    The construction is done in the same spirit of \Cref{thm:vertex-edge:C_6:deg3,thm:vertex:C_4:deg4} but we need to make a minor modification to keep the upper bound on its degree.
    We first subdivide each edge of $G$ twice.
    The subdivided graph is denoted by $G'$.
    It is well known that $G$ has an independent set of size $k$ if and only if $G'$ has an independent set of size $k + |E(G)|$~\cite{Poljak74}.
    For each vertex $v \in V(G')$, $\hat{G}$ contains a cycle of length $\ell$, which we call a primal cycle of $v$.
    For such a cycle $C_v$, we take arbitrary three edges if the degree of $v$ is $3$ in $G'$, and take a matching with two edges otherwise.
    These edges are called shared edges.
    We then identify one of shared edges in $C_u$ and that in $C_v$ if they are adjacent in $G'$.
    See~\Cref{fig:red-cycle-four-deg} for an illustration.
    \begin{figure}[t]
        \centering
        \includegraphics{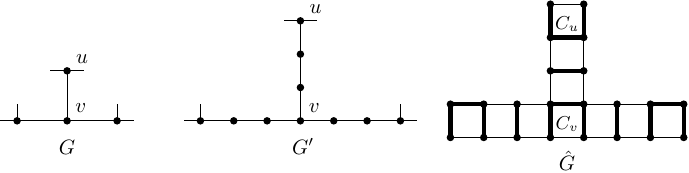}
        \caption{An example of the constructions of $H'$ and $G$ for $\ell = 4$. Bold lines depict shared edges.}
        \label{fig:red-cycle-four-deg}
    \end{figure}
    It is easy to see that $\hat{G}$ is planar.
    As $G'$ has no pair of adjacent vertices of degree $3$, for each vertex in $\hat{G}$, there are at most two shared edges and at most two non-shared edges incident to it, which implies that the maximum degree of $\hat{G}$ is at most $4$.

    Now, we show that $G'$ has an independent set of size $k'$ if and only if $\hat{G}$ has $k'$ edge-disjoint $C_\ell$'s.
    This can be shown by observing that every cycle of $\hat{G}$ has more than $\ell$ edges except for primal cycles.
    The remaining part of the proof is analogous to ones in \Cref{thm:vertex-edge:C_6:deg3,thm:vertex:C_4:deg4}. 
\end{proof}

\section{\textsc{Vertex Disjoint List \texorpdfstring{$\mathcal H$}{}-Packing} on bounded-treewidth graphs}\label{sec:vertex:tw}
In \Cref{sec:series-parallel}, we will see that \textsc{Edge Disjoint List $\mathcal H$-Packing} is intractable even if $\mathcal H$ contains a single small connected graph and an input graph is series-parallel.
In contrast to this intractability, \textsc{Vertex Disjoint List $\mathcal H$-Packing} is polynomial-time solvable on series-parallel graphs and, more generally, bounded-treewidth graphs, if $\mathcal H$ consists of a finite number of connected graphs.
More precisely, we show that \textsc{Vertex Disjoint List $\mathcal H$-Packing} is XP parameterized by treewidth, provided that $\mathcal H$ consists of connected graphs.
We also show that \textsc{Vertex Disjoint List $\mathcal{P}$-Packing} and \textsc{Vertex Disjoint List $\mathcal{C}$-Packing} are W[1]-hard parameterized by pathwidth. (Recall that $\mathcal{P} = \{P_\ell : \ell \ge 1 \}$ and $\mathcal{C} = \{C_\ell : \ell \ge 3\}$.)


\begin{theorem}\label{thm:vertex:treewidth-xp}
    \textsc{Vertex Disjoint List $\mathcal H$-Packing} is solvable in $n^{O(\tw)}$ time, provided that all graphs in $\mathcal H$ are connected, where $n$ is the number of vertices in the input graph.
\end{theorem}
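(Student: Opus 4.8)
The plan is to solve the optimization version (maximize $|\mathcal S|$) by dynamic programming over a rooted tree decomposition, answering ``yes'' iff the optimum is at least $k$. First I would compute a tree decomposition $\mathcal T = (T, \{X_i : i \in V(T)\})$ of width $O(\tw)$ (e.g.\ an optimal one in $n^{O(\tw)}$ time via the Arnborg--Corneil--Proskurowski algorithm), root it, and convert it into a nice tree decomposition with leaf, introduce, forget, and join nodes while keeping the width $O(\tw)$ and the number of nodes $O(\tw \cdot n)$. The decisive structural fact I would isolate at the outset is that, since every member $L \in \List$ is isomorphic to a connected $H \in \mathcal H$ and is therefore connected, the set $T_L := \{i \in V(T) : X_i \cap V(L) \neq \emptyset\}$ induces a connected subtree of $T$. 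In particular $T_L$ has a unique node $t_L$ closest to the root, and because the selected subgraphs are pairwise vertex-disjoint, at most $|X_i| \le \tw + 1$ of them can meet any single bag $X_i$. This observation both bounds the table and underlies the correctness argument.

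For the dynamic program, at each node $i$ a \emph{state} is a map $\sigma \colon X_i \to \List \cup \{\bot\}$ that is \emph{valid} in the sense that $\sigma(v) \neq \bot \Rightarrow v \in V(\sigma(v))$ and, for every $L$ in the image of $\sigma$, $\sigma^{-1}(L) = X_i \cap V(L)$; that is, the footprint of an active copy on the bag is exactly its set of bag-vertices. Intuitively $\sigma(v) = L$ records that $v$ is covered by the chosen copy $L$, and the validity condition forbids selecting a copy only on part of its bag-footprint. Writing $V_i$ for the union of the bags in the subtree rooted at $i$, the entry $\mathrm{DP}[i, \sigma]$ stores the maximum number of selected copies already \emph{closed} inside that subtree, i.e.\ those $L$ with $V(L) \subseteq V_i$ and $X_i \cap V(L) = \emptyset$, over all partial packings consistent with $\sigma$ on the boundary. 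A valid $\sigma$ is determined by choosing, for each of the $\le \tw+1$ bag vertices, either $\bot$ or a member of $\List$ containing it, so there are at most $(|\List|+1)^{\tw+1}$ states per node; since $|\List| = n^{O(1)}$, the whole table has size $n^{O(\tw)}$.

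The transitions are routine to write down. Leaf nodes carry the empty state with value $0$; an introduce node copies the child value after checking consistency of the new vertex's label; a forget node of $v$ takes the best child state whose restriction matches $\sigma$, adding $1$ exactly when $v$ carried a label $L$ whose last bag-occurrence is being removed, which by the subtree property happens precisely at $t_L$ and certifies $V(L) \subseteq V_i$; a join node adds the two child values for the common state $\sigma$. Counting each closed copy once at its unique closing forget node is what prevents double counting, and at a join no closed copy is shared because a connected $T_L$ avoiding the join node lies entirely in one child's subtree.

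The main obstacle, and the step I would treat most carefully, is the soundness of this ``count-at-closure'' scheme, specifically proving that a copy is never counted unless it has been \emph{fully} covered. Here connectivity is essential: I would show that the set of nodes where a given copy $L$ is active forms a subtree that, once nonempty, must coincide with all of $T_L$. The reason is that the introduce and forget transitions preserve the labels of surviving bag vertices, so a copy cannot be ``dropped'' or ``re-activated'' in the interior of $T_L$ without violating validity; hence every vertex of $V(L)$ is labelled $L$ before $L$ closes at $t_L$. This is exactly where the hypothesis that $\mathcal H$ is connected is used, matching the fact that the problem is NP-complete for disconnected $H$ (\Cref{thm:H-packing-exception}). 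Together with the easy forward direction (an optimal packing induces a consistent sequence of valid states whose closure-counts sum to its size), this yields that $\mathrm{DP}[r, \emptyset]$ at the root $r$ equals the optimum, and the total running time is $n^{O(\tw)}$.
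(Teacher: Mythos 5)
Your proposal is correct and follows essentially the same route as the paper: a dynamic program over a (nice) tree decomposition whose states record which members of $\List$ are currently active on the bag, using connectedness to guarantee that every active copy meets the bag and hence that there are at most $\tw+1$ of them, giving $|\List|^{O(\tw)} = n^{O(\tw)}$ states. The paper only sketches this argument (indexing directly by the set of active subgraphs rather than by a labelling of bag vertices, which is an equivalent encoding), so your more detailed treatment of the closure/counting step is a faithful elaboration rather than a different proof.
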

We only sketch the proof of \Cref{thm:vertex:treewidth-xp} by giving a rough idea of a dynamic programming algorithm based on tree decompositions. (A similar idea is used in \cite{Scheffler94} for instance.)
Let $G$ be a graph with $n$ vertices.
We can find a tree decomposition $\mathcal T = (T, \{X_i : i \in V(T)\})$ of $G$ of width $\tw(G)$ in time $n^{O(\tw)}$ using the algorithm of \cite{ArnborgCP87}.
By taking an arbitrary node of $T$, we assume that $T$ is rooted.
For each bag $X_i$ of $\mathcal T$, we denote by $G_i$ the subgraph of $G$ induced by the vertices contained in $X_i$ or descendant bags of $X_i$.

Let $\List$ be a collection of connected subgraphs of $G$, each of which is isomorphic to some graph in $\mathcal H$.
For $i \in V(T)$, a \emph{partial $\mathcal H$-packing} of $G_i$ is a vertex-disjoint subcollection $\mathcal S$ of $\mathcal L_{\mathcal H}$, each of which contains at least one vertex of $G_i$.
A subgraph in a partial $\mathcal H$-packing is said to be \emph{active} if it has at least one vertex in $V(G) \setminus V(G_i)$.
A key observation to our dynamic programming algorithm is that each active subgraph in any partial $\mathcal H$-packing contains at least one vertex of $X_i$.
This follows from the fact that every graph in $\List$ is connected.
This implies that there are at most $|X_i|$ active subgraphs in any partial $\mathcal H$-packing of $G_i$.

To be slightly more precise, we define $\opt(i, \hat{\mathcal S})$ as the maximum cardinality of a partial $\mathcal H$-packing $\mathcal S$ such that $\hat{\mathcal S} \subseteq \mathcal S$ is the set of active subgraphs in it.
By a standard argument in dynamic programming over tree decompositions, we can compute $\opt(i, \hat{\mathcal S})$  in a bottom-up manner.
The total running time of computing all possible $\opt(i, \hat{\mathcal S})$ for $i$ and $\hat{\mathcal S}$ is upper bounded by $|\List|^{O(\tw)}n = n^{O(\tw)}$.

We would like to note that the connectivity of $\mathcal H$ is crucial as we have seen in \Cref{sec:bnddeg} that \textsc{Vertex Disjoint List $H$-packing} is NP-complete even if $G = nK_2$ and $H = 3K_2$.


The following theorems complement the positive result of \Cref{thm:vertex:treewidth-xp}.
\begin{theorem}\label{thm:vertex:treewidth-hardness-solsize}
    \textsc{Vertex Disjoint List $\mathcal{P}$-Packing} is W[1]-hard parameterized by $\pw + k$.
\end{theorem}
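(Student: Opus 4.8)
The plan is to give an FPT reduction from \textsc{Grid Tiling}, which is W[1]-hard parameterized by the grid dimension $t$. Recall that an instance consists of an integer $N$ and nonempty sets $S_{i,j} \subseteq [N] \times [N]$ for $(i,j) \in [t] \times [t]$, and one asks for choices $s_{i,j} = (x_{i,j}, y_{i,j}) \in S_{i,j}$ such that horizontally adjacent cells agree on the second coordinate ($y_{i,j} = y_{i,j+1}$) and vertically adjacent cells agree on the first coordinate ($x_{i,j} = x_{i+1,j}$). From such an instance I would build a graph $G$, a list $\List$ of paths (of varying lengths, which is allowed since $\mathcal P$ contains every $P_\ell$), and set $k = t^2$, so that a packing of $k$ vertex-disjoint paths from $\List$ exists if and only if the tiling instance is solvable. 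Since the target parameter will be $\pw(G) + k = O(t) + t^2$, a function of $t$ alone, this is a valid parameterized reduction.

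The construction has one \emph{cell gadget} per cell $(i,j)$ and one \emph{boundary gadget} per pair of orthogonally adjacent cells. The cell gadget contains a distinguished vertex $c_{i,j}$, and for every $s \in S_{i,j}$ the list $\List$ contains a path $P^{i,j}_s$ through $c_{i,j}$; since all paths of a cell share $c_{i,j}$, at most one of them enters any packing, and choosing $k = t^2$ forces exactly one path per cell. Each path $P^{i,j}_s$ additionally routes through the (up to four) boundary gadgets incident to its cell, where it records the relevant coordinate of $s$. The crux is the boundary gadget realizing the \emph{agreement} constraint as a \emph{conflict}: for a vertical boundary between the left cell (selecting $y$-value $b$) and the right cell (selecting $y$-value $b'$), I would install slot vertices $w_1,\dots,w_N$, let the left cell's path pass through \emph{all slots except} $w_b$ (using ``skip'' edges so that it stays a simple path), and let the right cell's path touch \emph{only} $w_{b'}$. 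The two paths then share a vertex precisely when $b \neq b'$, i.e.\ they conflict if and only if the cells disagree; horizontal boundaries do the same for the first coordinate. All slot and routing vertices are private to a cell or shared only by the two cells of a common boundary, so non-adjacent cells never conflict.

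Correctness is then routine in both directions: a valid tiling yields $t^2$ paths that are pairwise disjoint (same-cell paths are never both taken, adjacent-cell paths agree hence are disjoint, non-adjacent paths share no vertices), and conversely any packing of size $t^2$ must take exactly one path per cell, whose pairwise disjointness across boundaries forces the agreement conditions and hence a tiling.

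The step I expect to be the main obstacle is the pathwidth bound, which is what the parameterization really hinges on. The naive worry is that each boundary gadget has $\Theta(N)$ slot vertices, so a left-to-right sweep might carry $\Theta(tN)$ vertices and make $\pw(G)$ depend on $N$ rather than on $t$. I would avoid this by funnelling every cell's connection to each incident boundary through a constant number of connector vertices, so that each boundary gadget attaches to the rest of $G$ through only $O(1)$ terminals while itself having pathwidth $O(1)$ (its slot vertices can be swept one at a time, with the two terminals kept in the bag). The graph then behaves like the $t \times t$ grid skeleton with each edge replaced by a two-terminal, bounded-pathwidth gadget and each vertex by a bounded-size cell gadget; a column-by-column sweep keeps only $O(t)$ cell connectors plus $O(1)$ current-gadget vertices in any bag, giving $\pw(G) = O(t)$. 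Verifying that a single simple path can actually be routed through all four boundary gadgets of its cell while realizing the ``all-but-one'' and ``only-one'' patterns is the other fiddly point, which I would handle by laying the four boundary interfaces and the center out in a line and supplying the connecting edges explicitly.
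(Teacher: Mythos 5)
Your proposal is correct in approach but takes a genuinely different route from the paper. The paper reduces from \textsc{Multicolored Independent Set}: it builds a layered graph $V'_0, V'_1, \ldots, V'_m$ with one layer per edge $e_j$ of the MIS instance, each layer of size $t+1$ and consecutive layers completely joined (so the bags $V'_{j-1} \cup V'_j$ immediately give pathwidth $\le 2t+1$); each original vertex $v \in V_i$ becomes a single path $P_v$ threading all layers, using the private vertex $v_{i,j}$ in layer $j$ when $e_j$ is not incident to $v$ and the shared collision vertex $w_j$ when it is, so two paths intersect exactly when their vertices are adjacent, and $k = t$. Your \textsc{Grid Tiling} reduction is also a valid parameterized reduction ($\pw + k = O(t) + t^2$ is a function of $t$), and the all-but-one-slot versus only-one-slot trick does correctly convert the agreement constraint into vertex-disjointness; your pathwidth argument (gadgets of pathwidth $O(1)$ hanging off a $t \times t$ grid skeleton, swept column by column) is sound, though you would still need to write out the simple-path routing through the four boundary gadgets and the terminal structure explicitly. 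The trade-off is that the paper's construction needs no gadget engineering at all --- the conflict structure and the pathwidth bound fall out of the layering for free, and as a bonus all listed paths have the same length $m+1$ --- whereas your grid-based construction is heavier but sits in a framework that adapts more readily to geometric or planarity-constrained variants.
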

\begin{proof}
    The proof is done by showing a parameterized reduction from \textsc{Multicolored Independent Set}, which is known to be W[1]-complete~\cite{FellowsHRV09,CyganFKLMPPS15}.
    In \textsc{Multicolored Independent Set}, we are given a graph $G$ with a partition of vertex set $V(G) = V_1 \cup V_2 \cup \cdots \cup V_t$ and asked whether $G$ has an independent set $S$ of size $t$ that contains exactly one vertex from $V_i$ for each $1 \le i \le t$. 

    From an instance of \textsc{Multicolored Independent Set} $G$ with $V(G) = V_1 \cup \cdots \cup V_t$, we construct a graph $G'$ as follows.
    Let $m = |E(G)|$.
    The vertex set of $G'$ consists of $m + 1$ vertex sets $V'_0 \cup V'_1 \cup \cdots \cup V'_m$, where each set $V'_i$ with $i \ge 1$ is associated with an edge $e_i$ of $G$.
    The vertex set $V'_0$ contains $t$ vertices $v_1, \ldots, v_t$ and other sets $V'_i$ for $i \ge 1$ contain $t + 1$ vertices $v_{i,1}, \ldots, v_{i,t}, w_{i}$.
    For $1 \le i \le m$, we add an edge between every pair of vertices $v \in V_{i-1}$ and $v' \in V_{i}$.
    The graph obtained in this way is denoted by $G'$.
    Observe that the pathwidth of $G'$ is at most $2t + 1$.
    This can be seen by observing that $G'$ has a path decomposition consisting of bags $X_i \coloneqq V'_{i - 1} \cup V'_{i}$ for $1 \le i < m$, which has width at most $2t + 1$.
    For each vertex $v \in V_i$, we define a path $P_v$ in $G'$ that consists of (1) $v_i$ and (2) for all $1 \le j \le m$, $v_{i, j}$ if $e_j$ is not incident to $v$ and $w_i$ otherwise.
    As $P_v$ contains exactly one vertex from each $V'_i$, $P_v$ forms a path in $G'$.
    See~\Cref{fig:red-path-tw} for an illustration.
    \begin{figure}[t]
        \centering
        \includegraphics{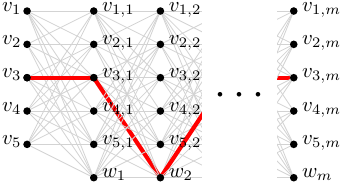}
        \caption{The figure illustrates the graph $G'$. The red thick lines indicate the edges of path $P_v$ for $v \in V_3$. The edge $e_1$ is not incident to $v$ but $e_2$ is incident to $v$.}
        \label{fig:red-path-tw}
    \end{figure}
    Let $\List = \{P_v : v \in V(G)\}$.
    Now, we claim that $G$ has an independent set $S$ that contains exactly one vertex from each $V_i$ if and only if there is a $P_{m+1}$-packing $\mathcal S \subseteq \List$ with $|\mathcal S| \ge t$ in $G'$.

    Suppose that $G$ has an independent set $S$ with $|S \cap V_i| = 1$ for $1 \le i \le t$.
    Let $\mathcal S = \{P_v : v \in S\}$.
    For distinct paths $P_v, P_{v'} \in \mathcal S$ with $v \in V_i$ and $v' \in V_{i'}$, suppose that they share a vertex $u$.
    As $i \neq i'$, the vertex $u$ is contained in some $V_j$ as $u = w_j$.
    This implies that $v$ is adjacent to $v'$, contradicting to the fact that $S$ is an independent set of $G$.

    For the converse direction, suppose that there is a $P_{m+1}$-packing $\mathcal S \subseteq \List$ with $|\mathcal S| \ge t$ in $G'$.
    Let $S = \{v \in V(G) : P_v \in \mathcal S\}$.
    Since each subgraph contains exactly one vertex from $V_0$, we have $|\mathcal S| = t$.
    Moreover, $S$ contains exactly one vertex from each $V_i$.
    If there are two vertices in $S$ that are adjacent (by an edge $e_j$) to each other, the corresponding paths in $\mathcal S$ share the vertex $w_{j}$ in $V_j$.
    Hence, $\mathcal S$ is an independent set of $G$.
\end{proof}

\begin{theorem}\label{thm:vertex:treewidth-hardness-solsize_cycle}
    \textsc{Vertex Disjoint List $\mathcal{C}$-Packing} is W[1]-hard parameterized by $\pw + k$.
\end{theorem}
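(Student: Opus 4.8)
The plan is to adapt almost verbatim the reduction from \Cref{thm:vertex:treewidth-hardness-solsize}, turning each path $P_v$ into a cycle by closing it up with a single extra layer of edges. As before, I would reduce from \textsc{Multicolored Independent Set} with vertex partition $V(G) = V_1 \cup \cdots \cup V_t$ and edge set $\{e_1, \ldots, e_m\}$, set the solution-size target to $k = t$, and assume without loss of generality that $m \ge 2$ (smaller instances are trivial).

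First I would rebuild the layered graph $G'$ with the same $m+1$ layers $V'_0, V'_1, \ldots, V'_m$ and the same complete-bipartite connections between consecutive layers, where $V'_0 = \{v_1, \ldots, v_t\}$ and each $V'_j$ for $j \ge 1$ consists of the private vertices $v_{j,1}, \ldots, v_{j,t}$ together with one shared vertex $w_j$. The only structural change is to add a complete bipartite graph between the last layer $V'_m$ and the first layer $V'_0$. For each $v \in V_i$ I would then define the cycle $C_v$ to use $v_i$ in layer $0$, in each layer $j \ge 1$ the vertex $v_{j,i}$ if $e_j$ is not incident to $v$ and the shared vertex $w_j$ otherwise, and finally to close up via a new wrap-around edge back to $v_i$. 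Since $C_v$ picks exactly one vertex from each of the $m+1$ distinct layers and cyclically consecutive layers are fully connected, $C_v$ is a $C_{m+1}$, and I would set $\List = \{C_v : v \in V(G)\}$. Because $\List$ is fixed explicitly, the extra edges introduced by the wrap-around cannot create spurious members of the list, so they are harmless for the vertex-disjointness analysis.

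The collision analysis is identical to the path case and is the heart of the correctness argument. Two cycles $C_v, C_{v'}$ with $v \in V_i$, $v' \in V_{i'}$ and $i \ne i'$ use distinct layer-$0$ vertices, and in every layer $j \ge 1$ their private vertices $v_{j,i}, v_{j,i'}$ differ; hence they can share a vertex only at some $w_j$, which happens precisely when both $v$ and $v'$ are incident to $e_j$, i.e.\ when $e_j = vv'$ is an edge of $G$. Thus two distinct-color cycles in $\List$ are vertex-disjoint if and only if the corresponding vertices are non-adjacent, while any two same-color cycles share their layer-$0$ vertex. Consequently a vertex-disjoint subfamily of size at least $t$ must use all $t$ vertices of $V'_0$, one per color, and corresponds exactly to a multicolored independent set of size $t$; the converse direction is immediate. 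This yields the desired equivalence.

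The one point that needs genuine care — and the main obstacle compared to the path construction — is controlling the pathwidth after inserting the wrap-around edges, since these edges join the two ends of the layer chain and the naive linear sweep no longer gives a valid path decomposition. The fix is to keep the entire first layer $V'_0$ inside every bag. Taking bags $X_j \coloneqq V'_0 \cup V'_{j-1} \cup V'_j$ for $1 \le j \le m$ covers every edge (the consecutive-layer edges lie in some $X_j$, and the wrap-around edges are incident to $V'_0 \subseteq X_m$ together with $V'_m \subseteq X_m$) and satisfies the interpolation property, since each vertex of $V'_0$ sits in all bags and each vertex of $V'_j$ with $j \ge 1$ appears only in $X_j$ and $X_{j+1}$. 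This gives $\pw(G') \le |V'_0| + 2\max_j |V'_j| - 1 = O(t)$, so that both $\pw(G')$ and $k = t$ are bounded by a function of $t$, which completes the parameterized reduction.
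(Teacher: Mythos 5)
Your proposal is correct and follows essentially the same route as the paper: add a complete bipartite connection between the last layer and layer $0$, close each path $P_v$ into a cycle $C_v$, reuse the collision analysis verbatim, and repair the path decomposition by keeping one full layer in every bag (you keep $V'_0$, the paper keeps $V'_m$ — an immaterial difference, both giving width $O(t)$).
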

\begin{proof}
    This proof is done by modifying the instance constructed in the proof of \Cref{thm:vertex:treewidth-hardness-solsize}.
    For the graph $G'$, we additionally give  an edge between every pair of vertices $v \in V'_0$ and $v' \in \cup V'_{m}$.
    Observe that there exists a path decomposition consisting of bags $X_i \coloneqq V'_{i - 1} \cup V'_{i} \cup V'_{m}$ for $1 \le i < m$, which has width at most $3t + 2$.
    For each vertex $v \in V_i$, we define a cycle $C_v$ in $G'$ that consists of (1) $v_i$ and (2) for all $1 \le j \le m$, $v_{i, j}$ if $e_j$ is not incident to $v$ and $w_i$ otherwise.
    The remaining part of the proof is identical to that in \Cref{thm:vertex:treewidth-hardness-solsize}.
\end{proof}

\section{\textsc{Edge Disjoint List \texorpdfstring{$\mathcal H$}{}-Packing} on series-parallel graphs}
\label{sec:series-parallel}
This section is devoted to showing several positive and negative results on \emph{series-parallel graphs}.
The class of series-parallel graphs is a well-studied class of graphs and is equivalent to the class of graphs of treewidth at most $2$.

A \emph{two-terminal labeled graph} is a graph $G$ with distinguished two vertices called a \emph{source} $s$ and a \emph{sink} $t$.
Let $G_1 = (V_1, E_1)$ (resp.\ $G_2 = (V_2, E_2)$) be a two-terminal labeled graph with a source $s_1$ and a sink $t_1$ (resp.\ a source $s_2$ and a sink $t_2$).
A \emph{series composition} of $G_1$ and $G_2$ is an operation that produces the two-terminal labeled graph with a source $s$ and a sink $t$ obtained from $G_1$ and $G_2$ by identifying $t_1$ and $s_2$, where $s = s_1$ and $t = t_2$.
A \emph{parallel composition} of $G_1$ and $G_2$ is an operation that produces the two-terminal labeled graph with a source $s$ and a sink $t$ obtained from $G_1$ and $G_2$ by identifying $s_1$ and $s_2$, and identifying $t_1$ and $t_2$, where $s = s_1 (= s_2)$ and $t = t_1 (= t_2)$.
We denote $G = G_1 \sericom G_2$ if $G$ is created by a series composition of $G_1$ and $G_2$, and denote $G = G_1 \paracom G_2$ if $G$ is created by a parallel composition of $G_1$ and $G_2$.
We say that a two-terminal labeled graph $G$ is a \emph{two-terminal series-parallel graph} if one of the following conditions is satisfied: (i) $G = K_2$ with a source $s$ and a sink $t$; (ii) $G = G_1 \sericom G_2$ for two-terminal series-parallel graphs $G_1$ and $G_2$; or (iii) $G = G_1 \paracom G_2$ for two-terminal series-parallel graphs $G_1$ and $G_2$. 

We say that a graph $G$ (without a source and a sink) is a \emph{series-parallel graph} if each biconnected component is a two-terminal series-parallel graph by regarding some two vertices as a source and a sink\footnote{Some papers refer to a two-terminal series-parallel graph simply as a series-parallel graph. In this paper, we distinguish them explicitly to avoid confusion.}.

\subsection{Hardness}



A graph $G$ is \emph{outerplanar} if it has a planar embedding such that every vertex of $G$ meets the unbounded face of the embedding.
Every outerplanar graph is series-parallel but may not be two-terminal series-parallel.
The following two theorems indicate that \textsc{Edge Disjoint List $\mathcal{H}$-Packing} remains intractable even when a given graph is highly restricted.

\begin{theorem}\label{thm:w1-hardness_path}
    \textsc{Edge Disjoint List $\mathcal{P}$-Packing} parameterized by $\bandw{G}+k$ is W[1]-hard even for outerplanar and two-terminal series-parallel graphs, where $k$ is a solution size.
\end{theorem}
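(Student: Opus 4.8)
The plan is to give a parameterized reduction from \textsc{Multicolored Clique} (or equivalently \textsc{Multicolored Independent Set}), which is W[1]-hard parameterized by the number of color classes, mirroring the strategy already used in \Cref{thm:vertex:treewidth-hardness-solsize}. The crucial difference is that here we must realize the ``conflict'' between two chosen elements as an \emph{edge}-sharing event rather than a vertex-sharing event, and we must arrange the host graph so that it is simultaneously outerplanar, two-terminal series-parallel, and of \emph{bounded bandwidth} (bounded by a function of the number of color classes $t$ alone). Since bandwidth upper-bounds pathwidth which upper-bounds treewidth, controlling the bandwidth is the strongest and hence most delicate requirement; this is where I expect the main obstacle to lie.

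Concretely, given a \textsc{Multicolored Independent Set} instance with classes $V_1,\dots,V_t$ and $m$ edges $e_1,\dots,e_m$, I would build a graph laid out in $m+1$ consecutive ``columns,'' one per edge (plus an initial column), exactly as in \Cref{thm:vertex:treewidth-hardness-solsize}, but replace each path $P_v$ by a path whose consecutive edges are dedicated \emph{gadget edges} so that two paths $P_v,P_{v'}$ are forced to share an \emph{edge} precisely when $v,v'$ are the two endpoints of some edge $e_j$ of $G$. The natural way to force edge-sharing is to route both $P_v$ and $P_{v'}$ through a common ``conflict edge'' in column $j$; to keep the graph simple (no parallel edges) and series-parallel, each column would be a small parallel composition of short two-terminal pieces, and the whole graph would be the series composition of these columns, which guarantees two-terminal series-parallelism by construction and, with care in drawing, outerplanarity. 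I would set $\List=\{P_v:v\in V(G)\}$ and $k=t$, and claim: $G$ has a multicolored independent set of size $t$ iff $G'$ admits $t$ pairwise edge-disjoint members of $\List$.

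The correctness argument splits into the two standard directions. For the forward direction, an independent set $S$ with one vertex per class yields the paths $\{P_v:v\in S\}$; since no $e_j$ has both endpoints in $S$, no conflict edge is used twice, so the paths are edge-disjoint. For the converse, any edge-disjoint subcollection of size $t$ must (by the column structure forcing at most one path per class to survive, together with the initial column acting as a selector) pick exactly one vertex from each class, and edge-disjointness exactly forbids selecting both endpoints of any $e_j$, giving an independent set. The bookkeeping here is essentially the same as in \Cref{thm:vertex:treewidth-hardness-solsize}, so I would present it briefly.

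The technical heart is the \textbf{bandwidth bound}. Because each gadget column has size bounded by a function of $t$ (a constant number of vertices per color-pair interaction, and at most $\binom{t}{2}$ relevant pairs per edge), a left-to-right linear layout that orders the columns in their series order and, within a column, orders the few gadget vertices arbitrarily, should give bandwidth $O(t)$ or at worst a fixed polynomial in $t$: every edge of $G'$ either stays inside one column or connects adjacent columns, so its endpoints are within $O(\text{column size})$ positions of each other. I would verify this explicitly by exhibiting the layout and bounding $|\pi(u)-\pi(v)|$ for each edge type. Establishing outerplanarity simultaneously with this clean layout is the subtle point: I would argue outerplanarity from the series-parallel decomposition by checking that the construction avoids $K_4$ and $K_{2,3}$ minors, or more directly by describing an explicit outerplanar embedding in which all gadget vertices lie on the outer face. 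Provided the per-column gadgets are kept small and simple, these structural properties follow, and combined with the correctness equivalence they yield W[1]-hardness parameterized by $\bw(G')+k = O(t)+t = O(t)$.
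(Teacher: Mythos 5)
Your high-level strategy is the same as the paper's: reduce from \textsc{Multicolored Independent Set}, build one ``column'' gadget per edge $e_j$ of the source graph, concatenate the columns in series, route one list path $P_v$ per vertex $v$ through all columns, and realize the conflict for $e_j=uv$ by forcing $P_u$ and $P_v$ onto a common edge in column $j$; the bandwidth bound then comes from bounding the column size by a function of $t$. The forward/backward correctness argument you sketch also matches the paper's Lemma~\ref{lem:reduction_w1-hard_path} (including the use of the clique assumption on each $V_i$ to force one path per class).

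However, there is a genuine gap at exactly the point you flag as ``the subtle point'': the concrete column gadget. Inside column $j$ you need $t+1$ pairwise edge-disjoint source-to-sink routes (one route per color class for the vertices not incident to $e_j$, plus the shared conflict route for the two endpoints of $e_j$), since any two non-adjacent vertices lie in different classes and their paths must be edge-disjoint in \emph{every} column. Your proposed realization --- ``a small parallel composition of short two-terminal pieces'' --- cannot deliver this while staying outerplanar: to avoid parallel edges each piece must have length at least $2$, and a parallel composition of three or more internally disjoint source-to-sink paths of length at least $2$ contains $K_{2,3}$ as a minor, so the construction stops being outerplanar already for $t\ge 2$. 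The paper's resolution is the missing idea: column $j$ is built on a baseline of $2^t+1$ vertices $u_j^0,\dots,u_j^{2^t}$, and the $t+1$ routes are the \emph{levels} $E_j^0,\dots,E_j^t$ of a laminar family of arcs (level $i$ consists of the $2^i$ arcs joining consecutive multiples of $2^{t-i}$). Distinct levels are automatically edge-disjoint, the level-$0$ arc $u_j^0u_{j+1}^0$ serves as the conflict edge, the laminar nesting yields an embedding with all vertices on the outer face (hence outerplanarity and two-terminal series-parallelism without any genuine parallel branching), and the natural left-to-right layout gives bandwidth $2^t$ rather than the $O(t)$ you predict --- still a function of $t$ alone, which is all the parameterized reduction needs. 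Without this (or an equivalent) gadget, your outerplanarity claim fails and the proof does not go through.
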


We prove Theorem~\ref{thm:w1-hardness_path} by giving a parameterized reduction from \textsc{Multicolored Independent Set} to \textsc{Edge Disjoint List $\mathcal{P}$-Packing}.
Recall that, in \textsc{Multicolored Independent Set}, we are given a graph $G^\prime$ with a partition of its vertex set $V(G^\prime) = V_1 \cup V_2 \cup \cdots \cup V_t$.
We may assume that $V_i$ forms a clique of $G^\prime$.

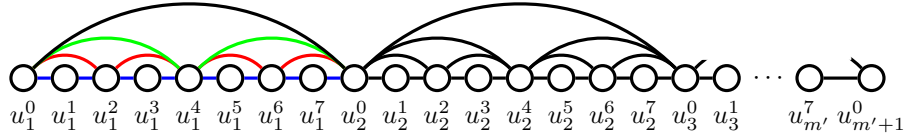
\begin{figure}[h]
	\centering
	\begin{tikzpicture}
		\newcommand{\edgelength}{0.55}
		\newcounter{pathnode}
		\newcounter{pathedge}
		\newcounter{tempvalue}
		\setcounter{pathnode}{8}
		\setcounter{pathedge}{\value{pathnode}}
        \addtocounter{pathnode}{-1}
		\addtocounter{pathedge}{-2}
		
		\foreach \x in {0,...,\value{pathnode}} \node[draw, circle, very thick] at (\x*\edgelength, 0)[label = below:$u_1^{\x}$] (u_1^\x){};
		\foreach \x in {0,...,\value{pathnode}} \node[draw, circle, very thick] at (\x*\edgelength+8*\edgelength, 0)[label = below:$u_2^{\x}$] (u_2^\x){};
		\foreach \x in {0,1} \node[draw, circle, very thick] at (\x*\edgelength+16*\edgelength, 0)[label = below:$u_3^{\x}$] (u_3^\x){};
		\node[circle, very thick] at (18*\edgelength, 0) (u_3^2)[]{$\mathbb{\cdots}$};
		\node[draw, circle, very thick] at (19*\edgelength, 0)[label = below:$u_{m^\prime}^{7}$] (u_m^7){};
		\node[draw, circle, very thick] at (20.5*\edgelength, 0)[label = below:$u_{m^\prime+1}^0$] (u_m+1^0){};
		
		\foreach \x in {0,...,\value{pathedge}} {
			\setcounter{tempvalue}{\x};
			\addtocounter{tempvalue}{1};
			\draw [very thick,blue] (u_1^\x) -- (u_1^\arabic{tempvalue});
			\draw [very thick] (u_2^\x) -- (u_2^\arabic{tempvalue});
		}
		\draw [very thick,blue] (u_1^7) -- (u_2^0);
		\draw [very thick] (u_2^7) -- (u_3^0) -- (u_3^1);
        \draw [very thick] (u_m^7) -- (u_m+1^0);
		
		\foreach \x in {0, 2, 4} {
			\setcounter{tempvalue}{\x};
			\addtocounter{tempvalue}{2};
			\draw [very thick,red] (u_1^\x) to [out = 45, in = 135] (u_1^\arabic{tempvalue});
			\draw [very thick] (u_2^\x) to [out = 45, in = 135] (u_2^\arabic{tempvalue});
		}
		\draw [very thick,red] (u_1^6) to [out = 45, in = 135] (u_2^0);
		\draw [very thick] (u_2^6) to [out = 45, in = 135] (u_3^0);
		
		\draw [very thick,green] (u_1^0) to [out = 45, in = 135] (u_1^4);
		\draw [very thick,green] (u_1^4) to [out = 45, in = 135] (u_2^0);
		\draw [very thick] (u_1^0) to [out = 45, in = 135] (u_2^0);
		
		\draw [very thick] (u_2^0) to [out = 45, in = 135] (u_2^4);
		\draw [very thick] (u_2^4) to [out = 45, in = 135] (u_3^0);
		\draw [very thick] (u_2^0) to [out = 45, in = 135] (u_3^0);
		
		\draw [very thick] (u_3^0) to [out = 45, in = 135] (u_m+1^0);
		
		\filldraw [white, fill=white] ($(u_3^2) + (-1.1,1)$) rectangle ($(u_3^2) + (1.1,0.25)$);
	
	\end{tikzpicture}
	\caption{The figure partially illustrates the graph $G$ for $t=3$.  The edges in $E_1^1$, in $E_1^2$, and in $E_1^3$ are depicted as green segments, red segments, and blue segments, respectively.}
	\label{fig:hardness_PSP_outerplanar}
\end{figure}

From an instance of \textsc{Multicolored Independent Set}, we construct an instance $(G, t, \List)$ of \textsc{Edge Disjoint List $\mathcal{P}$-Packing}.
To this end, we construct a gadget $G_i$ for each $i\in \intsec{m^\prime}$, where $m^\prime = |E(G^\prime)|$. 
(See also \figureautorefname~\ref{fig:hardness_PSP_outerplanar}.)
The vertex set of $G_i = (W_i, E_i)$ is defined as $W_i = \{u_i^j : j \in \intsecz{2^{t}} \}$.
For $j \in \intsecz{t}$ and $p \in \intsecz{2^{j}}$, let denote $\landnum{j}{p} = 2^{t-j}p$.
Then we define $E_i^j = \{u_i^{\landnum{j}{p-1}} u_i^{\landnum{j}{p}} : p \in \intsec{2^{j}}\}$.
In other words, $E_i^j$ consists of all edges in the path starting from $u_i^0$ to $u_i^q$ with $q = 2^{t}$ that contains $u_i^{r}$ for $r = 2^{t-j}, 2\cdot 2^{t-j}, 3\cdot 2^{t-j}, \ldots, (2^{j}-1)2^{t-j}$ as internal vertices.
We define the edge set of $G_i$ as $E_i = \bigcup_{j \in \intsecz{t}} E_i^j$.
After constructing $G_1, \ldots, G_{m^\prime}$, we concatenate the graphs by identifying $u_i^q$ of $G_i$ and $u_{i+1}^0$ of $G_{i+1}$ for every $i\in \intsec{m^\prime-1}$.
For notational convenience, we relabel a vertex $u_i^q$ as $u_{i+1}^0$ for every $i \in \intsec{m^\prime}$.
We define the graph constructed above as $G$.

Next, we construct a collection $\List$ of paths in $G$.
Suppose that $E(G^\prime) = \{ e_1, e_2, \ldots, e_{m^\prime} \}$.
For each $v \in V(G^\prime)$, a path $P_v$ of $G$ is defined as follows.
Suppose that $v \in V_j$.
The path $P_v$ starts at $u_1^0$ and ends at $u_{m^\prime+1}^0$.
For each $i \in \intsec{m^\prime}$, $P_v$ passes through $u_i^0 u_{i+1}^0$ if $e_i$ is incident to $v$; otherwise, $P_v$ passes through all edges in $E_i^j$.
For example, consider $t = 3$ and a vertex $v \in V_2$ such that only edges $e_1, e_4$ are incident to $v$. Then, 
\begin{align*}
	P_v = (u_1^0, u_2^0, u_2^2, u_2^4, u_2^6, u_3^0, u_3^2, u_3^4, u_3^6, u_4^0, u_5^0, u_5^2, \ldots, u_{m^\prime}^6, u_{m^\prime+1}^0).
\end{align*}
We define $\List = \{ P_v : v \in V(G^\prime) \}$.
This completes the construction of the instance $(G, t, \List)$.
Let denote $n = |V(G)|$ and $m = |E(G)|$.

Observe that $G$ is an outerplanar graph and a two-terminal series-parallel graph.
For $n^\prime = |V(G^\prime)|$ and $m^\prime = |E(G^\prime)|$, we have $n = 2^{t} m^\prime + 1$, $m = (2^{t+1} - 1)m^\prime$, and $|\List| = n^\prime$.
Thus, the construction of $(G, k, \List)$ is completed in $O(2^{t} n^\prime m^\prime)$ time.
Moreover, $\bandw{G} \le 2^{t}$ holds: consider the linear layout $\pi(u_i^j) = 2^{t}(i-1) + j+1$ for $i\in \intsec{m^\prime}$ and $j \in \intsec{2^t-1}\cup \{0\}$, which is the same as the ordering of the vertices depicted in \figureautorefname~\ref{fig:hardness_PSP_outerplanar}.
The following lemma completes the proof of \Cref{thm:w1-hardness_path}.

\begin{lemma} \label{lem:reduction_w1-hard_path}
    The instance $(G^\prime, t, \{ V_1, V_2, \ldots, V_{t} \})$ of \textsc{Multicolored Independent Set} is a yes-instance if and only if the instance $(G, t, \List)$ of \textsc{Edge Disjoint List $\mathcal{P}$-Packing} is a yes-instance.
\end{lemma}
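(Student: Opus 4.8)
The plan is to reduce everything to one structural characterization and then read off both directions mechanically. Concretely, I would prove: for two distinct vertices $v \in V_j$ and $v' \in V_{j'}$ of $G'$, the paths $P_v$ and $P_{v'}$ are edge-disjoint in $G$ if and only if $j \neq j'$ and $v, v'$ are non-adjacent in $G'$. The bulk of the work is establishing this; once it is available, the equivalence of the two instances is almost immediate.

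The core is a gadget-local analysis. Since distinct gadgets share only the identified terminal vertices and no edges, every edge of $G$ lies in exactly one gadget $G_i$, so $P_v$ and $P_{v'}$ share an edge if and only if they share an edge inside some single $G_i$. Within $G_i$, by construction $P_v$ traverses exactly the edges of $E_i^0$ when $e_i$ is incident to $v$, and exactly the edges of $E_i^j$ otherwise; $P_{v'}$ behaves identically with $j'$ in place of $j$. The key preliminary fact I would nail down first is that the level sets $E_i^a$ and $E_i^b$ are edge-disjoint whenever $a \neq b$: an edge of $E_i^a$ joins two vertices whose indices differ by exactly $2^{t-a}$, so edges drawn from different levels join vertices at different index-distances and cannot coincide (this observation also certifies that $G$ is simple, as the construction implicitly requires). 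I expect this bookkeeping to be the main obstacle, though it is elementary.

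Combining these observations, $P_v$ and $P_{v'}$ share an edge inside $G_i$ precisely when they use the \emph{same} level there. Since colors lie in $\{1,\dots,t\}$ while level $0$ is used only in the incident case, coincidence can happen in only two ways: (a) both use level $0$, which forces $e_i$ incident to both $v$ and $v'$, i.e.\ $e_i = \{v,v'\}$; or (b) both use a common positive level, which forces $j = j'$. For same-colored $v, v'$, the clique assumption on $V_j$ guarantees an edge $e_i = \{v,v'\}$, at which both paths take the level-$0$ edge, so same-colored paths always conflict; for differently-colored $v, v'$, case (b) is impossible and case (a) occurs exactly when $v$ and $v'$ are adjacent, giving the characterization.

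Finally I would deduce the lemma. For the forward direction, a multicolored independent set $S = \{v_1,\dots,v_t\}$ with $v_j \in V_j$ yields paths $\{P_{v_j}\}$ that are pairwise differently-colored and pairwise non-adjacent, hence pairwise edge-disjoint by the characterization, so $\{P_{v_j} : j \in [t]\} \subseteq \List$ is a packing of size $t$. For the backward direction, from a packing $\mathcal S \subseteq \List$ with $|\mathcal S| \ge t$ I set $S = \{v : P_v \in \mathcal S\}$; same-colored paths conflict, so $S$ meets each of the $t$ color classes at most once, whence $|S| \ge t$ forces exactly one vertex per color, and pairwise edge-disjointness forces pairwise non-adjacency, so $S$ is a multicolored independent set. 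Everything after the gadget-local step is routine.
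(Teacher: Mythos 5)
Your proof is correct and takes essentially the same approach as the paper's: the gadget-local analysis that $P_v$ uses $E_i^0$ or $E_i^{j}$ inside $G_i$, together with the pairwise edge-disjointness of the level sets $E_i^a$ for distinct $a$, is exactly the paper's argument, merely repackaged as a single if-and-only-if characterization of when two paths conflict. The only cosmetic difference is that the paper's direction from a packing to an independent set looks only at the level-$0$ edges $u_i^0u_{i+1}^0$, a special case subsumed by your characterization.
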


\begin{proof}
    We first prove the necessity. 
    For a solution $\mathcal{S}$ of a yes-instance $(G, t, \List)$ of \textsc{Edge Disjoint List $\mathcal{P}$-Packing}, let $I = \{ v \in V(G^\prime) : P_v \in \mathcal{S}\}$.
    Since $\mathcal{S}$ are mutually edge-disjoint on $G$, at most one path in $\mathcal{S}$ passes through $u_{i}^0 u_{i+1}^0$ for each $i \in \intsec{m^\prime}$.
    This implies that at most one of the endpoints of $e_i \in E(G^\prime)$ is in $I$, that is, $I$ is an independent set of $G^\prime$.
    Moreover, we have $|I| \ge t$.
    Since $V_j$ is a clique of $G^\prime$, $|V_j \cap I| = 1$ holds for every $j \in \intsec{t}$. 
    Thus, $(G^\prime, t, \{ V_1, V_2, \ldots, V_{t} \})$ is a yes-instance of \textsc{Multicolored Independent Set}.

    We next prove the sufficiency.
    For a solution $I \subseteq V(G^\prime)$ of a yes-instance $(G^\prime, t, \{ V_1, V_2, \ldots, V_{t} \})$ of \textsc{Multicolored Independent Set}, let $\mathcal{S} = \{ P_v : v \in I \}$.
    Since $|\mathcal{S}| = t $, it suffices to show that $\mathcal{S}$ is mutually edge-disjoint.
    Let $u, v \in I$. Suppose that $u \in V_{j}$ and $v \in V_{j^\prime}$.
    Since $|V_j \cap I| = 1$ for every $j \in \intsec{t}$, $j \neq j'$ holds.
    For paths $P_{u}, P_{v} \in \mathcal{S}$ corresponding to $u, v \in I$ and $i \in \intsec{m^\prime}$, we denote by $P_{u,i}$ and $P_{v,i}$, the subpaths of $P_{u}$ and $P_{v}$ that start at $u_i^0$ and end at $u_{i+1}^0$, respectively.
    In the remainder of this proof, we show that $P_{u,i}$ and $P_{v,i}$ are edge-disjoint for every $i \in \intsec{m^\prime}$, meaning that $P_{u}$ and $P_{v}$ are also edge-disjoint.
    Suppose that $e_i$ is incident to neither $u$ nor $v$.
    Then, it holds that $E(P_{u,i}) = E_i^{j}$ and $E(P_{v,i}) = E_i^{j'}$. 
    Since $j \neq j'$, we have $E_i^{j} \cap E_i^{j} = \emptyset$.
    Suppose otherwise that $e_i$ is incident to $u$ or $v$, say $u$.
    This implies that $e_i$ is not incident to $v$ as $u,v \in I$.
    Therefore, we have $E(P_{u}) = E_i^{0} $ and $E(P_{v,i}) = E_i^{j'}$.
    Since $j' \in \intsec{t}$, $P_{u,i}$ and $P_{v,i}$ are edge-disjoint. 
\end{proof}

By a similar proof, we can show the W[1]-hardness of \textsc{Edge Disjoint List $\mathcal{C}$-Packing}.

\begin{theorem}\label{thm:w1-hardness_cycle}
    \textsc{Edge Disjoint List $\mathcal{C}$-Packing} parameterized by $\bandw{G}+k$ is W[1]-hard even for outerplanar and two-terminal series-parallel graphs, where $k$ is a solution size.
\end{theorem}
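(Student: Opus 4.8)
The plan is to adapt the reduction from \Cref{thm:w1-hardness_path} almost verbatim, replacing paths by cycles. Just as before, I would reduce from \textsc{Multicolored Independent Set} with vertex partition $V(G') = V_1 \cup \cdots \cup V_t$, assuming each $V_j$ is a clique. The key structural change is that the solution objects must now be cycles rather than $s$--$t$ paths, so I need to \emph{close} each path $P_v$ into a cycle. The natural way to do this is to augment the graph $G$ from the path reduction by adding a single ``return'' path (or edge) connecting the global sink $u_{m'+1}^0$ back to the global source $u_1^0$, and then define, for each $v \in V(G')$, a cycle $C_v$ consisting of the original path $P_v$ together with this common return segment.

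The first step is to make the return segment itself not interfere with edge-disjointness. If every cycle $C_v$ reuses the \emph{same} return edges, then any two cycles $C_u, C_v$ would automatically share those edges, destroying the reduction. So the fix is to give each color class $j \in \intsec{t}$ its own private return path from $u_{m'+1}^0$ to $u_1^0$, using $t$ internally-disjoint parallel paths (this keeps the graph two-terminal series-parallel, since parallel composition is exactly the operation that produces such multiple routes). A vertex $v \in V_j$ then uses the $j$-th return path to close its cycle. Because a solution picks at most one vertex from each $V_j$ (indeed exactly one, by the clique structure), no two selected cycles share a return path, and the return segments contribute no conflicts. The forward portions $P_{u,i}, P_{v,i}$ remain edge-disjoint by exactly the argument in \Cref{lem:reduction_w1-hard_path}: when $e_i$ is incident to neither endpoint the two subpaths use $E_i^j$ and $E_i^{j'}$ with $j \neq j'$, and when $e_i$ is incident to one of them that path uses the direct edge $E_i^0 = \{u_i^0 u_{i+1}^0\}$.

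With this construction, the correctness proof mirrors \Cref{lem:reduction_w1-hard_path} line by line. For necessity, given an edge-disjoint cycle packing of size $\ge t$ from $\List = \{C_v : v \in V(G')\}$, the set $I = \{v : C_v \in \mathcal{S}\}$ is independent because two adjacent vertices $u,v$ joined by $e_i$ would force both $C_u$ and $C_v$ through the edge $u_i^0 u_{i+1}^0$; and $|I| \ge t$ together with the clique structure of each $V_j$ forces exactly one vertex per color. For sufficiency, a multicolored independent set $I$ yields the cycles $\{C_v : v \in I\}$, which are pairwise edge-disjoint on their forward parts (as above) and on their return parts (distinct color classes use distinct return paths). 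The parameter bounds also carry over: the graph remains outerplanar and two-terminal series-parallel, the bandwidth is still $O(2^t)$ (the return paths add at most a bounded number of extra vertices between consecutive layers, or can be routed to preserve the layered layout), and the solution size is $t$, so $\bw(G) + k$ is bounded by a function of $t$ alone.

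The main obstacle I anticipate is routing the $t$ return paths so that outerplanarity, the two-terminal series-parallel structure, and the bandwidth bound are all \emph{simultaneously} preserved. Naively adding $t$ long parallel return paths spanning the whole width of $G$ risks either breaking outerplanarity (crossing edges) or blowing up the bandwidth beyond $O(2^t)$. The careful part is to interleave the return-path vertices into the existing linear layout $\pi(u_i^j) = 2^t(i-1) + j + 1$ so that each return edge still has small span, and to verify that the resulting graph is genuinely constructed by series and parallel compositions (the $t$ parallel return routes between source and sink realize a clean parallel composition, which is reassuring). I expect this to be handled by a diagram analogous to \figureautorefname~\ref{fig:hardness_PSP_outerplanar}, after which the combinatorial verification is routine.
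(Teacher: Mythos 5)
There is a genuine gap, and it is exactly at the point you flagged as ``the main obstacle'': adding $t$ internally disjoint return paths between $u_{m'+1}^0$ and $u_1^0$ destroys outerplanarity, and no amount of careful routing can repair this. After your augmentation, the two vertices $u_1^0$ and $u_{m'+1}^0$ are joined by $t+1$ internally disjoint paths each containing internal vertices (one through the original forward graph and one through each return path), so for $t \ge 2$ the graph contains a subdivision of $K_{2,3}$. Since $K_{2,3}$ is a forbidden minor for outerplanar graphs, the constructed graph is simply not outerplanar --- this is a property of the abstract graph, not of any particular drawing, so it cannot be ``handled by a diagram.'' Your construction does remain two-terminal series-parallel (it is a parallel composition of the forward graph with the $t$ return paths), and the reduction's correctness and the bandwidth bound could be salvaged, but the theorem as stated claims hardness already on graphs that are \emph{both} outerplanar and two-terminal series-parallel, and that part fails.

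The paper avoids the issue with a much lighter modification: it keeps the graph $G$ from \Cref{thm:w1-hardness_path} and simply \emph{identifies} $u_1^0$ with $u_{m'+1}^0$, adding no new edges or vertices. Every path $P_v \in \List$ then automatically becomes a cycle through the single merged vertex; distinct cycles share that vertex but no edges, so the edge-disjointness argument of \Cref{lem:reduction_w1-hard_path} carries over verbatim, and the resulting ``necklace'' of gadgets stays outerplanar and two-terminal series-parallel. The only new work is the bandwidth bound, since the identification joins the two ends of the linear layout; the paper recovers $\bandw{G^c} \le 3 \cdot 2^t$ by reordering the gadget blocks in a folded order (roughly $1, m', 2, m'-1, 3, \dots$) so that cyclically consecutive blocks are always within two block-positions of each other. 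If you want to keep your ``private return path per color class'' idea, you would have to drop the outerplanarity claim; to prove the theorem as stated, the identification trick (or something equivalently minor-free) is needed.
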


\begin{proof}
    We reuse the instance $(G, k, \List)$ of \textsc{Edge Disjoint List $\mathcal{P}$-Packing} from the proof of \Cref{thm:w1-hardness_path}, which is constructed from the instance $(G^\prime, t, \{ V_1, V_2, \ldots, V_{t} \})$ of \textsc{Multicolored Independent Set}.
    Let $G^c$ denote the graph obtained by identifying $u_1^0$ and $u_{m+1}^0$ of $G$.
    Observe that $G^c$ is outerplanar and two-terminal series-parallel.
    Moreover, this identification produces a collection $\List^c$ of cycles of $G^c$ from $\List$, because every path in $\List$ starts at $u_1^0$ and ends at $u_{m+1}^0$.
    As in the proof of Lemma~\ref{lem:reduction_w1-hard_path}, we can show that the instance $(G^\prime, t, \{ V_1, V_2, \ldots, V_{t} \})$ of \textsc{Multicolored Independent Set} is a yes-instance if and only if the instance $(G^c, k, \List^c)$ of \textsc{Edge Disjoint List $\mathcal{C}$-Packing} is a yes-instance.

    In the remainder of this proof, we show that $\bandw{G^c} \le 3\cdot 2^{t}$.
    For each $i \in \intsec{m^\prime}$, let $\pi_i = \langle u_i^0, u_i^1,\ldots, u_i^{q-1} \rangle$, where $q = 2^{t}$.
    In addition, we define a permutation $\rho$ of $\intsec{m^\prime}$ as follows:
	\begin{align*}
		\rho(i) = 
		\begin{cases}
			\frac{i+1}{2} & \text{if $i$ is odd}, \\
			m^\prime + 1 - \frac{i}{2} & \text{otherwise.}
		\end{cases}
	\end{align*}
    Consider an ordering $\pi = \langle \pi_{\rho(1)}, \pi_{\rho(2)}, \ldots, \pi_{\rho(m^\prime)}  \rangle$.
    Recall that $W_i = \{u_i^j : j \in \intsec{2^{t}} \cup \{ 0\} \}$ and $u_i^{2^t} = u_{i+1}^0$ for the gadget $G_i$.
    From the construction of $G^c$, every vertex in $W_i$ for $i \in \intsec{m^\prime}$ is adjacent to only vertices in $W_{i-1}\cup W_i \cup W_{i+1}$, where we consider $W_{m^\prime} = W_0$ and $W_1 = W_{m^\prime + 1}$.
    Thus, for every vertex $u$ in $\pi_{\rho(i)}$, each adjacent vertex of $u$ appears between $\pi_{\rho(i-2)}$ and $\pi_{\rho(i+2)}$.
    Since $\pi_{\rho(i)}$ has $2^{t}$ vertices, the bandwidth of $\pi$ is bounded by $3\cdot 2^{t}$, that is, $\bandw{G^c} \le 3\cdot 2^{t}$.
    This completes the proof of Theorem~\ref{thm:w1-hardness_cycle}. 
\end{proof}


We next focus on the case where $\mathcal{H}$ consists of a single graph and show that the problem remains hard.
Let $K_{2,n}$ denotes the complete bipartite graph such that one side consists of two vertices and the other side consists of $n$ vertices.

\begin{theorem}\label{thm:sp_P4}
    \textsc{Edge Disjoint List $P_4$-packing} remains NP-complete even for the class of $K_{2,n}$.
\end{theorem}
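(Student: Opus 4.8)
The problem is clearly in NP, so the task is to establish NP-hardness by a polynomial reduction. I would first record the structure of the objects being packed. Write $K_{2,n}$ with parts $\{a,b\}$ and $C=\{c_1,\dots,c_n\}$, so that every edge has the form $ac_i$ or $bc_i$. Since $K_{2,n}$ is bipartite, any $P_4$ alternates between the two parts, and one checks that it must use \emph{both} edges $ac_i,bc_i$ incident to a single vertex $c_i$ (the degree-$2$ vertex of the path, which I call its \emph{center}) together with \emph{exactly one} edge incident to a second vertex $c_j$ with $j\ne i$ (the remaining degree-$1$ endpoint, its \emph{tail}). Thus a $P_4$ is determined by a center $c_i$ and a single incident ``half-edge'' $ac_j$ or $bc_j$ at another vertex. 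In this language \textsc{Edge Disjoint List $P_4$-Packing} becomes a set-packing problem over the ground set of $2n$ half-edges in which every admissible triple consists of the two half-edges of one vertex plus one half-edge of another; equivalently, it is \textsc{Independent Set} on the conflict graph $\Econf{P_4}(K_{2,n})$ restricted to $\List$, paralleling the viewpoint of \Cref{prop:claw-free}.

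For the hardness I would reduce from \textsc{Exact Cover by $3$-Sets}, which is already the source of the related \Cref{thm:H-packing-exception}. Given a universe $U$ with $|U|=3q$ and a family $\mathcal C$ of $3$-element sets, I represent each element $u\in U$ by a dedicated ``element half-edge'' living on a shared element-vertex, so that consuming this half-edge as a \emph{tail} encodes ``$u$ is covered.'' Each set $S=\{u_1,u_2,u_3\}\in\mathcal C$ is then realized by a constant-size gadget of $P_4$'s built on a few private auxiliary $c$-vertices, engineered so that selecting the gadget in full uses up all of its internal half-edges and consumes precisely the three element half-edges of $u_1,u_2,u_3$, while gadgets of distinct sets conflict only through the element half-edges they share. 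Setting the target $k$ to the exact number of paths produced by $q$ complete set-gadgets, a packing meeting this bound is forced to consist of $q$ full gadgets whose element half-edges are pairwise disjoint, which is exactly an exact cover of $U$; conversely an exact cover yields such a packing by taking the corresponding gadgets. The construction is of polynomial size and, as required, produces a graph in the class of $K_{2,n}$'s.

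The crux is the gadget's \emph{all-or-nothing} behavior. Because a single $P_4$ has only two ``contact points'' with the rest of the instance (its center vertex, which can absorb incoming tails, and its own tail half-edge), no one path can by itself tie together three element-consumptions; the gadget must therefore link its three element half-edges through shared auxiliary half-edges so that any \emph{partial} selection either wastes an auxiliary half-edge or leaves the gadget unable to contribute its full quota of paths. The main work — and the main obstacle — is to design these auxiliary linkages and then prove the backward direction rigorously: one must show, by a counting argument over the $2n$ half-edges, that \emph{every} packing of size $k$, including ones that mix fragments of different set-gadgets, can be normalized into a union of $q$ complete gadgets, so that no ``stray'' packing beats the intended optimum. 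I expect verifying this optimality/normalization step, rather than the forward direction or the routine degree and size bookkeeping, to be the delicate part of the proof.
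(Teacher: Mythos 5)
Your opening structural analysis is correct and genuinely useful: every $P_4$ in $K_{2,n}$ consists of both edges incident to one ``center'' vertex $c_i$ plus exactly one edge incident to a second vertex $c_j$, so the problem is a set-packing problem over the $2n$ half-edges. The paper's own construction implicitly lives inside exactly this framework. However, what you have written is a plan rather than a proof, and the gap sits precisely where you locate it yourself: the set gadget for the \textsc{Exact Cover by $3$-Sets} reduction is never constructed. You specify only the properties it should have (``engineered so that selecting the gadget in full \dots consumes precisely the three element half-edges''), and it is not obvious that such a gadget exists within the rigid triple structure you identified. Concretely: each element vertex carries \emph{two} half-edges, so two paths from different set gadgets can both ``cover'' the same element without conflicting; and even if you restrict tails to a single designated half-edge per element, a packing of the target size need only give a system of distinct representatives (one covering path per element, drawn from arbitrary gadgets), not a union of complete gadgets. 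Defeating this requires auxiliary linkages whose existence and correctness you explicitly defer, and the normalization lemma you say is ``the delicate part'' is exactly the statement that would need to be proved. As it stands the backward direction could fail outright for a naive instantiation.

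For comparison, the paper avoids this difficulty by reducing from \textsc{Independent Set} on cubic graphs rather than from \textsc{Exact Cover by $3$-Sets}. It builds $G = K_{2,2n'+3m'}$ with explicit path families $P_u^X$, $P_{u,e}^Y$, $P_{u,e}^Z$ (one $X$-path per vertex, three $Y$-paths per vertex sharing the half-edges of $x_u$ and $x_u'$, two $Z$-paths per edge sharing the half-edges of $z_e$), sets $k = k' + 2m'$, and in the backward direction uses a local exchange argument (if both $P_u^X$ and $P_v^X$ appear for an edge $uv$, swap one out for an available $Y$-path) together with the counting bound $|\mathcal{P}_e \cap \mathcal{S}| \le 2$ per edge of $G'$. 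That counting step plays the role of your missing normalization lemma, but it only needs a ``$\le 2$ of $4$'' budget per edge gadget rather than an all-or-nothing gadget, which is why it goes through cleanly. If you want to salvage your route, you would need to either exhibit the X3C gadget and prove its rigidity, or switch to a budgeted-overlap argument of the paper's kind.
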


Observe that $K_{2,n}$ is a two-terminal series-parallel graph.
Since \textsc{Edge Disjoint List $P_3$-packing} is solvable for general graphs, Theorem~\ref{thm:sp_P4} suggests that the complexity dichotomy with respect to path length still holds for very restricted graphs.
Moreover, Theorem~\ref{thm:sp_P4} immediately provides the following corollary, which strengthens the hardness result in~\cite{AravindS23} that \textsc{Path Set Packing} is W[1]-hard when parameterized by vertex cover number of $G$ plus maximum length of paths in a given collection $\mathcal{L}$.

\begin{corollary}
    \textsc{Path Set Packing} is NP-complete even when a given graph has vertex cover number $2$ and every path in $\mathcal{L}$ is of length $3$.
\end{corollary}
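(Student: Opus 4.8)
The plan is to prove NP-completeness by first pinning down the very rigid shape of a $P_4$ inside $K_{2,n}$ and then encoding a suitable NP-complete problem into that structure. Membership in NP is immediate: a certificate is a subcollection $\mathcal{S} \subseteq \List$ with $|\mathcal{S}| \ge k$, and edge-disjointness is checkable in polynomial time. For the structural step, write the two-vertex side of $K_{2,n}$ as $\{a,b\}$ and the large side as $X = \{x_1,\dots,x_n\}$. Since $P_4$ is bipartite with both sides of size $2$, every copy of $P_4$ in $K_{2,n}$ must use both hubs $a,b$ and exactly two leaves $x_i,x_j$, and its three edges are three of the four edges of the $4$-cycle $a\,x_i\,b\,x_j$. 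Equivalently, a $P_4$ uses both edges $a x_i, b x_i$ at one leaf (the leaf it ``fully'' occupies) together with a single edge $s x_j$ with $s \in \{a,b\}$ at a second leaf. This yields a clean reformulation: \textsc{Edge Disjoint List $P_4$-Packing} on $K_{2,n}$ is exactly a packing of $3$-element subsets of the $2n$-element ground set $\{a x_i, b x_i : i \in [n]\}$ in which the admissible triples are precisely the ``full-pair-plus-one'' sets listed in $\List$.

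For hardness I would reduce from a standard NP-complete problem such as \textsc{3-SAT} (or \textsc{Exact Cover by $3$-Sets}, NP-complete by~\cite{Karp72}), building a single $K_{2,n}$ together with a carefully chosen list $\List$. Since the hubs $a,b$ are global, all coordination must pass through the $2n$ leaf-edges. For each variable $z$ I would introduce a constant number of leaves and place into $\List$ a pair of $P_4$'s that \emph{share} an edge, hence are mutually exclusive, so that selecting one versus the other encodes setting $z$ true or false; the two choices release different leaf-edges for later use. For each clause $c$ I would introduce a dedicated leaf (or designated edge) together with one $P_4$ per literal of $c$, arranged so that such a clause-path is edge-disjoint from the variable gadgets exactly when the corresponding literal is satisfied. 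The target $k$ is set to the number of variable-gadget paths plus the number of clauses, forcing every variable gadget to contribute a path and every clause to be covered by a satisfied literal.

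Correctness would then be argued in the two standard directions. Given a satisfying assignment, I select from each variable gadget the path matching the assignment and, for each clause, a path through one satisfied literal; the ``full-pair-plus-one'' reformulation makes it routine to verify that these $k$ paths are pairwise edge-disjoint. Conversely, from any packing of size $k$ one reads off an assignment from the variable gadgets — mutual exclusivity guarantees it is well defined — and then, using the threshold count together with the fact that a clause-path can be edge-disjoint from the chosen variable-paths only when routed through a satisfied literal, one concludes that every clause is satisfied.

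The main obstacle is precisely the rigidity exposed in the structural step: because there are only two hub vertices and every $P_4$ is a ``full pair plus one edge,'' the gadgets have very little room. I expect the delicate parts to be (i) designing $\List$ so that edge-disjointness faithfully mirrors the logical constraints while no unintended admissible triple can be packed, and (ii) choosing $k$ so that exactly the intended structure is forced — in particular ruling out ``cheating'' packings that over-use the freed leaf-edges of one gadget to compensate for another. Controlling all these interactions inside a single $K_{2,n}$, rather than across many disjoint components as in \Cref{thm:H-packing-exception}, is where the genuine work lies.
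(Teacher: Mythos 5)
Your structural analysis of $P_4$'s in $K_{2,n}$ is correct and useful: every $P_4$ must alternate between the two hubs $a,b$ and two leaves, so it consists of both edges at one leaf plus a single edge at a second leaf. This is exactly the setting in which the paper works. However, your proposal stops short of an actual proof. The entire hardness argument is phrased in the conditional (``I would introduce \ldots a pair of $P_4$'s that share an edge,'' ``I would introduce a dedicated leaf \ldots arranged so that \ldots''), and you yourself identify the gadget design and the choice of $k$ as ``where the genuine work lies.'' No concrete gadgets are specified, no target $k$ is computed, and neither direction of the equivalence is verified against a concrete construction. A plan that defers the construction of the gadgets and the proof that no unintended packing exists is not a proof; in this problem the danger of ``cheating'' packings (which you correctly flag) is real precisely because all paths share the two hubs, so the missing verification is not routine bookkeeping.

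For comparison, the paper obtains this corollary as an immediate consequence of \Cref{thm:sp_P4}: $K_{2,n}$ has vertex cover number $2$ and every listed path is a $P_4$ (length $3$), so the instance is already a \textsc{Path Set Packing} instance of the required form. The proof of \Cref{thm:sp_P4} itself reduces from \textsc{Independent Set} on cubic graphs rather than from \textsc{3-SAT}: for each vertex $u$ it creates leaves $x_u, x'_u$, for each incidence $(u,e)$ a leaf $y_{u,e}$, and for each edge $e$ a leaf $z_e$, with explicit paths $P^X_u$, $P^Y_{u,e}$, $P^Z_{u,e}$ and target $k = k' + 2m'$; the correctness argument (\Cref{lem:sp_P4_correctness}) includes an exchange step showing that any large packing can be normalized so that the chosen $P^X_u$'s correspond to an independent set. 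If you want to complete your write-up, you must either supply such explicit gadgets for your \textsc{3-SAT} encoding together with both directions of the correctness proof, or simply derive the corollary from the already-established \Cref{thm:sp_P4}.
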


To prove Theorem~\ref{thm:sp_P4}, we perform a polynomial-time reduction from \textsc{Independent Set} on cubic graphs, that is, graphs such that every vertex has degree exactly $3$.
\textsc{Independent Set} on cubic graphs is known to be NP-complete~\cite{GareyJS76}.

For a cubic graph $G^\prime$, we construct vertex sets $X = \{x_u, x_u^\prime : u \in V(G^\prime) \}$, $Y = \{y_{u,e}, y_{v,e}: e=uv\in E(G^\prime) \}$, and $Z = \{z_{e} : e \in E(G^\prime) \}$. 
A graph $G$ consists of a vertex set $X\cup Y \cup Z \cup \{s,t\}$ such that $X\cup Y \cup Z$ forms an independent of $G$ and every vertex in $X\cup Y \cup Z$ is adjacent to both $s$ and $t$.
Obviously, $G = K_{2, 2n^\prime + 3m^\prime}$.

We next construct a collection $\List$ of paths in $G$ as follows:
\begin{itemize}
    \item for $u \in V(G^\prime)$, let $P_u^X = \langle s, x_u, t, x_u^\prime \rangle$;
    \item for $u \in V(G^\prime)$ with three incident edges $e, f, g \in \incedge_{G'}(u)$, let 
    \begin{align*}
        P_{u, e}^Y &= \langle s, y_{u,e}, t, x_u^\prime \rangle,\\
        P_{u, f}^Y &= \langle s, y_{u,f}, t, x_u \rangle,\\
        P_{u, g}^Y &= \langle x_u, s, y_{u,g}, t \rangle;
    \end{align*}
    \item for $e = uv \in E(G')$, let $P_{u, e}^Z = \langle y_{u, e}, s, z_{e}, t \rangle$ and $P_{v, e}^Z = \langle y_{v, e}, s, z_{e}, t \rangle$.
\end{itemize}
Then, we define
\begin{align*}
    \List = \{ P_u^X  : u \in V(G^\prime) \} \cup \{ P_{u, e}^Y : u \in V(G'), e \in \incedge_{G'}(u) \}\cup \{ P_{u, e}^Z, P_{v, e}^Z: e = uv\in E(G') \}
\end{align*} and $k = k^\prime + 2m^\prime$.
Clearly, the construction of the instance $(G, k, \List)$ for \textsc{Edge Disjoint List $P_4$-packing} is completed in polynomial time.
Our remaining task is to prove the following lemma.

\begin{lemma} \label{lem:sp_P4_correctness}
    An instance $(G^{\prime}, k^\prime)$ of \textsc{Independent Set} is a yes-instance if and only if an instance $(G,k,\List)$ of \textsc{Edge Disjoint List $P_4$-packing} is a yes-instance.
\end{lemma}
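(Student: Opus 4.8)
The plan is to determine, via the edge-conflict structure among the paths in $\List$, the exact maximum size of an edge-disjoint subcollection, and to show it equals $3n' + \alpha(G')$, where $\alpha(G')$ denotes the independence number of $G'$ and $n' = |V(G')|$. Since $G'$ is cubic we have $2m' = 3n'$, so the target $k = k' + 2m' = k' + 3n'$, and the claimed equivalence is then immediate: a packing of size $\ge k$ exists exactly when $\alpha(G') \ge k'$. I will carry this out in both directions with an explicit count rather than invoking the maximum abstractly.

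The first step is to record precisely when two paths of $\List$ share an edge. Writing $X_u := P_u^X$, $Y_{u,e} := P_{u,e}^Y$, and $Z_{u,e} := P_{u,e}^Z$, each path is a $P_4$ in $K_{2,n}$, so it passes through both $s$ and $t$ and therefore occupies both edges at its ``middle'' leaf and one edge at its ``end'' leaf. A case check over the leaves $x_u, x_u', y_{u,e}, z_e$ shows the only conflicts are: $X_u$ with each of $Y_{u,e}, Y_{u,f}, Y_{u,g}$ (via the edges $tx_u'$, $tx_u$, $sx_u$); $Y_{u,e}$ with $Z_{u,e}$ (via $sy_{u,e}$); and, for each edge $e=uv$, $Z_{u,e}$ with $Z_{v,e}$ (via both edges at $z_e$). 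In particular, the three $Y$-paths at a vertex are mutually disjoint, and $Z$-paths of distinct edges are disjoint.

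For the forward direction, given an independent set $S$ of $G'$ with $|S| = k'$, I select $X_u$ for $u \in S$, all three $Y$-paths for each $u \notin S$, and for each edge incident to $S$ the unique $Z$-path on its endpoint in $S$. Because $S$ is independent and $G'$ is cubic, there are exactly $3k'$ edges incident to $S$, and the conflict list above confirms that the chosen paths are pairwise edge-disjoint. The total is $k' + 3(n'-k') + 3k' = 3n' + k' = k$, giving a yes-instance.

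The backward direction is the main obstacle and requires a global count, since the naive per-vertex bound is too weak. Given an edge-disjoint $\mathcal S$, set $A = \{u : X_u \in \mathcal S\}$ and charge the chosen $Y$- and $Z$-paths to the edges of $G'$: for each edge $e=uv$ the number of chosen paths among $Y_{u,e}, Z_{u,e}, Y_{v,e}, Z_{v,e}$ is at most $2$ by the two $Y$–$Z$ conflicts, and drops to at most $1$ when both $u,v \in A$, because then both $Y$-paths are excluded and the $Z$–$Z$ conflict governs the two remaining candidates. Summing yields $|\mathcal S| \le |A| + 2m' - |E(G'[A])| = 3n' + |A| - |E(G'[A])|$, so $|A| - |E(G'[A])| \ge |\mathcal S| - 3n' \ge k'$. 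Finally, repeatedly deleting an endpoint of any edge lying inside $A$ decreases $|A|$ by one and $|E(G'[A])|$ by at least one, hence never decreases $|A| - |E(G'[A])|$, and terminates at an independent set; this independent set has size at least $|A| - |E(G'[A])| \ge k'$, completing the reduction. The delicate point is organizing the bound per edge so that the $Z$–$Z$ conflicts pay off exactly against the edges inside $A$, which is what turns the count into the independence number.
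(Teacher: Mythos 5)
Your proof is correct, and its forward direction (choose $X_u$ for $u\in S$, the three $Y$-paths for each $u\notin S$, and the $Z$-path at the $S$-endpoint of each edge incident to $S$) coincides with the paper's construction. The backward direction, however, takes a genuinely different route. The paper first \emph{normalizes} the solution: whenever $P_u^X$ and $P_v^X$ are both selected for an edge $uv$, it performs an exchange (dropping one $X$-path and inserting a compatible $Y$-path of the same vertex) until the set $A=\{u : P_u^X\in\mathcal S\}$ is independent, and only then applies the per-edge bound $|\mathcal P_e\cap\mathcal S|\le 2$ to conclude $|A|\ge k-2m'=k'$. You instead skip the normalization entirely: you refine the per-edge bound to $\le 1$ when both endpoints lie in $A$, obtaining $|\mathcal S|\le |A|+2m'-|E(G'[A])|$, and then extract an independent set of size $|A|-|E(G'[A])|\ge k'$ by the standard vertex-deletion argument. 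Both are sound; the paper's exchange argument directly exhibits an independent set of the required size inside the (modified) solution but requires verifying that each swap preserves edge-disjointness and cardinality, whereas your counting argument avoids any modification of $\mathcal S$ at the cost of a final extraction step. Your explicit determination of the full conflict structure (the only conflicts being $X_u$ with the three $Y_{u,\cdot}$, each $Y_{u,e}$ with $Z_{u,e}$, and $Z_{u,e}$ with $Z_{v,e}$) is accurate and makes both directions routine to check.
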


\begin{proof}
    We first show the forward implication.
    For an independent set $I$ with $|I| \ge k^\prime$, we construct a subcollection $\mathcal{S} \subseteq \List$ as follows:
    \begin{align*}
        \mathcal{S} = \{P_u^X : u \in I\} \cup \{P_{u,e}^Z: u \in I, e \in \incedge_{G'}(u) \}
                      \cup \{P_{u,e}^Y : u \notin I, e \in \incedge_{G'}(u) \}.
    \end{align*}

    Observe that $|\mathcal{S}| \ge k^\prime + 3n^\prime$.
    Since $G^\prime$ is a cubic graph, we have $3n^\prime = 2m^\prime$ by the handshaking lemma, and hence $|\mathcal{S}| \ge k^\prime + 2m^\prime = k$.
    We claim that $\mathcal{S}$ is mutually edge-disjoint.
    Assume for a contradiction that there are two paths $P$ and $P^\prime$ in $\mathcal{S}$ that have a common edge.
    There are three cases to consider: (i) $P = P_u^X$ for some $u \in V(G^\prime)$; (ii) $P = P_{u, e}^Y$ for some $u \in V(G^\prime)$ and $e \in \incedge_{G'}(u)$; and (iii) $P = P_{u, e}^Z$ and $P^\prime = P_{v,f}^Z$ for some $u,v \in V(G^\prime)$, $e \in \incedge_{G'}(u)$, and $f \in \incedge_{G'}(v)$.
    
    In the case (i), from the construction of $\List$, we have $P^\prime = P_{u,e}^Y$ for some $e \in \incedge_{G'}(u)$.
    As $P_u^X \in \mathcal S$, we have $u \in I$ but as $P_{u, e}^Y \in \mathcal S$, we have $u \notin I$, a contradiction.
    Consider the case (ii). 
    Only $P_u^X$ and $P_{u,e}^Z$ can share an edge with $P_{u,e}^Y$, and thus we consider the latter case: $P^\prime = P_{u,e}^Z$. 
    This implies that $u \in I$, while $P_{u,e}^Y \in \mathcal{S}$ also implies $u \notin I$, a contradiction.
    In the case (iii), one can observe that $e = f = uv$.
    Then, we have $u, v \in I$, a contradiction.
    Therefore, $\mathcal{S}$ is a solution for $(G,k,\List)$.

    We next show the backward implication.
    Let $\mathcal{S}$ be a subcollection of $\List$ such that $\mathcal{S}$ is mutually edge-disjoint and $|\mathcal{S}| \ge k = k^\prime + 2m^\prime$.
    Suppose that $\mathcal{S}$ contains both $P_u^X$ and $P_v^X$ with $e = uv \in E(G^\prime)$.
    As they share edges with $P_{u,e}^Y$ and $P_{v,e}^Y$, we have $P_{u,e}^Y, P_{v,e}^Y \notin \mathcal{S}$.
    Furthermore, at least one of $P_{u,e}^Z$ and $P_{v,e}^Z$ is not contained in $\mathcal{S}$.
    We then remove $P_u^X$ from $\mathcal{S}$ and add $P_{u,e}^Y$ into $\mathcal{S}$ if $P_{u,e}^Z \notin \mathcal{S}$; otherwise, we remove $P_v^X$ from $\mathcal{S}$ and add $P_{v,e}^Y$ into $\mathcal{S}$.
    This exchange keeps $\mathcal{S}$ mutually edge-disjoint because $P_{u,e}^Y$ (resp. $P_{v,e}^Y$) shares edges only with $P_u^X$ (resp. $P_v^X$) in $\mathcal S$.
    Hence, we may assume that $\mathcal{S}$ contains at most one of $P_u^X$ and $P_v^X$ for every $uv \in E(G^\prime)$.
    Let denote $\mathcal{P}_{e} = \{ P_{u,e}^Y, P_{u,e}^Z, P_{v,e}^Y, P_{v,e}^Z\}$ for $e = uv \in E(G')$.
    Observe that $|\mathcal{P}_{e} \cap \mathcal{S}| \le 2$ due to the construction of $\List$.
    Thus, $|\bigcup_{e \in E(G^\prime)}\mathcal{P}_{e} \cap \mathcal{S} | \le 2m^\prime$.
    This implies that $| \{ P_u^X : u \in V(G^\prime)\}\cap \mathcal{S}| \ge k - 2m^\prime = k^\prime$.
    Let $I = \{ u \in V(G^\prime) : P_u^X \in \mathcal{S} \}$.
    From the assumption of $\mathcal{S}$, $I$ is an independent set of $G'$.
    This completes the proof of the lemma. 
\end{proof}

We also show the complexity of \textsc{Edge Disjoint List $C_5$-packing}, which highlights the positive result in \Cref{subsec:sp_4cycle}.

\begin{theorem}\label{thm:sp_C5}
    \textsc{Edge Disjoint List $C_5$-packing} remains NP-complete even for two-terminal series-parallel graphs.
\end{theorem}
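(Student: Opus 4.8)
The plan is to reduce from \textsc{Edge Disjoint List $P_4$-packing} on the graphs $K_{2,n}$, which is already NP-hard by \Cref{thm:sp_P4}, using a \emph{closure} gadget that turns each listed $P_4$ into a $C_5$ while keeping the graph two-terminal series-parallel. Membership in NP is immediate, since any candidate subcollection can be checked for pairwise edge-disjointness and for size in polynomial time. For the construction, let $(K_{2,n}, k, \mathcal{L})$ be an instance of \textsc{Edge Disjoint List $P_4$-packing}, where $\{s,t\}$ is the two-vertex side and $\mathcal{L}$ is the list of $P_4$'s. For each $P = \langle a, b, c, d \rangle \in \mathcal{L}$ I would introduce a fresh vertex $\omega_P$ together with the two edges $a\omega_P$ and $\omega_P d$, so that $C_P := \langle a, b, c, d, \omega_P \rangle$ is a $5$-cycle. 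I let $G$ be $K_{2,n}$ augmented with all the $\omega_P$ and their incident edges, set $\List := \{ C_P : P \in \mathcal{L} \}$, and keep the threshold $k$ unchanged.

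For the equivalence, the two closure edges of $C_P$ are both incident to the private vertex $\omega_P$, hence they occur in no other cycle of $\List$. Therefore $C_P$ and $C_{P'}$ share an edge if and only if $P$ and $P'$ share an edge of $K_{2,n}$. Consequently $P \mapsto C_P$ is a cardinality-preserving bijection from $\mathcal{L}$ onto $\List$ that sends edge-disjoint subcollections to edge-disjoint subcollections and back, so the $P_4$-instance is a yes-instance if and only if the $C_5$-instance is.

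The main step is to verify that $G$ is two-terminal series-parallel. Here I would first record the structural fact that, because $K_{2,n}$ is bipartite with hub side $\{s,t\}$, every $P_4$ inside it joins a hub to a leaf, i.e.\ has exactly one endpoint in $\{s,t\}$ and one leaf endpoint; hence every $\omega_P$ is attached to one hub and one leaf. Grouping all edges of $G$ by the unique leaf $\ell$ they involve, the edges meeting $\ell$ form a connector between $s$ and $t$: a parallel bundle of routes from $s$ to $\ell$ (the edge $s\ell$ together with the two-edge routes through each closing vertex attached to $\ell$ on the $s$ side), composed in series at $\ell$ with the analogous parallel bundle of routes from $\ell$ to $t$. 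Distinct leaf-connectors meet only in $\{s,t\}$, so $G$ is exactly the parallel composition of these connectors and is therefore two-terminal series-parallel with terminals $s$ and $t$.

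The delicate point is precisely this series-parallel verification, which rests entirely on the observation that every $P_4$ of $K_{2,n}$ joins a hub to a leaf; once that is in hand the decomposition is routine and no new edge-conflicts are created. I expect the alternative route, a self-contained reduction from \textsc{Independent Set} mirroring \Cref{thm:sp_P4}, to be harder, since the odd length of $C_5$ forbids a purely bipartite hub gadget and forces one to realize each $5$-cycle as, say, the parallel composition of a length-two and a length-three $s$--$t$ connector while still reusing hub-incident edges to reproduce the selection/edge conflict pattern; the closure reduction is attractive exactly because it sidesteps this issue.
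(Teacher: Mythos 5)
Your proof is correct, but it takes a genuinely different route from the paper. The paper proves \Cref{thm:sp_C5} by a fresh, self-contained reduction from \textsc{Independent Set} on cubic graphs: it builds bespoke two-terminal gadgets $G_u^X$, $G_{u,e}^Y$, $G_e^Z$ (a seven-vertex vertex gadget plus short path gadgets), combines them by parallel composition, and defines cycles $C_u^X$, $C_{u,e}^Y$, $C_{u,e}^Z$ whose conflict pattern mirrors the one in the $P_4$ proof; the correctness argument is then declared ``essentially the same as the proof of \Cref{lem:sp_P4_correctness}.'' You instead reduce directly from \Cref{thm:sp_P4} by closing each listed $P_4$ into a $C_5$ with a private apex $\omega_P$, which makes the conflict graphs of the two instances literally identical and reduces the whole argument to one structural check. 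That check is sound: every $P_4$ in $K_{2,n}$ alternates sides, so it has one hub endpoint and one leaf endpoint, each $\omega_P$ attaches to one hub and one leaf, and grouping everything by its unique leaf yields pairwise internally disjoint $s$--$t$ connectors of the form (parallel bundle to $\ell$) $\sericom$ (parallel bundle from $\ell$), whose parallel composition is $G$. What your approach buys is modularity and brevity --- no new gadgets, no repeated correctness argument, and a trick (``close hub-to-leaf paths with private apexes'') that would transfer any similar list-$P_\ell$ hardness to $C_{\ell+1}$; what the paper's approach buys is independence from \Cref{thm:sp_P4} and explicit control over the gadget shapes, which it reuses to match the style of the surrounding constructions. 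One cosmetic remark: your final graph is no longer $K_{2,n}$ (so you cannot claim hardness on that exact class, as \Cref{thm:sp_P4} does for $P_4$), but the theorem only asserts hardness for two-terminal series-parallel graphs, which your graph is.
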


We again perform a polynomial-time reduction from \textsc{Independent Set} on cubic graphs.
For a cubic graph $G^\prime$, we use a gadget $G_u^X$ depicted in \figureautorefname~\ref{fig:hardness_sp_C5} for each $u \in V(G^\prime)$.
\begin{figure}[t]
	\centering
	\begin{tikzpicture}
		\newcommand{\elen}{1.75}
        \newcommand{\eheight}{1.25}
        \node[draw, circle, very thick] at (0, 0)[label = below:$x_u$] (x){};
        \node[draw, circle, very thick] at (-\elen, 0)[label = below:$s_u^X$] (s){};
        \node[draw, circle, very thick] at (\elen, 0)[label = below:$t_u^X$] (t){};
        \node[draw, circle, very thick] at ($(s)!0.5!(x) + (0,\eheight)$)[label = above:$x_{u,e}$] (x_1){};
        \node[draw, circle, very thick] at ($(x)!0.5!(t) + (0,\eheight)$)[label = above:$x_{u,f}$] (x_2){};
        \node[draw, circle, very thick] at ($(s)!0.25!(t) + (0,-\eheight)$)[label = below:$x_{u,g}$] (x_31){};
        \node[draw, circle, very thick] at ($(s)!0.75!(t) + (0,-\eheight)$)[label = below:$x_{u,g}^\prime$] (x_32){};

        \draw [very thick] (s) -- (x) -- (t);
        \draw [very thick] (s) -- (x_1) -- (x) -- (x_2) --(t) -- (x_32) -- (x_31) -- (s);
	\end{tikzpicture}
	\caption{The gadget $G_u^X$ for $u \in V(G^\prime)$ with three incident edges $e,f,g \in \incedge_{G^\prime}(u)$.}
	\label{fig:hardness_sp_C5}
\end{figure}
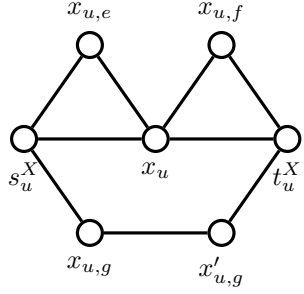
The gadget $G_u^X$ consists of seven vertices $s_u^X$, $t_u^X$, $x_u$, $x_{u,e}$, $x_{u,f}$, $x_{u,g}$, and $x_{u,g}^\prime$, where $e,f,g \in \incedge_{G'}(u)$.
Observe that $G_u^X$ is a two-terminal series-parallel graph with a source $s_u^X$ and a sink $t_u^X$. 
For each pair $u \in E(G^\prime)$ and $e \in \incedge_{G^\prime}(u)$, we prepare an additional two-terminal series-parallel graph $G_{u,e}^Y$ consisting of a path $\langle s_{u,e}^Y, y_{u,e}, t_{u,e}^Y \rangle$, where $s_{u,e}^Y$ and $t_{u,e}^Y$ are a source and a sink of $G_{u,e}^Y$, respectively.
Moreover, for each $e \in E(G^\prime)$, we construct a two-terminal series-parallel graph $G_{e}^Z$ consisting of a path $\langle s_{e}^Z, z_{e}, z_{e}^\prime, t_{e}^Z \rangle$, where $s_{e}^Z$ and $t_{e}^Z$ are a source and a sink of $G_{e}^Z$, respectively.
We denote by $G^X$ a graph obtained by parallel compositions of the all gadgets $G_u^X$ for $u \in V(G^\prime)$.
The graph $G^Y$ and $G^Z$ are similarly defined as parallel compositions of the gadgets.
We then let $G = G^X \paracom G^Y \paracom G^Z$.
Observe that $G$ is also a two-terminal series-parallel graph.
Let $s$ and $t$ denote the source and the sink of $G$, respectively.

We next construct a collection $\List$ of cycles in $G$ as follows:
\begin{itemize}
    \item for $u \in V(G^\prime)$, let $C_u^X = \langle s, x_u, t, x_{u,g}^\prime, x_{u,g}, s \rangle$, where $g \in \incedge_{G^\prime}(u)$;
    \item for $u \in V(G^\prime)$ with three incident edges $e, f, g \in \incedge_{G^\prime}(u)$, let 
    \begin{align*}
        C_{u,e}^Y &= \langle s, x_{u,e}, x_u, t, y_{u,e}, s \rangle,\\
        C_{u,f}^Y &= \langle s, x_u, x_{u,f}, t, y_{u,f}, s \rangle,\\
        C_{u,g}^Y &= \langle s, x_{u,g}, x_{u,g}^\prime, t, y_{u,g}, s \rangle;
    \end{align*}
    \item for $ e = uv \in E(G^\prime)$, let $C_{u,e}^Z = \langle s, y_{u,e}, t, z_{e}^\prime, z_{e}, s \rangle$ and $C_{v,e}^Z = \langle s, y_{v,e}, t, z_{e}^\prime, z_{e}, s \rangle$.
\end{itemize}
Then, we define
\begin{align*}
    \List = \{ C_u^X  : u \in V(G^\prime) \} \cup \{ C_{u, e}^Y : u \in V(G'), e \in \incedge_{G'}(u) \}\cup \{ C_{u, e}^Z, C_{v, e}^Z: e = uv\in E(G') \}
\end{align*} and $k = k^\prime + 2m^\prime$, where $m' = |E(G')|$.
The construction of the instance $(G, k, \List)$ for \textsc{Edge Disjoint List $C_5$-packing} takes in polynomial time.
From a vertex set $I$ of $G'$, we can construct a set
\begin{align*}
    \mathcal{S} = \{C_u^X : u \in I\} \cup \{C_{u,e}^Z: u \in I, e \in \incedge_{G'}(u) \}
                    \cup \{C_{u,e}^Y : u \notin I, e \in \incedge_{G'}(u) \}
\end{align*}
of cycles in $\List$.
Analogous to Lemma~\ref{lem:sp_P4_correctness}, we can prove that $I$ is an independent set of $G'$ if and only if cycles in $\mathcal S$ are edge-disjoint, which is omitted because, in fact, it is essentially the same as the proof of Lemma~\ref{lem:sp_P4_correctness}.


\subsection{Polynomial-time algorithm of \textsc{Edge Disjoint List \texorpdfstring{$C_\ell$}{}-packing} for \texorpdfstring{$\ell \le 4$}{}} \label{subsec:sp_4cycle}

We design a polynomial-time algorithm for \textsc{Edge Disjoint List $C_\ell$-packing} for $\ell \le 4$ on two-terminal series-parallel graphs.
Actually, we give a stronger theorem.

\begin{theorem} \label{thm:sp_4cycle}
    Let $\mathcal{C}_{\le 4} = \{C_3, C_4\}$.
    Given a series-parallel graph $G$ with $n$ vertices and a collection $\List$ of cycles in $G$ of length at most $4$, \textsc{Edge Disjoint List $\mathcal{C}_{\le 4}$-packing} is solvable in $O(|\List| + n^{2.5})$ time.
\end{theorem}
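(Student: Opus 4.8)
The plan is to exploit the recursive structure of series-parallel graphs rather than the conflict-graph machinery of \Cref{prop:claw-free}. First I would reduce to the biconnected case: every cycle of length at most $4$ is $2$-connected and therefore lies inside a single biconnected component of $G$, and two cycles in distinct components are automatically edge-disjoint (distinct components share at most a cut vertex, no edge). So it suffices to solve \textsc{Edge Disjoint List $\mathcal{C}_{\le 4}$-packing} on each biconnected component and sum the optima. Each component is a two-terminal series-parallel graph, whose decomposition tree of series ($\sericom$) and parallel ($\paracom$) compositions I compute in $O(n)$ time. The next step is a structural characterization of short cycles: a cycle is the union of two internally vertex-disjoint paths between a pair $(x,y)$, and since its length is at most $4$, these two paths have lengths $(1,2)$ (a triangle: the edge $xy$ plus a common neighbor), $(1,3)$, or $(2,2)$ (a $4$-cycle spanning two common neighbors). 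Hence each listed cycle is pinned to the parallel node of the separation pair $(x,y)$ it straddles, uses at most two \emph{branches} there, and — having at most four edges — meets only a constant number of other separation pairs.

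It is worth stressing why a one-shot reduction cannot work: the $\mathcal{C}_{\le 4}$-edge-conflict graph is \emph{not} claw-free, so \Cref{prop:claw-free} does not apply. Indeed the triangle $abc$ together with the triangles $abd$, $bce$, $caf$ forms a series-parallel graph (it has a width-$2$ decomposition with central bag $\{a,b,c\}$), and the three outer triangles are pairwise edge-disjoint while each shares one edge of $abc$, so $abc$ and its three neighbours induce a claw. I would therefore run a dynamic program over the decomposition tree, bottom-up. For a node with poles $(s',t')$ realizing a subgraph $G'$, the state stores the maximum number of edge-disjoint listed cycles packable inside $G'$, refined by the $O(1)$ possible \emph{uses} of the few edges incident to $s'$ and $t'$; these are the only edges through which an internal cycle of $G'$ can also participate in a cycle assembled higher up. Because cycles are short this boundary data has bounded size, so leaves and series nodes are combined in constant time.

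The heart of the algorithm, and the source of the $n^{2.5}$ term, is the parallel node. At a parallel node with branches $B_1,\dots,B_r$ between poles $(s',t')$, a straddling cycle is built from one short $s'$–$t'$ path in some $B_i$ and one in some $B_j$ with $i\neq j$, so it selects an unordered pair of (branch, path) \emph{slots}, and two straddling cycles are edge-disjoint essentially iff they select disjoint slots. Packing them optimally, subject to which slots remain free given the already-committed internal solutions of the branches, is exactly a maximum-matching problem on an auxiliary graph whose vertices are the $O(\sum_i |B_i|)$ free slots and whose edges are the listed straddling cycles. I would build these auxiliary graphs while scanning $\List$ once (in $O(|\List|)$ time, deduplicating parallel edges) and solve them with a general maximum-matching algorithm. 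The auxiliary graphs over all parallel nodes have $O(n)$ vertices in total and $O(|\List|)=O(n^2)$ edges in total, so the matching computations cost $O(\sqrt{n}\cdot n^{2})=O(n^{2.5})$ in aggregate, yielding the claimed $O(|\List|+n^{2.5})$ bound.

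The hard part will be proving that this matching formulation is \emph{exact} in the presence of triangles and $4$-cycles that span several separation pairs — precisely the claw-type configurations above. The delicate points are (i) showing that committing a cycle using a boundary edge of a branch cannot silently invalidate a matching already chosen at an ancestor parallel node, so that the local optima are globally consistent, and (ii) correctly reconciling each branch's internal solution with slot availability at its parent parallel node, so that the per-node matchings compose into one global packing and the running-time charging still sums to $O(n^{2.5})$. I expect the $(1,3)$ and mixed triangle/$4$-cycle cases to require the most careful bookkeeping in these consistency arguments.
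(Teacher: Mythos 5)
Your proposal follows essentially the same route as the paper: reduce to biconnected components, run a dynamic program over the series--parallel decomposition tree with $O(1)$ boundary information per node, and solve each parallel node by a maximum matching on an auxiliary graph whose edges are the listed cycles straddling two branches, giving $O(|\List|+n^{2.5})$ overall. The ``hard part'' you defer --- exactness of the matching formulation and consistency with ancestors --- is exactly what the paper settles with its key structural lemma (\Cref{lem:cycle_traverse}: in any solution at most one cycle enters $G_x$ for \emph{every} node $x$ of the layered tree, so the straddling cycles chosen at a parallel node necessarily form a matching on its branches) together with the Boolean flags $\seribool{j}{x},\parabool{j}{x}$ and the auxiliary graphs $A_x^1,A_x^2$ that certify whether one more cycle entering $G_x$ can be accommodated at an ancestor without sacrificing local optimality; your plan would need these (or equivalent) arguments spelled out to be complete.
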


We first note that
we may assume that a given graph $G$ is biconnected: the problem can be solved independently in each biconnected component.
Moreover, from the definition of series-parallel graphs, every biconnected series-parallel graph can be regarded as a two-terminal series-parallel graph.
We thus consider a polynomial-time algorithm that finds a largest solution of a given two-terminal series-parallel graph.

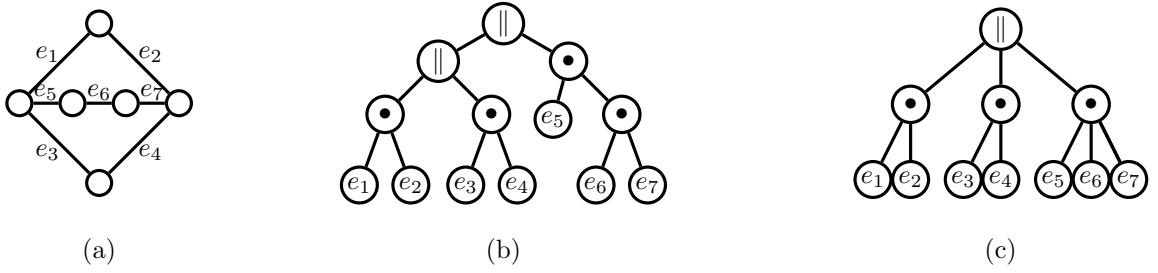
\begin{figure}[t]
    \centering
    \begin{tabular}{ccc}
        
      \begin{minipage}{0.25\textwidth}
          \centering
          \begin{tikzpicture}
                \def\edgew{1.5}
                \node[draw, circle, very thick] at (0,0) (v1){};
                \node[draw, circle, very thick] at ($(v1) + (-135:\edgew)$) (v2){};
                \node[draw, circle, very thick] at ($(v1) + (-45:\edgew)$) (v3){};
                \node[draw, circle, very thick] at ($(v2) + (-45:\edgew)$) (v4){};
                \node[draw, circle, very thick] at ($(v2)!0.333!(v3)$) (v5){};
                \node[draw, circle, very thick] at ($(v3)!0.333!(v2)$) (v6){};
                \node[] at ($(v2) + (60:0.5*\edgew)$) (e1){$e_1$};
                \node[] at ($(v3) + (120:0.5*\edgew)$) (e2){$e_2$};
                \node[] at ($(v2) + (-60:0.5*\edgew)$) (e3){$e_3$};
                \node[] at ($(v3) + (-120:0.5*\edgew)$) (e4){$e_4$};
                \node[] at ($(v2)!0.5!(v5) + (0,0.15)$) (e5){$e_5$};
                \node[] at ($(v5)!0.5!(v6) + (0,0.15)$) (e6){$e_6$};
                \node[] at ($(v6)!0.5!(v3) + (0,0.15)$) (e7){$e_7$};
                
                \draw [very thick] (v2) -- (v1) -- (v3) -- (v4) -- (v2) -- (v5) -- (v6) -- (v3);
            \end{tikzpicture}
      \end{minipage} &
      
			
			

      \begin{minipage}{0.35\textwidth}
          \centering
          \begin{tikzpicture}
            \def\edgew{1}
            \def\degl{-110}
            \def\degr{-70}
            \def\insepe{1pt}
            \def\insepseri{2.5pt}
            \def\inseppara{1.5pt}
            \node[draw, circle, very thick,inner sep=\inseppara] at (0,0) (t1){$\paracom$};
            
            \node[draw, circle, very thick,inner sep=\inseppara] at ($(t1) + (-150:\edgew)$) (t2){$\paracom$};
            \node[draw, circle, very thick,inner sep=\insepseri] at ($(t1) + (-30:\edgew)$) (t3){$\sericom$};
            
            \node[draw, circle, very thick,inner sep=\insepseri] at ($(t2) + (-135:\edgew)$) (t4){$\sericom$};
            \node[draw, circle, very thick,inner sep=\insepseri] at ($(t2) + (-45:\edgew)$) (t5){$\sericom$};
            \node[draw, circle, very thick,inner sep=\insepe] at ($(t3) + (-105:0.8*\edgew)$) (t6){$e_5$};
            \node[draw, circle, very thick,inner sep=\insepseri] at ($(t3) + (-45:\edgew)$) (t7){$\sericom$};
            
            \node[draw, circle, very thick,inner sep=\insepe] at ($(t4) + (\degl:\edgew)$) (t8){$e_1$};
            \node[draw, circle, very thick,inner sep=\insepe] at ($(t4) + (\degr:\edgew)$) (t9){$e_2$};
            \node[draw, circle, very thick,inner sep=\insepe] at ($(t5) + (\degl:\edgew)$) (t10){$e_3$};
            \node[draw, circle, very thick,inner sep=\insepe] at ($(t5) + (\degr:\edgew)$) (t11){$e_4$};
            \node[draw, circle, very thick,inner sep=\insepe] at ($(t7) + (\degl:\edgew)$) (t12){$e_6$};
            \node[draw, circle, very thick,inner sep=\insepe] at ($(t7) + (\degr:\edgew)$) (t13){$e_7$};
    
            \draw [very thick] (t8) -- (t4) -- (t2) -- (t1) -- (t3) -- (t7) -- (t13);
            \draw [very thick] (t4) -- (t9);
            \draw [very thick] (t10) -- (t5) -- (t11);
            \draw [very thick] (t2) -- (t5);
            \draw [very thick] (t3) -- (t6);
            \draw [very thick] (t7) -- (t12);
        \end{tikzpicture} 
      \end{minipage} &

      \begin{minipage}{0.4\textwidth}
          \centering
          \begin{tikzpicture}
            \def\edgew{1}
            \def\degl{-110}
            \def\degr{-70}
            \def\insepe{1pt}
            \def\insepseri{2.5pt}
            \def\inseppara{1.5pt}
            \node[draw, circle, very thick,inner sep=\inseppara] at (0,0) (l1){$\paracom$};
            
            \node[draw, circle, very thick,inner sep=\insepseri] at ($(l1) + (-1.2,-1)$) (l2){$\sericom$};
            \node[draw, circle, very thick,inner sep=\insepseri] at ($(l1) + (0,-1)$) (l3){$\sericom$};
            \node[draw, circle, very thick,inner sep=\insepseri] at ($(l1) + (1.2,-1)$) (l4){$\sericom$};

            \node[draw, circle, very thick,inner sep=\insepe] at ($(l2) + (-0.5,-1)$) (l5){$e_1$};
            \node[draw, circle, very thick,inner sep=\insepe] at ($(l2) + (0,-1)$) (l6){$e_2$};    

            \node[draw, circle, very thick,inner sep=\insepe] at ($(l3) + (-0.5,-1)$) (l7){$e_3$};
            \node[draw, circle, very thick,inner sep=\insepe] at ($(l3) + (0,-1)$) (l8){$e_4$};  
            
            \node[draw, circle, very thick,inner sep=\insepe] at ($(l4) + (-0.5,-1)$) (l9){$e_5$};
            \node[draw, circle, very thick,inner sep=\insepe] at ($(l4) + (0,-1)$) (l10){$e_6$};  
            \node[draw, circle, very thick,inner sep=\insepe] at ($(l4) + (0.5,-1)$) (l11){$e_7$};  
    
            \draw [very thick] (l5) -- (l2) -- (l1) -- (l4) -- (l11);
            \draw [very thick] (l1) -- (l3);
            \draw [very thick] (l2) -- (l6);
            \draw [very thick] (l3) -- (l7);
            \draw [very thick] (l3) -- (l8);
            \draw [very thick] (l4) -- (l9);
            \draw [very thick] (l4) -- (l10);
            \draw [very thick] (l4) -- (l11);
        \end{tikzpicture} 
      \end{minipage} \\
      & & \\
      (a) & (b) & (c)
    \end{tabular}
    \caption{(a) The graph $G$, (b) the decomposition tree of $G$, and (c) the layered decomposition tree of $G$.}
    \label{fig:dec_seri}
  \end{figure}

The recursive definition of a two-terminal series-parallel graph $G$ naturally gives us a rooted full binary tree $T$ representing $G$, called the \emph{decomposition tree} of $G$ (see \Cref{fig:dec_seri}(a) and~(b)).
To avoid confusion, we refer to a vertex and an edge of $T$ as a \emph{node} and a \emph{link}, respectively.
For a node $x$ of $T$, let $T_x$ be a subtree of $T$ rooted at $x$.
Each leaf of $T$ corresponds to an edge of $G$ whose endpoints are labeled with a source $s$ and a sink $t$.
Each internal node $x$ of $T$ is labeled either $\sericom$ or $\paracom$.
Suppose that $x$ has exactly two children $x_1$ and $x_2$. 
The label $\sericom$ indicates a series composition of two-terminal series-parallel graphs defined by $T_{x_1}$ and $T_{x_2}$.
The label $\paracom$ indicates a parallel composition of two-terminal series-parallel graphs defined by $T_{x_1}$ and $T_{x_2}$.
We refer to nodes labeled $\sericom$ as \emph{$\sericom$-nodes} and to nodes labeled $\paracom$ as \emph{$\paracom$-nodes}.
We denote by $G_x$ the graph composed by $T_x$.
Let $r$ be the root of $T$.
Then, we have $G_r = G$.
Note that, since $G_1 \sericom G_2$ and $G_2 \sericom G_1$ produce different two-terminal graphs, we assume that children of a $\sericom$-node are ordered.
In addition, since we have assumed $G$ is $2$-connected, the root $r$ of $T$ is labeled $\paracom$ (assuming $G$ has at least three vertices).


At the beginning of our algorithm, we construct a decomposition tree $T^\prime$ of a given graph $G$ in linear time~\cite{ValdesTL82}, and then transform it into a suitable form for our algorithm as follows (see also \Cref{fig:dec_seri}(c)).
If a $\sericom$-node $x$ of $T^\prime$ has a child $\sericom$-node $x^\prime$, then we contract a link $x x^\prime$ without changing the order of series compositions. 
For example, suppose that $x$ has children $x_1$ and $x^\prime$; $x^\prime$ has children $x_2$ and $x_3$; and $G_x = G_{x_1} \sericom G_{x^\prime}$.
Then, we contract the link $x x^\prime$ so that $G_x = G_1 \sericom G_2 \sericom G_3$.
The contracted tree still tells how to construct $G$.
Similarly, if a $\paracom$-node $x$ of $T^\prime$ has a child $\paracom$-node $x^\prime$, then we contract the link $x x^\prime$.
We iteratively contract such links until each $\sericom$-node has only leaves or $\paracom$-nodes as its children, and $\paracom$-node has only leaves or $\sericom$-nodes as its children.
Note that each $\paracom$-node has at most one leaf of $T^\prime$ as its children because $G$ has no multiple edges.
The tree obtained in this way is called a \emph{layered decomposition tree} of $G$ and is denoted by $T$.

Let $C$ be a cycle of a graph $G = (V,E)$.
For a subgraph $G^\prime = (V^\prime, E^\prime)$ of $G$, we say that $C$ \emph{enters} $G^\prime$ if $C$ has both an edge in $E^\prime$ and an edge in $E \setminus E^\prime$. 
Suppose that there is a $\sericom$-node $x$ of $T$ such that $x$ has $c\ge 4$ children $x_1, x_2, \ldots, x_c$.
Then, no cycle of length at most $4$ enters $G_x$; since $G_x$ is created by series compositions of at least four two-terminal labeled graphs, every cycle entering $G_x$ has length at least $5$.
Thus, the problem can be solved independently in each of $G_x$ and the remaining part.

Assume that each $\sericom$-node $x$ of $T$ has at most three children.
Before explaining dynamic programming over $T$, we give the following key lemma.

\begin{lemma} \label{lem:cycle_traverse}
    Let $x$ be any node of a layered decomposition tree $T$ of a two-terminal series-parallel graph $G$ and $\List$ be a collection of cycles in $G$ of length at most $4$.
    For any solution $\mathcal{S} \subseteq \List$ of $G$, there exists at most one cycle in $\mathcal{S}$ that enters $G_x$.
\end{lemma}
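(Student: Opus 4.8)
The plan is to argue by contradiction: I assume that a solution $\mathcal{S}$ contains two distinct cycles $C_1, C_2$ that both enter $G_x$, and I derive that their total length must exceed $8$, contradicting $|C_1|, |C_2| \le 4$. Throughout, let $s_x, t_x$ denote the source and sink of the two-terminal series-parallel graph $G_x$.

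First I would pin down the structure of a single entering cycle. The key fact is that $\{s_x, t_x\}$ is a separator: every vertex of $G_x$ other than $s_x, t_x$ has all of its incident edges in $E(G_x)$, since $G_x$ is attached to the rest of $G$ only at its two terminals. Traversing an entering cycle $C$ cyclically, its edges fall into maximal runs of \emph{inside} edges (those in $E(G_x)$) and \emph{outside} edges, and any vertex at which an inside edge meets an outside edge must lie in $\{s_x, t_x\}$. As $C$ is simple, each of $s_x, t_x$ occurs once and hence serves as such a transition at most once, so there are exactly two transitions. Therefore $C$ passes through both $s_x$ and $t_x$ and decomposes into one inside arc $A$ (an $s_x$--$t_x$ path in $G_x$) and one outside arc $B$ (an $s_x$--$t_x$ path in $G$ avoiding $E(G_x)$), each of length at least $1$.

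Applying this to both cycles yields edge-disjoint $s_x$--$t_x$ paths $A_1, A_2$ inside $G_x$ and edge-disjoint $s_x$--$t_x$ paths $B_1, B_2$ outside, with $|A_i| + |B_i| \le 4$. Since $G$ is simple, two edge-disjoint $s_x$--$t_x$ paths cannot both be the single edge $s_x t_x$, so whenever each pair exists we have $|A_1| + |A_2| \ge 3$ and $|B_1| + |B_2| \ge 3$. I then split on the type of $x$ in the layered decomposition tree $T$, using the defining property that the children of a $\bullet$-node are $\parallel$-nodes or leaves and vice versa. If $x$ is a leaf, then $G_x$ is a single edge admitting only one $s_x$--$t_x$ path, so $A_1, A_2$ cannot both exist. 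If $x$ is a $\bullet$-node, write $G_x = G_{w_1} \bullet \cdots \bullet G_{w_c}$ with $c \ge 2$; every $s_x$--$t_x$ path traverses all $c$ factors, so inside each factor $A_1$ and $A_2$ induce two edge-disjoint $s_{w_i}$--$t_{w_i}$ paths. If some factor is a leaf this is impossible, and otherwise that factor is a $\parallel$-node whose two edge-disjoint terminal paths have total length at least $3$. Summing over the $c$ factors gives $|A_1| + |A_2| \ge 3c \ge 6$, which together with $|B_1| + |B_2| \ge 3$ forces $|C_1| + |C_2| \ge 3c + 3 \ge 9 > 8$.

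The remaining case, $x$ a $\parallel$-node, is handled dually and is the step I expect to require the most care, since here the series structure producing the length multiplier lives in the parent rather than in $G_x$. Its parent $p$ is a $\bullet$-node (the root is a $\parallel$-node with nothing outside it, so $x$ is not the root), say $G_p = G_{w_1} \bullet \cdots \bullet G_{w_c}$ with $x$ one of the factors and $c \ge 2$. Deleting the edges of $G_x$ disconnects $s_x$ from $t_x$ within $G_p$, so each outside arc $B_i$ must run from $s_x$ back through all sibling factors to $s_p$, cross from $s_p$ to $t_p$ through the part of $G$ above $G_p$, and return through the remaining sibling factors to $t_x$. Exactly as before, each of the $c-1$ sibling factors contributes total length at least $3$ to $|B_1| + |B_2|$ (a leaf sibling being immediately impossible), and the two edge-disjoint $s_p$--$t_p$ paths used above $G_p$ contribute at least $3$ more, giving $|B_1| + |B_2| \ge 3(c-1) + 3 = 3c \ge 6$; with $|A_1| + |A_2| \ge 3$ this again yields $|C_1| + |C_2| \ge 9 > 8$. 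The main obstacle is verifying that the outside arcs are genuinely forced through every sibling factor and through the above-part, which is precisely where I use the maximality of the compositions guaranteed by the layered decomposition tree; this maximality is what rules out naive configurations such as two edge-disjoint $4$-cycles sharing two parallel branches of $K_{2,4}$.
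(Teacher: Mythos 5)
Your proof is correct and follows essentially the same strategy as the paper's: a case analysis on the label of $x$, using that an entering cycle must cross both terminals of $G_x$ and traverse every factor of a series composition, and that simplicity forbids two edge-disjoint single-edge terminal paths. The differences are only organizational --- you package everything into a uniform arc decomposition and a single length count $|C_1|+|C_2|\ge 9>8$ valid for any number of children, whereas the paper counts edges ad hoc for $c\in\{2,3\}$ (ending in a parallel-edge contradiction) and handles the $\parallel$-case by reducing to the parent $\bullet$-node.
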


\begin{proof}
    If $x$ is a leaf or the root of $T$, the lemma trivially holds.
    Suppose otherwise, and assume for a contradiction that there exist two edge-disjoint cycles $C, C^\prime \in \mathcal{S}$ that enter $G_x$.

    Suppose that $x$ is a $\sericom$-node.
    Recall that $x$ has two or three children.
    Suppose that $x$ has exactly two children $y$ and $z$, that is, $G_x = G_{y} \sericom G_{z}$.
    We denote by $s_y$ and $t_y$ the source and the sink of $G_{y}$, respectively, and by $s_z$ and $t_z$ analogously.
    Observe that each of $C$ and $C^\prime$ contains at least one edge from $G_y$ and at least one edge from $G_z$.
    Moreover, they have at least one edge outside of $G_x$, implying that they have at most three edges in $G_x$.
    As $C$ contains at most three edges in $G_x$, without loss of generality, we assume that $C$ contains exactly one edge (i.e., $s_yt_y$) of $G_y$.
    As $G$ has no parallel edges and the cycles are edge-disjoint, $C'$ contains exactly two edges of $G_y$.
    Similarly, $C$ contains exactly two edges of $G_z$ and $C'$ contains exactly one edge of $G_z$.
    Thus, both cycles contain exactly three edges of $G_x$.
    This implies that $C$ and $C'$ contain the edge $s_yt_z$ as they have at most four edges, contradicting the fact that $G$ has no parallel edges.
    We can derive a similar contradiction for the case where $x$ has three children.
    
    Next, suppose that $x$ is a $\paracom$-node.
    Let $x^\prime$ be the parent of $x$.
    From the definition of a layered decomposition tree, $x^\prime$ is a $\sericom$-node.
    As $C$ enters $G_x$, $C$ passes through (1) an edge of $G_y$, (2) one of the source or sink of $G_x$, and (3) an edge of $G_x$, where $y$ is a child node of $x'$ with $y \neq x$.
    This implies that $C$ also enters $G_{x'}$ as otherwise $C$ passes through the vertex of (2) twice, contradicting the fact that $C$ is a cycle.
    Also, $C'$ enters $G_{x'}$, which leads to a contradiction as mentioned above. 
\end{proof}

Let $\mathcal{S}$ be a largest solution of $G$.
Suppose that $x$ is a $\sericom$-node of $T$ with $c$ children $x_1, x_2, \ldots, x_c$.
For an integer $i \in \intsec{c}$, let $s_i$ and $t_i$ denote a source and a sink of $G_{x_i}$, respectively.
We distinguish the following two cases to consider: 
\begin{enumerate}
    \item[$(s_1)$] at least one cycle $C$ in $\mathcal{S}$ enters $G_x$ and $s_it_i \in E(C)$ for every integer $i \in \intsec{c}$;
    \item[$(s_2)$] at least one cycle $C$ in $\mathcal{S}$ enters $G_x$ and $s_it_i \notin E(C)$ for some integer $i \in \intsec{c}$.
\end{enumerate}
Similarly, for a $\paracom$-node $x$ with a source $s$ and a sink $t$, we also distinguish the following two cases to consider: 
\begin{enumerate}
    \item[$(p_1)$] at least one cycle $C$ in $\mathcal{S}$ enters $G_x$ and $st \in E(C)$;
    \item[$(p_2)$] at least one cycle $C$ in $\mathcal{S}$ enters $G_x$ and $st \notin E(C)$;
\end{enumerate}
We note that the above cases are not exhaustive: there may be no cycle in $\mathcal{S}$ entering $G_x$.
It is not necessary to consider such a case in the construction of our algorithm.
We also note that by~\Cref{lem:cycle_traverse}, there are no more than one (edge-disjoint) cycle satisfying these conditions.

Let $\List^x$ be a restriction of $\List$ to $G_x$, that is, $\List^x  = \{ H \in \List : E(H) \subseteq E(G_x) \}$. 
In our algorithm, for each node $x$ of $T$, we compute the largest size of a subcollection $\mathcal{S}_x$ with $ \mathcal{S}_x = \mathcal{S} \cap \List^x$.
Let $\seridp{x}$ be the largest size of $\mathcal{S}_x$ for a $\sericom$-node $x$, and let $\paradp{x}$ be the largest size of $\mathcal{S}_x$ for a $\paracom$-node $x$.
Notice that, originally, leaves of $T$ are labeled neither $\sericom$ nor $\paracom$, and hence $\seridp{x}$ and $\paradp{x}$ cannot be defined for the leaves.
For algorithmic simplicity, we consider a leaf $x$ as a $\sericom$-node if its parent is labeled $\paracom$, and as a  $\paracom$-node if its parent is labeled $\sericom$.
This simplification allows us to define $\seridp{x}$ and $\paradp{x}$ for a leaf $x$ accordingly.

We also define the truth values $\seribool{j}{x}$ and $\parabool{j}{x}$ for each $j \in \{1,2\}$ and each node $x$ of $T$.
We set $\seribool{j}{x} = 1$ (resp.\ $\parabool{j}{x} = 1$) if and only if there exists a mutually edge-disjoint subcollection $\mathcal{S}_x^\prime$ of $\List^x$ that satisfies the following conditions: 
\begin{itemize}
    \item $|\mathcal{S}_x^\prime| = \seridp{x}$ (resp.\ $|\mathcal{S}_x^\prime| = \paradp{x}$);
    \item there exists a cycle $C \in \List \setminus \List^x$ corresponding to the case $(s_j)$ (resp.~$(p_j)$);
    \item all subgraphs in $\mathcal{S}_x^\prime$ and $C$ are edge-disjoint.
\end{itemize}
Intuitively speaking, $\seribool{j}{x}=1$ (and $\parabool{j}{x} = 1$) if and only if we can further add a cycle $C$ entering $G_x$ into a partial solution at an
ancestor of $x$.


We are ready to explain how to compute $\seridp{x}$, $\paradp{x}$, $\seribool{j}{x}$, and $\parabool{j}{x}$ for each node $x$ of $T$ and each $j \in \{1,2\}$.

\paragraph{Leaf node.}
Suppose that $x$ is a leaf of $T$.
Let $s$ and $t$ be the source and the sink of $G_x$, respectively.
One can verify that the following equalities hold: $\seridp{x} = \paradp{x} = 0$; $\seribool{1}{x} = \parabool{1}{x} = 1$ if and only if there exists a cycle $C$ in $\List$ such that $st \in E(C)$; and $\seribool{2}{x} = \parabool{2}{x} = 0$.

\paragraph{Internal $\sericom$-node.}
Suppose that $x$ is a $\sericom$-node with $c$ children $x_1, \ldots, x_c$.
Since $G_x$ consists of series compositions of $G_{x_1},\ldots, G_{x_c}$, every cycle in $G_x$ is contained in $G_{x_i}$ for some $i$.
We thus have 
\begin{align*}
    \seridp{x} = \sum_{i \in \intsec{c}} \paradp{x_i}.
\end{align*}

We next compute $\seribool{j}{x}$ for each $j \in \{ 1,2 \}$.
Recall that $x$ has at most three children.

Suppose that $c=3$.
If there exists a cycle $C \in \List \setminus \List^x$ that enters $G_x$, then it passes through $s_1$ and $t_3$, meaning that it enters $G_{x_i}$ for all $i \in \intsec{3}$. 
Conversely, for every $i \in \intsec{3}$, if there is a cycle $C_i \in \List \setminus \List^{x_i}$ such that $C_i$ enters $G_{x_i}$, then it must have $E(C_i) = \{s_1t_1, s_2t_2, s_3t_3,t_3s_1\}$, that is, the cycle is uniquely determined $C=C_i$ for $i \in \intsec{3}$.
It is easy to observe that $C$ is edge-disjoint from any cycles in $\mathcal{S}_x$ if and only if it is edge-disjoint from any cycles in $\mathcal{S}_{x_i}$ for all $i \in \intsec{3}$.
Hence, we have $\seribool{1}{x} = \parabool{1}{x_1} \land \parabool{1}{x_2} \land \parabool{1}{x_3}$.
This also implies that there is no cycle $C \in \mathcal S$ that enters $G_x$ and $s_it_i \notin E(C)$, which yields that $\seribool{2}{x} = 0$. 

Suppose next that $c=2$.
By the similar argument to the case $c = 3$, we have $\seribool{1}{x} = 1$ if and only if $\parabool{1}{x_1} \land \parabool{1}{x_2} = 1$ and there is a cycle $C \in \List \setminus \List^x$ such that $s_1t_1, s_2t_2 \in E(C)$.
We explain how to decide $\seribool{2}{x}$.
If there is a cycle $C \in \List \setminus \List^x$ with $s_1t_1 \notin E(C)$ that enters $G_x$, then it enters both $G_{x_1}$ and $G_{x_2}$, and it holds that $s_2t_2 \in E(C)$ because the length of $C$ is at most $4$.
Conversely, for a cycle $C_1 \in \List \setminus \List^{x_1}$ such that $C_1$ enters $G_{x_1}$ and $s_1t_1 \notin E(C_1)$, $C_1$ also enters $G_{x_2}$ and $G_x$, and $s_2t_2 \in E(C_1)$ holds.
The same argument is applied to a cycle $C \in \List \setminus \List^x$ with $s_2t_2 \notin E(C)$ that enters $G_x$.
Thus, we have $\seribool{2}{x} = (\parabool{2}{x_1} \land \parabool{1}{x_2}) \lor (\parabool{1}{x_1} \land \parabool{2}{x_2})$.

\paragraph{Internal $\paracom$-node.}
Suppose that $x$ is a $\paracom$-node with $c$ children $x_1, \ldots, x_c$.
Let $s$ and $t$ be the source and the sink of $G_x$, respectively.

To compute $\paradp{x}$, we construct an \emph{auxiliary graph} $A_x$ whose vertex set is $\{a_1, a_2, \ldots, a_c \}$.
We associate each child $x_i$ of $x$ with a vertex $a_i$.
Let $i, j \in \intsec{c}$ be distinct integers.
Suppose that $x_i$ and $x_j$ are internal nodes of $T$.
Then, $A_x$ has an edge $a_ia_j$ if $\seribool{1}{x_i} \land \seribool{1}{x_j} = 1$ and there exists a cycle in $\List^x$ that enters both $G_{x_i}$ and $G_{x_j}$.
Note that such a cycle $C$ satisfies $|E(C) \cap E(G_{x_i})| = |E(C) \cap E(G_{x_j})|=2$, which means that $C$ must satisfy the case~$(s_1)$ for $\sericom$-nodes $x_i$ and $x_j$.
Suppose next that $x_i$ is an internal node and $x_j$ is a leaf of $T$.
In this case, $st \in E(G_{x_j})$.
Then, $A_x$ has an edge $a_ia_j$ if at least one of the following conditions is satisfied:
\begin{enumerate}
    \item $x_i$ has exactly $c$ children with $c \in \{2,3\}$, $\seribool{1}{x_i} = 1$, and there exists a cycle $C$ in $\List^x$ of length $c+1$ that enters both $G_{x_i}$ and $G_{x_j}$; or
    \item $\seribool{2}{x_i} = 1$.
\end{enumerate}
Note that, in the second case $\seribool{2}{x_i} = 1$, there is a cycle $C \in \List \setminus \List^{x_i}$ entering $G_{x_i}$ such that $x_i$ has a child $y$ with $|E(C) \cap E(G_y)| \ge 2$.
This implies that $C$ has exactly three edges in $G_{x_i}$ and hence we have $st \in E(C)$.
Also note that there is no case that both $x_i$ and $x_j$ are leaves because $G$ has no parallel edges.
We complete the construction of $A_x$.

The intuition of the auxiliary graph $A_x$ is as follows.
If there is an edge $a_ia_j \in A_x$, then we can further add a cycle $C$ in $G_{x_i} \paracom G_{x_j}$ that is edge-disjoint from any cycles in $\bigcup_{h \in \intsec{c}}\mathcal S_{x_h}$.
We can simultaneously add such cycles for other edges in $A_x$.
However, by~\Cref{lem:cycle_traverse}, we cannot add more than one cycles entering $G_{x_i}$.
Thus, in order to add as many such cycles as possible, the corresponding edges must form a matching in $A_x$.
In fact, the following equality holds.
\begin{align} \label{eq:matching_dp}
    \paradp{x} = \sum_{i \in \intsec{c}} \seridp{x_i} + |M_x^\ast|,
\end{align}
where $M_x^\ast$ be a maximum matching of $A_x$.
The correctness of \Cref{eq:matching_dp} will be given in \Cref{subsec:correctness_dp}.

To compute $\parabool{j}{x}$ for each $j \in \{ 1,2 \}$, we construct additional auxiliary graphs $A_x^1$ and $A_x^2$ from $A_x$.
Let $A_x^1$ be the graph obtained from $A_x$ as follows.
We first add a vertex $a^{\prime}$.
Then, we add an edge $a^\prime a_i$ if $x$ has a leaf child $x_i$ and there is a cycle $C \in \List \setminus \List^x$ such that $C$ enters $G_{x}$ and $st \in E(C)$.

Similarly, let $A_x^2$ be the graph obtained from $A_x$ as follows.
We first add a vertex~$a^{\prime\prime}$.
Then, for each $i \in \intsec{c}$, we add an edge $a^{\prime\prime}a_i$ if $x_i$ is an internal node of $T$, $\seribool{1}{x_i} = 1$ and there is a cycle $C \in \List \setminus \List^x$ that enters $G_{x_i}$.

Let $M_x^1$ and $M_x^2$ be maximum matchings of $A_x^1$ and $A_x^2$, respectively.
We let $\parabool{1}{x} = 1$ if and only if $|M_x^1| > |M_x^\ast|$; and $\parabool{2}{x} = 1$ if and only if $|M_x^2| > |M_x^\ast|$.
The correctness of the computation of these truth values will be given in \Cref{subsec:correctness_dp}.

Finally, we conclude that $\paradp{r}$ is the size of a largest solution of $G$.

\subsubsection{Correctness} \label{subsec:correctness_dp}

\begin{lemma} \label{lem:matching_dp}
    \Cref{eq:matching_dp} is correct.
\end{lemma}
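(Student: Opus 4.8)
The plan is to prove \Cref{eq:matching_dp} by establishing the two inequalities $\paradp{x} \ge \sum_{i} \seridp{x_i} + |M_x^\ast|$ and $\paradp{x} \le \sum_{i} \seridp{x_i} + |M_x^\ast|$ separately, after first pinning down which cycles can appear in $\List^x$. Since $G_x$ is a parallel composition of $G_{x_1}, \ldots, G_{x_c}$ that pairwise share only the source $s$ and the sink $t$, every cycle $C \in \List^x$ is either \emph{internal}, i.e.\ contained in a single $G_{x_i}$, or \emph{crossing}: a simple cycle meeting both $s$ and $t$ splits into two $s$--$t$ arcs, and since each child joins $s$ to $t$ only, a crossing cycle uses exactly two children $G_{x_i}$ and $G_{x_j}$, one arc in each. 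The first step I would carry out is a short structural observation that the restriction of a crossing cycle to one of its children is forced and coincides with a configuration counted by $\seribool{1}{\cdot}$ or $\seribool{2}{\cdot}$: for an internal child an $s$--$t$ subpath of length $2$ is the unique pair of direct edges (case $(s_1)$), whereas a subpath of length $3$ uses two edges inside a single grandchild (case $(s_2)$). This is where I reconcile each edge type in the definition of $A_x$—internal--internal, and the two internal--leaf conditions—with a concrete crossing cycle together with a maximum internal packing that is edge-disjoint from it.

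For the inequality $\ge$, I would start from a maximum matching $M_x^\ast$ of $A_x$ and assemble a feasible packing of the claimed size. For every matched edge $a_i a_j$, the defining condition of $A_x$ supplies a crossing cycle together with the witnesses $\seribool{1}{x_i} = 1$ (or $\seribool{2}{x_i} = 1$, and symmetrically for $x_j$) certifying a \emph{maximum} internal packing of $G_{x_i}$ of size $\seridp{x_i}$ that is edge-disjoint from the crossing cycle's restriction to $G_{x_i}$; I use these witness packings for matched children and an arbitrary maximum internal packing for the unmatched ones. Edge-disjointness of the assembled collection is then immediate: two crossing cycles from a matching use disjoint pairs of children, an internal packing of $G_{x_h}$ shares no edge with a crossing cycle avoiding $x_h$, and within a matched child the witness guarantees compatibility. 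This yields $\sum_i \seridp{x_i}$ internal cycles plus $|M_x^\ast|$ crossing cycles, all edge-disjoint, so $\paradp{x} \ge \sum_i \seridp{x_i} + |M_x^\ast|$.

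For the inequality $\le$, I would take an optimal packing $\mathcal{S}_x$ and split it into internal cycles (say $n_i$ inside $G_{x_i}$) and $k$ crossing cycles, so $\paradp{x} = k + \sum_i n_i$. By \Cref{lem:cycle_traverse}, at most one cycle of $\mathcal{S}_x$ enters each $G_{x_i}$, so distinct crossing cycles use disjoint children and therefore induce a matching on $\{a_1, \ldots, a_c\}$. Writing $d_i = \seridp{x_i} - n_i \ge 0$, so that $\paradp{x} = \sum_i \seridp{x_i} + (k - \sum_i d_i)$, it suffices to prove $k - \sum_i d_i \le |M_x^\ast|$. I would call a crossing cycle \emph{good} if the corresponding pair $a_i a_j$ lies in $E(A_x)$ and \emph{bad} otherwise; the good crossing cycles form a matching in $A_x$, so their number $k_g$ satisfies $k_g \le |M_x^\ast|$. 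The crux is then to charge each bad crossing cycle to a deficit: if $a_i a_j \notin E(A_x)$, I would show that the relevant $\seribool{j}{\cdot}$ at one endpoint, say $x_i$, equals $0$, while the restriction of the bad cycle to $G_{x_i}$ realizes exactly the case $(s_j)$ configuration; the $n_i$ internal cycles of $\mathcal{S}_x$ in $G_{x_i}$ are edge-disjoint from this restriction, so compatibility with a case-$(s_j)$ cycle together with $\seribool{j}{x_i}=0$ forces $n_i < \seridp{x_i}$, i.e.\ $d_i \ge 1$. As crossing cycles occupy disjoint children, these charges land on distinct children, giving $k_b \le \sum_i d_i$ and hence $k - \sum_i d_i = k_g + (k_b - \sum_i d_i) \le k_g \le |M_x^\ast|$.

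The hard part will be the deficit-charging step in the $\le$ direction, because the definition of $A_x$ treats internal--internal pairs, internal--leaf pairs meeting the length-$(c+1)$ condition, and internal--leaf pairs with $\seribool{2}{x_i} = 1$ differently, and I must verify case by case that failure of the membership condition is exactly captured by some $\seribool{j}{x_i} = 0$ incompatible with the bad cycle's restriction. A secondary technical point, used in both directions, is the structural observation that a crossing cycle's restriction to an internal child is uniquely determined by its length (length $2$ gives the direct path realizing case $(s_1)$, length $3$ a case-$(s_2)$ path), which is what lets the existential witnesses behind $\seribool{1}{\cdot}$ and $\seribool{2}{\cdot}$ apply to the specific crossing cycle at hand; I would isolate this observation first so that both inequalities can invoke it cleanly.
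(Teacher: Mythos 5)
Your proposal is correct and follows essentially the same route as the paper: the lower bound assembles internal optima plus one crossing cycle per matched pair of $A_x$, and the upper bound uses \Cref{lem:cycle_traverse} to turn the crossing cycles of an optimal packing into a matching, with the failure of an $A_x$-edge traced back to $\seribool{j}{x_i}=0$. The only cosmetic difference is that the paper handles non-edges of $A_x$ by an exchange argument (replacing the bad crossing cycle and its child's packing by a strictly larger internal packing), whereas you phrase the same compensation as a deficit-charging count $k_b \le \sum_i d_i$; the structural observations you isolate (a crossing cycle splits into two $s$--$t$ arcs in distinct children, and its restriction to a child realizing case $(s_1)$ is forced) are exactly the facts the paper relies on.
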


\begin{proof}
    We first show that $\paradp{x} \ge \sum_{i \in \intsec{c}} \seridp{x_i} + |M_x^\ast|$.
    For $i \in \intsec{c}$, let $\mathcal{S}_{x_i}$ be a mutually edge-disjoint subcollection of $\List^{x_i}$ such that $|\mathcal{S}_{x_i}| = \seridp{x_i}$.
    Recall that each edge $a_i a_j$ in $M_x^\ast$ corresponds to a cycle in $\List^x$ that enters both $G_{x_i}$ and $G_{x_j}$ and edge-disjoint from any cycle in $\mathcal{S}_{x_i} \cup \mathcal{S}_{x_j}$.
    Let $\mathcal{M}_x$ be a subcollection of cycles in $\List^x$ corresponding to $M_x^\ast$.
    Observe that cycles in $\mathcal{M}_x$ are mutually edge-disjoint because $M_x^\ast$ is a matching of $A_x$.
    Therefore, $(\bigcup_{i \in \intsec{c}} \mathcal{S}_{x_i}) \cup \mathcal{M}_x$ is mutually edge-disjoint, and hence we have $\paradp{x} \ge \sum_{i \in \intsec{c}} \seridp{x_i} + |M_x^\ast|$.

    We next show that $\paradp{x} \le \sum_{i \in \intsec{c}} \seridp{x_i} + |M_x^\ast|$.
    Suppose that $\mathcal{S}_{x}$ is a mutually edge-disjoint subcollection of $\List^x$ such that $|\mathcal{S}_{x}| = \paradp{x}$.
    Consider a cycle $C$ in $\mathcal{S}_{x}$ entering both $G_{x_i}$ and $G_{x_j}$ such that $x_i$ and $x_j$ are internal nodes of $T$.
    If $a_i a_j \notin E(A_x)$, then it holds that $\seribool{1}{x_i} \land \seribool{1}{x_j} = 0$.
    Without loss of generality, assume that $\seribool{1}{x_i} = 0$.
    Let $\mathcal{S}_{x_i} = \mathcal{S}_{x} \cap \List^{x_i}$.
    Observe that $|\mathcal{S}_{x_i}| -1 \le \seridp{x_i}$ as otherwise, the fact that the cycle $C$ is edge-disjoint from any cycles in $\mathcal{S}_{x_i}$ implies that $\seribool{1}{x_i} = 1$.
    Then, we can replace $\mathcal{S}_{x_i} \cup \{C\}$ with an edge-disjoint subcollection $\mathcal{S}_{x_i}^\prime$ with $|\mathcal{S}_{x_i}^\prime| = \seridp{x_i}$ in $\mathcal{S}_{x}$ (i.e., $\mathcal{S}_{x} \coloneqq (\mathcal{S}_{x} \setminus (\mathcal{S}_{x_i} \cup \{C\}) \cup \mathcal{S}_{x_i}^\prime$).
    The same argument is applicable to the case where one of $x_i$ and $x_j$ is a leaf of $T$.
    Applying the above replacement exhaustively, we eventually obtain a mutually edge-disjoint subcollection $\mathcal{S}_{x}^\ast \subseteq \List^x$ with $|\mathcal{S}_{x}^\ast| = \paradp{x}$ that satisfied the following condition: For each cycle $C$ in $\mathcal{S}_{x}^\ast$ that enters both $G_{x_i}$ and $G_{x_j}$ for  distinct $i, j \in \intsec{c}$, $A_x$ has an edge $a_i a_j$.
    We denote by $\mathcal{C} \subseteq \mathcal{S}_{x}^\ast$ the set of cycles, each of which enters $G_{x_i}$ and $G_{x_j}$ for distinct $i, j \in \intsec{c}$.
    Let $M$ be the subset of $E(A_x)$ corresponding to $\mathcal{C}$.
    Then, $M$ forms a matching of $A_x$; otherwise, there are two edge-disjoint cycles that enter $G_{x_i}$ for some $i \in \intsec{c}$, which contradicts Lemma~\ref{lem:cycle_traverse}.
    Denote $\mathcal{S}_{x_i}^\ast = \mathcal{S}_{x}^\ast \cap \List^{x_i}$ for each $i \in \intsec{c}$.
    Obviously, it holds that $|\mathcal{S}_{x_i}^\ast| \le \seridp{x_i}$.
    Since $\mathcal{S}_{x}^\ast = (\bigcup_{i \in \intsec{c}} \mathcal{S}_{x_i}^\ast) \cup \mathcal{C}$, we thus have $\paradp{x} \le \sum_{i \in \intsec{c}} \seridp{x_i} + |M| \le \sum_{i \in \intsec{c}} \seridp{x_i} + |M_x^\ast|$ as $M_x^\ast$ is a maximum matching of $A_x$.
    This completes the proof of Lemma~\ref{lem:matching_dp}. 
\end{proof}

\begin{lemma} \label{lem:boolean_dp}
    Let $M_x^1$ and $M_x^2$ denote maximum matchings of $A_x^1$ and $A_x^2$, respectively.
    Then, $\parabool{1}{x} = 1$ if and only if $|M_x^1| > |M_x^\ast|$; and $\parabool{2}{x} = 1$ if and only if $|M_x^2| > |M_x^\ast|$.
\end{lemma}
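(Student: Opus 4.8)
The plan is to reduce each of the two equivalences to a statement about how attaching one extra vertex to $A_x$ changes its matching number. Write $\nu(\cdot)$ for the size of a maximum matching, so $\nu(A_x)=|M_x^\ast|$. I will use the elementary fact that, for a graph obtained from $A_x$ by adding a new vertex $v$ with neighborhood $N$, one has $\nu(A_x+v)=\nu(A_x)+1$ if some $a_i\in N$ is missed by a maximum matching of $A_x$ (equivalently $\nu(A_x-a_i)=\nu(A_x)$), and $\nu(A_x+v)=\nu(A_x)$ otherwise: any matching saturating $v$ uses one edge $va_i$ plus a matching of $A_x-a_i$. Applied to $A_x^1$ (whose new vertex $a'$ has at most the single neighbor $a_i$ coming from the leaf child of $x$) and to $A_x^2$ (whose new vertex $a''$ may have several neighbors), this gives: $|M_x^1|>|M_x^\ast|$ iff the edge $a'a_i$ is present and some maximum matching of $A_x$ misses $a_i$; and $|M_x^2|>|M_x^\ast|$ iff some edge $a''a_i$ is present with some maximum matching of $A_x$ missing $a_i$. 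It then remains to show these matching conditions coincide with $\parabool{1}{x}=1$ and $\parabool{2}{x}=1$.

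For case $(p_1)$, the key structural observation is that a cycle $C\in\List\setminus\List^x$ with $st\in E(C)$ entering $G_x$ has $st$ as its only edge inside $G_x$: since $s,t$ are the only vertices of $G_x$ meeting edges outside $G_x$ and $C$ already joins them by $st$, the rest of $C$ lies outside. As $st$ is the edge of the (unique) leaf child $x_i$, accommodating such a $C$ is exactly keeping $st$ free, and by \Cref{lem:cycle_traverse} the only solution cycles using $G_{x_i}$ are cross-child cycles entering $x_i$, i.e.\ matching edges of $A_x$ incident to $a_i$. For the forward direction I would take a witnessing packing of size $\paradp{x}$ together with its compatible $C$, canonicalise it by the replacement argument of \Cref{lem:matching_dp}; since no cross-child cycle enters $x_i$, the canonicalisation never touches $x_i$ and returns a maximum matching of $A_x$ missing $a_i$, while the presence of $C$ supplies the edge $a'a_i$. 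Conversely, from a maximum matching missing $a_i$ I build the packing as in the forward direction of \Cref{lem:matching_dp}; then $st$ stays free, so the witness $C$ is edge-disjoint from it.

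For case $(p_2)$, a cycle $C\in\List\setminus\List^x$ with $st\notin E(C)$ entering $G_x$ must traverse $G_x$ along an $s$--$t$ path contained in a single $\sericom$-child $G_{x_i}$, closing up outside $G_x$. Since there is no $s$--$t$ edge outside $G_x$ (the only such edge, if any, is the leaf child of $x$), the outside part has length at least $2$, so the inside path has length exactly $2$; hence $x_i$ has exactly two children and $C$ uses the direct $s$--$t$ path of $x_i$, i.e.\ $C$ is a case-$(s_1)$ cycle of $x_i$. Thus compatibility of $C$ with a size-$\seridp{x_i}$ packing of $G_{x_i}$ is precisely $\seribool{1}{x_i}=1$, and (again by \Cref{lem:cycle_traverse}) no cross-child cycle may also enter $x_i$, i.e.\ the matching of $A_x$ must miss $a_i$; these are exactly the conditions defining the edge $a''a_i$ together with $\nu(A_x-a_i)=|M_x^\ast|$. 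The two implications are then handled as in case $(p_1)$: in the forward direction the canonicalisation leaves the reserved child $x_i$ untouched, so $\mathcal{S}_x\cap\List^{x_i}$ already attains $\seridp{x_i}$ and witnesses $\seribool{1}{x_i}=1$, and the surviving cross-child cycles form a maximum matching of $A_x$ missing $a_i$; in the converse direction the witness of $\seribool{1}{x_i}=1$, optimal packings of the other children, and the cross-child cycles of a maximum matching missing $a_i$ are assembled into a size-$\paradp{x}$ packing compatible with $C$.

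The main obstacle is precisely the structural case analysis behind both observations: one must verify, using the layered decomposition and the bound $|E(C)|\le 4$, that an entering cycle of each type meets the solution only through one child $x_i$ (the leaf edge $st$ in case $(p_1)$; the direct $s$--$t$ path of a two-child $\sericom$-node in case $(p_2)$), and, in the latter case, that this forces $C$ to be a case-$(s_1)$ cycle of $x_i$. The second delicate point is checking that the replacement argument of \Cref{lem:matching_dp} can be executed without ever modifying the reserved child $x_i$; this is what guarantees that maximality of the packing and the existence of the external cycle are simultaneously achievable exactly when a maximum matching of $A_x$ misses $a_i$, closing the equivalence.
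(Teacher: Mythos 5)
Your proof is correct and follows essentially the same route as the paper's: in the forward direction you extract from the witnessing packing (after the Lemma~\ref{lem:matching_dp} canonicalisation) a maximum matching of $A_x$ that misses $a_i$ and augment it by $a'a_i$ (resp.\ $a''a_i$), and in the converse direction you rebuild a size-$\paradp{x}$ packing from $M_x^j\setminus\{e\}$ compatible with the witnessing external cycle. Your extra structural observations (that a $(p_1)$-cycle meets $G_x$ only in $st$, and that a $(p_2)$-cycle meets $G_x$ in exactly the two terminal edges of a two-child $\sericom$-child, forcing case $(s_1)$) and the one-vertex matching-augmentation fact just make explicit steps the paper leaves implicit.
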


\begin{proof}
    We prove the former claim of the lemma.
    Suppose that $\parabool{1}{x} = 1$.
    From the definition of $\parabool{1}{x}$, there exists a mutually edge-disjoint subcollection $\mathcal{S}_x^\prime$ of $\List^x$ with $|\mathcal{S}_x^\prime| = \paradp{x}$ and a cycle $C \in \List \setminus \List^x$ with $st\in E(C)$ that is edge-disjoint from any cycle in $\mathcal{S}_x^\prime$.
    Then, we can construct a matching $M$ of $A_x$ corresponding to a maximum subcollection $\mathcal{C} \subseteq \mathcal{S}_x^\prime$ such that each cycle in $\mathcal{C}$ enters $G_{x_i}$ and $G_{x_j}$ for some $i, j \in \intsec{c}$.
    By~\Cref{lem:cycle_traverse}, such a subcollection $\mathcal C$ is uniquely determined.
    As $M_x^\ast$ is a maximum matching of $A_x$, we have $|M| \le |M_x^\ast|$.
    We now claim that $|M| = |M_x^\ast|$.
    Suppose to the contrary that $|M| < |M_x^\ast|$.
    Denote $\mathcal{S}_{x_i}^\prime = \mathcal{S}_{x}^\prime \cap \List^{x_i}$ for each $i \in \intsec{c}$.
    Since $\mathcal{S}_x^\prime = (\bigcup_{i \in \intsec{c}} \mathcal{S}_{x_i}^\prime) \cup \mathcal{C}$, we have $|\mathcal{S}_x^\prime| = (\sum_{i \in \intsec{c}} |\mathcal{S}_{x_i}^\prime|) + |M|$.
    Combined with the facts $\paradp{x} = \sum_{i \in \intsec{c}} \seridp{x_i} + |M_x^\ast|$, $|M| < |M_x^\ast|$, and $|\mathcal{S}_x^\prime| = \paradp{x}$, we can observe that there exists $p \in \intsec{c}$ such that $|\mathcal{S}_{x_p}^\prime| > \seridp{x_p}$, which contradicts the definition of $\seridp{x_p}$.
    Thus, we have $|M| = |M_x^\ast|$.
    
    Let $x_i$ be the child of $x$ corresponding to the edge $st$.
    Since $st\in E(C)$ and $C$ is edge-disjoint from any cycles in $\mathcal{S}_x^\prime$, every edge in $M$ is not incident to $a_i$.
    Thus, $M \cup \{a^\prime a_i\}$ forms a matching of $A_x^1$.
    As $|M| = |M_x^\ast|$, we have $|M_x^1| \ge |M \cup \{a^\prime a_p\}| > |M| = |M_x^\ast|$.
    
    Conversely, suppose that $|M_x^1| > |M_x^\ast|$.
    This implies that there is an edge $e \in M_x^1$ incident to $a^\prime$.
    Let $C$ be the cycle in $\List \setminus \List^x$ corresponding to $e$.
    As in the proof of Lemma~\ref{lem:matching_dp}, we can construct a mutually edge-disjoint subcollection $\mathcal{S}^\prime_x$ of $\List^x$ such that $|\mathcal{S}^\prime_x| = \sum_{i \in \intsec{c}} \seridp{x_i} + |M_x^1 \setminus \{e\}|$. Since $|M_x^1 \setminus \{e\}| = |M_x^1|-1 \ge |M_x^\ast|$, we have $|\mathcal{S}^\prime_x| \ge \sum_{i \in \intsec{c}} \seridp{x_i} + |M_x^\ast| = \paradp{x}$.
    It clearly holds that $|\mathcal{S}^\prime_x| \le \paradp{x}$, and hence we have $|\mathcal{S}^\prime_x| = \paradp{x}$.
    Moreover, every subgraph in $\mathcal{S}^\prime_x$ and $C$ are edge-disjoint.
    Therefore, we have $\parabool{1}{x} = 1$.

    The latter claim can be proved in the similar way, which completes the proof of Lemma~\ref{lem:boolean_dp}.
\end{proof}

\subsubsection{Running time}
Let $n$ and $m$ be the number of vertices and edges of a given series-parallel graph $G$, respectively, and let $\List$ be a collection of cycles in $G$ of length at most $4$.
As noted at the beginning of Section~\ref{subsec:sp_4cycle}, our algorithm is applied to each of $\bicon^\ast$ biconnected components $G_1,G_2,\ldots, G_{\bicon^\ast}$ of $G$.
We can enumerate biconnected components of $G$ in $O(n+m)$ time~\cite{HopcroftT73} and partition $\List$ into $\List^1, \List^2, \ldots, \List^{\bicon^\ast}$ so that $\List^\bicon$ is the subcollection of $\List$ consisting of all cycles in $G_\bicon$ for each $\bicon \in \intsec{\bicon^\ast}$ in $O(|\List|)$ time.

For $\bicon \in \intsec{\bicon^\ast}$, we denote by $n_\bicon$ and $m_\bicon$ the number of vertices and edges of $G_\bicon$, respectively.
We may assume that edges of $G_\bicon$ are labeled distinct integers $1,2,\ldots,m_\bicon$.
Each cycle $C$ of $G_\bicon$ with $h$ edges produces a word $x_1x_2\dots x_h$ such that $x_i \in \intsec{m_\bicon}$ for every $i \in \intsec{h}$ and $x_1 < x_2 < \cdots < x_h$.
As a preprocessing, we convert every cycle in $\List^\bicon$ to a corresponding word, and then lexicographically sort $\List^\bicon$ with radix sort.
Since every cycle in $\List^\bicon$ is of length at most $4$, this can be done in $O(|\List^\bicon| + m_\bicon )$ time.
Using this data structure, we can check whether given a cycle $C$ is in $\List$ in time $O(\log |\List^\bicon|)$ with binary search.

After the preprocessing, we construct an (original) decomposition tree of $G_\bicon$ in $O(n_\bicon)$ time~\cite{ValdesTL82}.
It is not hard to see that the decomposition tree can be modified into a layered decomposition tree $T_\bicon$ in $O(n_\bicon)$ time with depth-first search.
Moreover, we record the source and the sink of $G_x$ in each node $x$ of $T_\bicon$.

We bound the running time of our dynamic programming.
Obviously, for each leaf $x$ of $T_\bicon$, we can compute $\seridp{x}$, $\paradp{x}$, $\seribool{j}{x}$, and $\parabool{j}{x}$ for each $j \in \{1,2\}$ in $O(1)$ time.
Moreover, $\seridp{x}$, $\seribool{1}{x}$, and $\seribool{2}{x}$ are computed in $O(1)$ time for each $\sericom$-node $x$ of $T_\bicon$.

Consider the case where $x$ is an internal $\paracom$-node of $T_\bicon$.
Let $c_x$ denote the number of children of $x$.
Our algorithm first constructs the auxiliary graph $A_x$ with the vertex set $\{a_1, \ldots, a_{c_x} \}$.
Let $i, j \in \intsec{c_x}$ be distinct integers.
Suppose that $x_i$ and $x_j$ are internal nodes of $T_\bicon$.
We check whether $\seribool{1}{x_i} \land \seribool{1}{x_j} = 1$ in $O(1)$ time.
Moreover, we check whether there exists a cycle $C$ that enters both $G_{x_i}$ and $G_{x_j}$.
In fact, the cycle $C$ is uniquely determined if it exists: each of $G_{x_i}$ and $G_{x_j}$ is obtained by a series composition of two graphs, meaning that $|E(C) \cap E(G_{x_i})| = |E(C) \cap E(G_{x_j})|=2$.
We can compute such a cycle $C$ in $O(1)$ time by recording the source and the sink of a graph corresponding to each node, while we can decide whether $C \in \List^\bicon$ in $O(\log |\List^\bicon|)$ time using the above data structure.
As $|\List^\bicon|$ is bounded by $n_\bicon^4$ above, this can be done in $O(\log n_\bicon)$ time, which also decides whether $a_i a_j \in E(A_x)$ or not.
Similarly, for the case where $x_i$ is an internal node and $x_j$ is a leaf of $T_\bicon$, we can determine in $O(\log n_\bicon)$ time whether $a_i a_j \in E(A_x)$ or not.
Therefore, the construction of $A_x$ takes $O(c_x^2 \log n_\bicon)$ time.

We then construct graphs $A_x^1$ and $A_x^2$ from $A_x$ to compute $\parabool{1}{x}$ and $\parabool{2}{x}$.
For the graph $A_x^1$, we need to decide whether $a^\prime a_i \in E(A_x^1)$, where $x_i$ for $i \in \intsec{c_x}$ is the unique leaf child of $x$ (if it exists).
To this end, we construct the collection $\mathcal{C}_x^1$ of all cycles that enter $G_x$ and contain the edge of $G_{x_i}$.
If $x$ is the root of $T$, then clearly $\mathcal{C}_x^1 = \emptyset$.
Otherwise $x$ has the parent $y$ labeled $\sericom$ and $y$ has the parent $z$ labeled $\paracom$ as the root of $T_\bicon$ has label $\paracom$.
If $y$ has three children, $\mathcal{C}_x^1$ can be obtained in $O(1)$ time because a possible cycle contained in $\mathcal{C}_x^1$ is uniquely determined.
Suppose that $y$ has exactly two children and let $x^\prime$ be a child of $y$ with $x^\prime \neq x$.
For any cycle $C \in \mathcal{C}_x^1$, the following two cases are considered: (i) $C$ shares exactly one edge with $G_{x^\prime}$; and (ii) $C$ shares exactly two edges with $G_{x^\prime}$.
In the case (i), for each child $c_z$ of $z$, at most two edges in $G_{c_z}$ that can be shared with $C$ are uniquely determined because $c_z$ is a leaf of $T_\bicon$ or labeled~$\sericom$. 
In the case (ii), for each child $c_y$ of $y$, exactly two edges in $G_{c_y}$ that can be shared with $C$ are uniquely determined and $C$ contains an edge between the source and the sink of $G_z$.
In both cases, $\mathcal{C}_x^1$ of size $O(n_\bicon)$ can be constructed in $O(n_\bicon)$ time.
We thus decide in $O(n_\bicon \log n_\bicon)$ time whether there is a cycle $C \in \mathcal{C}_x^1 \cap \List^\bicon$, that is, $a^\prime a_i \in E(A_x^1)$.
For the graph $A_x^2$, we decide whether $a^{\prime\prime} a_i \in E(A_x^1)$ for each $i \in \intsec{c_x}$ such that $x_i$ is an internal node of $T_\bicon$.
We check whether $\seribool{1}{x_i} = 1$, and if so, there exists a cycle $C \in \List \setminus \List^x$ that enters $G_{x_i}$.
Since $G_{x_i}$ shares exactly two edges with $C$, the cycle is uniquely determined for each $i \in \intsec{c_x}$ as in the case (ii) above.
Thus, $A_x^2$ is constructed in $O(c_x \log n_\bicon)$ time.
After the construction of $A_x$, $A_x^1$, and $A_x^2$, we obtain maximum matchings $M_x^\ast$, $M_x^1$, and $M_x^2$ in $O(c_x^{2.5})$ time, respectively~\cite{MicaliV80}.
Therefore, $\paradp{x}$ is computed in $O(c_x^2 \log n_\bicon + n_\bicon \log n_\bicon + c_x^{2.5})$ time for each $\paracom$-node $x$ of $T$. 

In summary, $\paradp{r}$ is obtained in time 
\[
O(m_\bicon + |\List^\bicon| + \sum_{x \in V(T)} (c_x^2 \log n_\bicon + n_\bicon \log n_\bicon + c_x^{2.5})).
\]
Recall that $\sum_{x \in V(T)} c_x = O(n_\bicon)$ and $m_\bicon = O(n_\bicon)$ hold. 
Therefore, our algorithm for $G_\bicon$ runs in $O(|\List^\bicon| + n_\bicon^{2.5})$ time. 
Since $\sum_{\bicon \in \intsec{\bicon^\ast}} n_\bicon = n + b^\ast \le 2n$, we conclude that the total running time for a given series-parallel graph $G$ is bounded by $O(|\List| + n^{2.5} )$.
This completes the proof of Theorem~\ref{thm:sp_4cycle}.

\section*{Acknowledgements}
We thank the referees for their valuable comments and suggestions which greatly helped to improve the presentation of this paper.

%
%
%
\begin{sloppypar}
\printbibliography

@article{CapraraR02:IPL:Packing,
  author    = {Alberto Caprara and
               Romeo Rizzi},
  title     = {Packing triangles in bounded degree graphs},
  journal   = {Inf. Process. Lett.},
  volume    = {84},
  number    = {4},
  pages     = {175--180},
  year      = {2002},
  url       = {https://doi.org/10.1016/S0020-0190(02)00274-0},
  doi       = {10.1016/S0020-0190(02)00274-0},
  timestamp = {Tue, 06 Jun 2017 22:27:07 +0200},
  biburl    = {https://dblp.org/rec/journals/ipl/CapraraR02.bib},
  bibsource = {dblp computer science bibliography, https://dblp.org}
}

@article{Minty80:JCTB:maximal,
  author    = {George J. Minty},
  title     = {On maximal independent sets of vertices in claw-free graphs},
  journal   = {J. Comb. Theory, Ser. {B}},
  volume    = {28},
  number    = {3},
  pages     = {284--304},
  year      = {1980},
  url       = {https://doi.org/10.1016/0095-8956(80)90074-X},
  doi       = {10.1016/0095-8956(80)90074-X},
  timestamp = {Tue, 16 Feb 2021 14:04:55 +0100},
  biburl    = {https://dblp.org/rec/journals/jct/Minty80.bib},
  bibsource = {dblp computer science bibliography, https://dblp.org}
}

@article{Sbihi80:DM:Algorithme,
  author    = {Najiba Sbihi},
  title     = {Algorithme de recherche d'un stable de cardinalite maximum dans un
               graphe sans etoile},
  journal   = {Discret. Math.},
  volume    = {29},
  number    = {1},
  pages     = {53--76},
  year      = {1980},
  url       = {https://doi.org/10.1016/0012-365X(90)90287-R},
  doi       = {10.1016/0012-365X(90)90287-R},
  timestamp = {Fri, 12 Feb 2021 13:45:40 +0100},
  biburl    = {https://dblp.org/rec/journals/dm/Sbihi80.bib},
  bibsource = {dblp computer science bibliography, https://dblp.org}
}

@article{CorneilMH94:DAM:Edge-disjoint,
  author       = {Derek G. Corneil and
                  Shigeru Masuyama and
                  S. Louis Hakimi},
  title        = {Edge-disjoint packings of graphs},
  journal      = {Discret. Appl. Math.},
  volume       = {50},
  number       = {2},
  pages        = {135--148},
  year         = {1994},
  url          = {https://doi.org/10.1016/0166-218X(92)00153-D},
  doi          = {10.1016/0166-218X(92)00153-D},
  timestamp    = {Thu, 11 Feb 2021 23:23:29 +0100},
  biburl       = {https://dblp.org/rec/journals/dam/CorneilMH94.bib},
  bibsource    = {dblp computer science bibliography, https://dblp.org}
}

@article{KirkpatrickH83:SICOMP:complexity,
  author       = {David G. Kirkpatrick and
                  Pavol Hell},
  title        = {On the Complexity of General Graph Factor Problems},
  journal      = {{SIAM} J. Comput.},
  volume       = {12},
  number       = {3},
  pages        = {601--609},
  year         = {1983},
  url          = {https://doi.org/10.1137/0212040},
  doi          = {10.1137/0212040},
  timestamp    = {Sat, 27 May 2017 14:22:58 +0200},
  biburl       = {https://dblp.org/rec/journals/siamcomp/KirkpatrickH83.bib},
  bibsource    = {dblp computer science bibliography, https://dblp.org}
}

@article{Murphy92:DAM:Computing,
  author       = {Owen J. Murphy},
  title        = {Computing independent sets in graphs with large girth},
  journal      = {Discret. Appl. Math.},
  volume       = {35},
  number       = {2},
  pages        = {167--170},
  year         = {1992},
  url          = {https://doi.org/10.1016/0166-218X(92)90041-8},
  doi          = {10.1016/0166-218X(92)90041-8},
  timestamp    = {Thu, 11 Feb 2021 23:21:39 +0100},
  biburl       = {https://dblp.org/rec/journals/dam/Murphy92.bib},
  bibsource    = {dblp computer science bibliography, https://dblp.org}
}

@article{Bodlaender98,
title = {A partial $k$-arboretum of graphs with bounded treewidth},
journal = {Theoretical Computer Science},
volume = {209},
number = {1},
pages = {1--45},
year = {1998},
issn = {0304-3975},
doi = {10.1016/S0304-3975(97)00228-4},
author = {Hans L. Bodlaender},
}

@InProceedings{XuZ18,
author="Xu, Chenyang and Zhang, Guochuan",
editor="Wang, Lusheng and Zhu, Daming",
title="The Path Set Packing Problem",
booktitle="the 24th International Conference {(COCOON 2018)}",
year="2018",
publisher="Springer International Publishing",
address="Cham",
doi          = {10.1007/978-3-319-94776-1_26},
pages="305--315",
}

@InProceedings{AravindS23,
author="Aravind, N. R. and Saxena, Roopam",
editor="Lin, Chun-Cheng and Lin, Bertrand M. T. and Liotta, Giuseppe",
title="Parameterized Complexity of Path Set Packing",
booktitle="the 17th International Conference and Workshops {(WALCOM 2023)}",
year="2023",
publisher="Springer Nature Switzerland",
address="Cham",
pages="291--302",
doi          = {10.1007/978-3-031-27051-2_25},
}

@InProceedings{MicaliV80,
  author={Micali, Silvio and Vazirani, Vijay V.},
  booktitle={21st Annual Symposium on Foundations of Computer Science (sfcs 1980)}, 
  title={An $O(\sqrt{|V|} |E|)$ algorithm for finding maximum matching in general graphs}, 
  year={1980},
  volume={},
  number={},
  pages={17--27},
  doi={10.1109/SFCS.1980.12}
}

@article{HopcroftT73,
author = {Hopcroft, John and Tarjan, Robert},
title = {Algorithm 447: Efficient Algorithms for Graph Manipulation},
year = {1973},
publisher = {Association for Computing Machinery},
address = {New York, NY, USA},
volume = {16},
number = {6},
issn = {0001-0782},
url = {https://doi.org/10.1145/362248.362272},
doi = {10.1145/362248.362272},
journal = {Communications of the ACM},
pages = {372-–378},
}

@article{ChungG81,
  title={Recent results in graph decompositions},
  author={Chung, F.R.K and Graham, R.L.},
  journal={London Mathematical Society, Lecture Note Series},
  volume={52},
  pages={103--123},
  year={1981}
}

@article{DyerF85,
author = {M.E. Dyer and A.M. Frieze},
title = {On the complexity of partitioning graphs into connected subgraphs},
journal = {Discrete Applied Mathematics},
volume = {10},
number = {2},
pages = {139--153},
year = {1985},
issn = {0166-218X},
doi = {10.1016/0166-218X(85)90008-3},
url = {https://www.sciencedirect.com/science/article/pii/0166218X85900083},
}

@article{BermanJLSS90,
author = {Fran Berman and David Johnson and Tom Leighton and Peter W. Shor and Larry Snyder},
title = {Generalized planar matching},
journal = {Journal of Algorithms},
volume = {11},
number = {2},
pages = {153--184},
year = {1990},
issn = {0196-6774},
doi = {10.1016/0196-6774(90)90001-U},
url = {https://www.sciencedirect.com/science/article/pii/019667749090001U},
}

@inproceedings{KosowskiMZ05,
  author       = {Adrian Kosowski and
                  Micha{\l} Ma{\l}afiejski and
                  Pawe{\l} {\.{Z}}yli{\'{n}}ski},
  editor       = {Roman Wyrzykowski and
                  Jack J. Dongarra and
                  Norbert Meyer and
                  Jerzy Wasniewski},
  title        = {Parallel Processing Subsystems with Redundancy in a Distributed Environment},
  booktitle    = {Parallel Processing and Applied Mathematics, 6th International Conference,
                  {PPAM} 2005},
  series       = {Lecture Notes in Computer Science},
  volume       = {3911},
  pages        = {1002--1009},
  publisher    = {Springer},
  year         = {2005},
  url          = {https://doi.org/10.1007/11752578_121},
  doi          = {10.1007/11752578_121},
}

@book{CyganFKLMPPS15,
  author       = {Marek Cygan and
                  Fedor V. Fomin and
                  {\L}ukasz Kowalik and
                  Daniel Lokshtanov and
                  D{\'{a}}niel Marx and
                  Marcin Pilipczuk and
                  Micha{\l} Pilipczuk and
                  Saket Saurabh},
  title        = {Parameterized Algorithms},
  publisher    = {Springer},
  year         = {2015},
  url          = {https://doi.org/10.1007/978-3-319-21275-3},
  doi          = {10.1007/978-3-319-21275-3},
  isbn         = {978-3-319-21274-6},
}

@article{FellowsHRV09,
  author       = {Michael R. Fellows and
                  Danny Hermelin and
                  Frances A. Rosamond and
                  St{\'{e}}phane Vialette},
  title        = {On the parameterized complexity of multiple-interval graph problems},
  journal      = {Theor. Comput. Sci.},
  volume       = {410},
  number       = {1},
  pages        = {53--61},
  year         = {2009},
  url          = {https://doi.org/10.1016/j.tcs.2008.09.065},
  doi          = {10.1016/j.tcs.2008.09.065},
  timestamp    = {Wed, 17 Feb 2021 22:01:10 +0100},
  biburl       = {https://dblp.org/rec/journals/tcs/FellowsHRV09.bib},
  bibsource    = {dblp computer science bibliography, https://dblp.org}
}

@article{ValdesTL82,
author = {Valdes, Jacobo and Tarjan, Robert E. and Lawler, Eugene L.},
title = {The Recognition of Series Parallel Digraphs},
journal = {SIAM Journal on Computing},
volume = {11},
number = {2},
pages = {298--313},
year = {1982},
doi = {10.1137/0211023},
}

@Inproceedings{BafnaP93,
  author={Bafna, V. and Pevzner, P.A.},
  booktitle={Proceedings of 1993 IEEE 34th Annual Foundations of Computer Science}, 
  title={Genome rearrangements and sorting by reversals}, 
  year={1993},
  volume={},
  number={},
  pages={148--157},
  doi={10.1109/SFCS.1993.366872}
}

@article{BontridderHHHLRS03,
  author       = {Koen M. J. De Bontridder and
                  Bjarni V. Halld{\'{o}}rsson and
                  Magn{\'{u}}s M. Halld{\'{o}}rsson and
                  Cor A. J. Hurkens and
                  Jan Karel Lenstra and
                  R. Ravi and
                  Leen Stougie},
  title        = {Approximation algorithms for the test cover problem},
  journal      = {Math. Program.},
  volume       = {98},
  number       = {1-3},
  pages        = {477--491},
  year         = {2003},
  url          = {https://doi.org/10.1007/s10107-003-0414-6},
  doi          = {10.1007/s10107-003-0414-6},
}

@InProceedings{MalafiejskiZ05,
author="Ma{\l}afiejski, Micha{\l}
and {\.{Z}}yli{\'{n}}ski, Pawe{\l}",
editor="Gervasi, Osvaldo
and Gavrilova, Marina L.
and Kumar, Vipin
and Lagan{\`a}, Antonio
and Lee, Heow Pueh
and Mun, Youngsong
and Taniar, David
and Tan, Chih Jeng Kenneth",
title="Weakly Cooperative Guards in Grids",
booktitle="Computational Science and Its Applications -- ICCSA 2005",
year="2005",
publisher="Springer Berlin Heidelberg",
address="Berlin, Heidelberg",
pages="647--656",
isbn="978-3-540-32043-2",
doi={10.1007/11424758_68},
}

@article{ErdosP65,
author={Erd\H{o}s, P. and P\'{o}sa, L.},
title={On Independent Circuits Contained in a Graph},
volume={17},
DOI={10.4153/CJM-1965-035-8},
journal={Canadian Journal of Mathematics},
publisher={Cambridge University Press},
year={1965},
pages={347–-352}
}

@article{LokshtanovMSZ19,
  author       = {Daniel Lokshtanov and
                  Amer E. Mouawad and
                  Saket Saurabh and
                  Meirav Zehavi},
  title        = {Packing Cycles Faster Than Erdos-Posa},
  journal      = {{SIAM} J. Discret. Math.},
  volume       = {33},
  number       = {3},
  pages        = {1194--1215},
  year         = {2019},
  url          = {https://doi.org/10.1137/17M1150037},
  doi          = {10.1137/17M1150037},
}

@article{Bodlaender94,
  author       = {Hans L. Bodlaender},
  title        = {On Disjoint Cycles},
  journal      = {Int. J. Found. Comput. Sci.},
  volume       = {5},
  number       = {1},
  pages        = {59--68},
  year         = {1994},
  url          = {https://doi.org/10.1142/S0129054194000049},
  doi          = {10.1142/S0129054194000049},
}

@article{JiangXZ16,
author = {Minghui Jiang and Ge Xia and Yong Zhang},
title = {Edge-disjoint packing of stars and cycles},
journal = {Theoretical Computer Science},
volume = {640},
pages = {61--69},
year = {2016},
issn = {0304-3975},
doi = {https://doi.org/10.1016/j.tcs.2016.06.001},
url = {https://www.sciencedirect.com/science/article/pii/S0304397516302377},
}

@article{HeathV98,
  author       = {Lenwood S. Heath and
                  John Paul C. Vergara},
  title        = {Edge-Packing in Planar Graphs},
  journal      = {Theory Comput. Syst.},
  volume       = {31},
  number       = {6},
  pages        = {629--662},
  year         = {1998},
  url          = {https://doi.org/10.1007/s002240000107},
  doi          = {10.1007/s002240000107},
}

@inproceedings{GuruswamiRCCW98,
  author       = {Venkatesan Guruswami and
						C. Pandu Rangan and
						Maw-Shang Chang and
						Gerard J. Chang and
						C. K. Wong},
	editor      = {Juraj Hromkovic and
	Ondrej S{\'{y}}kora},
	title        = {The Vertex-Disjoint Triangles Problem},
	booktitle    = {Graph-Theoretic Concepts in Computer Science, 24th International Workshop, {WG} '98},
	series       = {Lecture Notes in Computer Science},
	volume       = {1517},
	pages        = {26--37},
	publisher    = {Springer},
	year         = {1998},
	doi        = {10.1007/10692760\_3},
}

@article{AkiyamaC90,
  author       = {Jin Akiyama and
                  Vasek Chv{\'{a}}tal},
  title        = {Packing paths perfectly},
  journal      = {Discret. Math.},
  volume       = {85},
  number       = {3},
  pages        = {247--255},
  year         = {1990},
  url          = {https://doi.org/10.1016/0012-365X(90)90382-R},
  doi          = {10.1016/0012-365X(90)90382-R},
}

@article{MasuyamaI91,
  author       = {Shigeru Masuyama and
                  Toshihide Ibaraki},
  title        = {Chain Packing in Graphs},
  journal      = {Algorithmica},
  volume       = {6},
  number       = {6},
  pages        = {826--839},
  year         = {1991},
  url          = {https://doi.org/10.1007/BF01759074},
  doi          = {10.1007/BF01759074},
}

@article{FernauR09,
  author       = {Henning Fernau and
                  Daniel Raible},
  title        = {A parameterized perspective on packing paths of length two},
  journal      = {J. Comb. Optim.},
  volume       = {18},
  number       = {4},
  pages        = {319--341},
  year         = {2009},
  url          = {https://doi.org/10.1007/s10878-009-9230-0},
  doi          = {10.1007/s10878-009-9230-0},
}

@inproceedings{WangNFC08,
  author       = {Jianxin Wang and
                  Dan Ning and
                  Qilong Feng and
                  Jianer Chen},
  editor       = {Manindra Agrawal and
                  Ding-Zhu Du and
                  Zhenhua Duan and
                  Angsheng Li},
  title        = {An Improved Parameterized Algorithm for a Generalized Matching Problem},
  booktitle    = {Theory and Applications of Models of Computation, 5th International Conference},
  series       = {Lecture Notes in Computer Science},
  volume       = {4978},
  pages        = {212--222},
  publisher    = {Springer},
  year         = {2008},
  doi          = {10.1007/978-3-540-79228-4\_19},
}

@article{PrietoS06,
  author       = {Elena Prieto-Rodriguez and
                  Christian Sloper},
  title        = {Looking at the stars},
  journal      = {Theor. Comput. Sci.},
  volume       = {351},
  number       = {3},
  pages        = {437--445},
  year         = {2006},
  url          = {https://doi.org/10.1016/j.tcs.2005.10.009},
  doi          = {10.1016/j.tcs.2005.10.009},
}

@article{KanekoKN01,
author = {Kaneko, Atsushi and Kelmans, Alexander and Nishimura, Tsuyoshi},
title = {On packing 3-vertex paths in a graph},
journal = {Journal of Graph Theory},
volume = {36},
number = {4},
pages = {175--197},
doi = {https://doi.org/10.1002/1097-0118(200104)36:4<175::AID-JGT1005>3.0.CO;2-T},
year = {2001}
}

@inproceedings{Karp72,
  author       = {Richard M. Karp},
  editor       = {Raymond E. Miller and
                  James W. Thatcher},
  title        = {Reducibility Among Combinatorial Problems},
  booktitle    = {Proceedings of a symposium on the Complexity of Computer Computations,
                  held March 20-22, 1972, at the {IBM} Thomas J. Watson Research Center,
                  Yorktown Heights, New York, {USA}},
  series       = {The {IBM} Research Symposia Series},
  pages        = {85--103},
  publisher    = {Plenum Press, New York},
  year         = {1972},
  url          = {https://doi.org/10.1007/978-1-4684-2001-2_9},
  doi          = {10.1007/978-1-4684-2001-2_9},
  timestamp    = {Wed, 16 Mar 2022 23:55:58 +0100},
  biburl       = {https://dblp.org/rec/conf/coco/Karp72.bib},
  bibsource    = {dblp computer science bibliography, https://dblp.org}
}

@article{ArnborgCP87,
author = {Arnborg, Stefan and Corneil, Derek G. and Proskurowski, Andrzej},
title = {Complexity of Finding Embeddings in a k-Tree},
journal = {SIAM Journal on Algebraic Discrete Methods},
volume = {8},
number = {2},
pages = {277-284},
year = {1987},
doi = {10.1137/0608024},
}

@article{Poljak74,
author = {Poljak, Svatopluk},
journal = {Commentationes Mathematicae Universitatis Carolinae},
language = {eng},
number = {2},
pages = {307-309},
publisher = {Charles University in Prague, Faculty of Mathematics and Physics},
title = {A note on stable sets and colorings of graphs},
url = {http://eudml.org/doc/16622},
volume = {015},
year = {1974},
}

@techreport{Scheffler94,
    author = {Petra Scheffler},
    title = {Practical linear time algorithm for disjoint paths in graphs with bounded tree-width},
    institution = {FU Berlin, Fachbereich 3 Mathematik},
    year = {1994}
}

@article{GareyJS76,
author = {M.R. Garey and D.S. Johnson and L. Stockmeyer},
title = {Some simplified NP-complete graph problems},
journal = {Theoretical Computer Science},
volume = {1},
number = {3},
pages = {237--267},
year = {1976},
issn = {0304-3975},
doi = {10.1016/0304-3975(76)90059-1},
}
\end{sloppypar}

\end{document}